\definecolor{Gray}{gray}{0.85}
\newcommand{\setvars}[1]{\ensuremath{\bar{#1}}}
\newcommand{\savespace}{\vspace{-2mm}}
\newcommand{\saveone}{\vspace{-1mm}}
\mathchardef\mhyphen="2D
\newcommand{\sepnodeF}[3]{\ensuremath{{#1}{\pto}#2(#3)}}
\newcommand{\sepnode}[3]{\ensuremath{#1{\mapsto}#2(#3)}}
\newcommand{\seppredF}[2]{\ensuremath{#1(#2)}}
\newcommand{\seppred}[2]{\ensuremath{#1(#2)}}
\newcommand{\base}{\ensuremath{{\D^b}}}
\newcommand{\basecomp}[1]{\ensuremath{{\overline{#1}}}}
\newcommand{\report}[1]{ }
\newcommand{\acm}[1]{ }
\newcommand{\hide}[1]{}
\newcommand{\hideie}[1]{}
\newcommand{\nil}{{\code{null}}}
\newcommand{\emp}{\btt{emp}}
\newcommand{\node}{\btt{v}}
\newcommand{\allEnt}{{\ensuremath{\omega}{-}\tt ENT}}
\newcommand{\enode}{\btt{e}}
\newcommand{\evalbase}{\btt{evaluate}}
\newcommand{\pure}{\ensuremath{\pi}}
\newcommand{\subterm}{\ensuremath{\prec}}
\newcommand{\orderp}{\ensuremath{\prec_{\PName}}}
\newcommand{\orderstarp}{\ensuremath{\prec^*_{\PName}}}
\newcommand{\heap}{\ensuremath{\kappa}}
\newcommand{\constr}{\ensuremath{\Phi}}
\newcommand{\entailCyc}[3]{\ensuremath{#1~{\vdash}_{#3}~#2}}
\newcommand{\entailNCyc}[3]{\ensuremath{#1~{\vdash}~#2}}
\newcommand{\myit}[1]{\textit{#1}}
\def\sep{\code{*}}
\def\FV{\myit{FV}}
\def\unfold{\btt{unfold}}
\def\D{\Delta}
\def\int{\code{int}}
\def\true{\code{true}\,}
\def\false{\code{false}\,}
\def\a{a}
\newcommand{\rulename}[1]{\code{\scriptsize #1}}
\newcommand{\code}[1]{{\small {\ensuremath{\tt #1}}}}
\newcommand{\btt}[1]{{\ensuremath{\tt #1}}}
\newcommand{\locnay}[1]{}
\newcommand{\wnnay}[1]{}
\newcommand{\longnay}[1]{}
\newcommand{\nodo}[1]{}
\newtheorem{defn}{Definition}%[section]
\def\sep{\ensuremath{*}}
\newcommand{\pto}{{\scriptsize\ensuremath{\mapsto}}}
\def\Flds{\myit{Fields}}
\def\Store{\myit{Heaps}}
\def\Stack{\myit{Stacks}}
\def\Locations{\myit{Loc}}
\def\Val{\myit{Val}}
\def\Var{\myit{Var}}
\def\iffs{\small \btt{ iff~}}
\def\dom{\myit{dom}}
\newcommand{\atom}{\alpha}
\newcommand{\anon}{\ensuremath{\_\,}}
\newcommand{\ent}{\ensuremath{{\vdash}}}
\newcommand{\imply}{\ensuremath{\Rightarrow}}
\def\Dns{\myit{Node}}
\newcommand{\defsym}{\ensuremath{\overset{\text{\scriptsize{def}}}{=}}}
\newcommand{\sheaps}{\ensuremath{h}}
\newcommand{\sstack}{\ensuremath{s}}
\newcommand{\force}{\ensuremath{\models}}
\newcommand{\form}[1]{\ensuremath{#1}}
\newcommand{\PName}{\mathcal{P}}
\newcommand{\toolname}{{\sc{\code{S2S_{Lin}}}}}
\newcommand{\songbird}{{\sc{\code{Songbird}}}}
\newcommand{\harr}{{{{Harrsh}}}}
\newcommand{\spen}{{\sc{\code{Spen}}}}
 \newcommand{\valid}{{\sc{\code{valid}}}}
\newcommand{\invalid}{{\sc{\code{invalid}}}}
\newcommand{\sem}[1]{\ensuremath{\llbracket#1\rrbracket}}
\newcommand{\isclose}[1]{{\sc{\code{is\_closed(#1)}}}}
\newcommand{\unknown}{{\sc{\code{unknown}}}}
\newcommand{\cslpluse}{\ensuremath{\code{SHLIDe}}}
\newcommand{\shid}{{\sc{\code{SHID}}}}
\newcommand{\stool}{{\sc{S2ent}}}
\def\qed{\hfill\ensuremath{\square}}
\newcommand{\subst}[2]{\ensuremath{[#1 {/} #2]}}
\newcommand{\utree}[1]{\ensuremath{{\mathcal T}_{#1}}}
\newcommand{\lbent}{\code{link\_back_{e}}}
\newcommand{\lbshid}{\code{link\_back_{\shid}}}
\newcommand{\cyclic}{\code{Cyclist_{SL}}}
\newcommand{\lemstore}{\ensuremath{{\sc L}}}
\newcommand{\backfun}{\ensuremath{{\mathcal C}}}
\newcommand{\forest}{\ensuremath{{\mathcal F}}}
\newcommand{\et}{\enode}
\newcommand{\ctree}[3]{\ensuremath{{\mathcal C}(#1{\rightarrow}#2, #3)}}
\newcommand{\sub}{\ensuremath{\sigma}}
\newcommand{\entProb}{{\tt QF{\_}ENT{-}SL_{LIN}}}
\newcommand{\conferencepaper}{0} % change to 0 to obtain techreport
\newcommand{\rep}[1]{\ifthenelse{\conferencepaper = 0}{#1}{}}
\newcommand{\conf}[1]{\ifthenelse{\conferencepaper = 0}{}{#1}}
\newcommand{\repconf}[2]{\ifthenelse{\conferencepaper = 0}{#1}{#2}}
\begin{document}
 \title{An Efficient Cyclic Entailment Procedure in a Fragment of Separation Logic}

%\author{} \institute{}
\author{Quang Loc Le$^1$ \and Xuan-Bach D. Le$^2$}
\institute{Department of Computer Science, University College London, United Kingdom \and School of Computing and Information Systems, the University of Melbourne, Australia}
\maketitle

\begin{abstract}
%\bachle{We mentioned effective + efficient, but in the abstract, we later only talk about efficiency. The title also only has ``efficient'' in it.}
An efficient entailment proof system is essential to
 compositional verification
using separation logic. Unfortunately, existing
decision procedures are either inexpressive or inefficient.
For example, Smallfoot is an efficient  procedure
but
only works with hardwired lists and trees.
Other procedures that can support
general inductive predicates
 run exponentially in time as their
proof search requires back-tracking to deal with
disjunction in the consequent.

In this paper,
we present a decision procedure
that can derive cyclic entailment proofs for general inductive predicates in polynomial time. Our
 procedure is efficient and does not require back-tracking;
 it uses normalisation rules that help avoid the introduction of
 disjunction in the consequent.
% \bachle{Looks like we should start the abstract talking about back-tracking being the bottleneck of current state of the art decision proc for SL. We then motivate our approach that eliminates the need of back-tracking which thus makes our search linear.}
%We show that the procedure runs in polynomial time
% when
%the maximum number of fields of data structures is
% bounded by a constant.
Moreover, our decidable fragment is sufficiently expressive: It is based on
 compositional
  predicates and can capture a wide range of data structures, including
sorted and nested list segments,
 skip lists with fast forward pointers, and binary search trees.
We have implemented the proposal in a prototype tool
and evaluated it
over challenging problems taken from a recent separation logic competition.
The experimental results confirm the efficiency
of the proposed system.

\end{abstract}

%-------------------------------------------------------------------------
\keywords{Cyclic Proofs, Entailment Procedure, Separation Logic.}
%-------------------------------------------------------------------------

\section{Introduction}
\label{sec:intro}
%
%Since the last two decades,
Separation logic \cite{Ishtiaq:POPL01,Reynolds:LICS02}
 has been very successful in automatically reasoning
 about programs that manipulate
pointer structures.
%The primary advantage of separation logic is the
%support of compositional reasoning which leads to resuability and
 %scalability \cite{Cristiano:NFM:15,Calcagno:POPL:2009}.\bachle{Sentence connection here needs improvement. The ``In particular'' tends to talk about the things that are mentioned at the end of the previous sentence. So, consider changing the previous sentence to: Separation logic empowers reusability and scalability \cite{Cristiano:NFM:15,Calcagno:POPL:2009} via compositional reasoning.}
 Separation logic empowers reusability and scalability through compositional reasoning \cite{Cristiano:NFM:15,Calcagno:POPL:2009}.
Moreover,
a compositional verification system relies on a bi-abduction
procedure which is essentially
based on the entailment proof system.
Entailment is defined as:
Given an
antecedent \form{A} and a consequent \form{C} where
\form{A} and \form{C} are
 formulas in separation logic,
  entailment problem is the act of checking
 whether
 \form{A ~{\models}~ C} is valid.
Thus, an efficient
decision procedure for entailments
is the vital ingredient of an automatic verification system
in separation logic.

To enhance the expressiveness of the assertion language, for example, to specify unbounded heaps and
interesting pure properties (e.g., sortedness, parent pointers), separation logic
 is typically combined with user-defined inductive
predicates
 \cite{Chin:CAV:2011,Loc:CAV:2017,Piskac:CAV:2013}.
In this setting,
one key challenge of
an entailment procedure is the ability to support
induction reasoning over the combination of
 heaps and data content. % \bachle{I suppose data contents are related to pure properties? We should briefly make a connection for these here}.
The problem of induction is very difficult, especially
for an automated inductive theorem prover,
where the induction rules are not explicitly stated.
In fact, 
this problem is
undecidable \cite{Timos:FOSSACS:2014}.

Developing a sound and complete
 entailment procedure that could be used for compositional reasoning
is not trivial. While it is unknown how model-based systems e.g., \cite{echenim2021unifying,Enea:FMSD:2017,Gu:IJCAR:2016,Iosif:CADE:13,Katelaan:IJCAI:2018,Jens:TACAS:2019},
 could support compositional reasoning,
there was evidence that proof-based
decision procedures, e.g., Smallfoot
\cite{Berdine:APLAS05} and its variant \cite{8882771}, and Cycomp \cite{Makoto:APLAS:2019},
can be extended to solve the bi-abduction problem, which enables compostional reasoning and scalability \cite{Calcagno:POPL:2009,Loc:CAV:2014}. %\bachle{We haven't mentioned biabduction elsewhere before this. This could make unfamiliar reader confused as to what biabduction is and how is it related to the problem at hand?}
In fact, Smallfoot was the center of the biabductive procedure deployed in Infer \cite{Calcagno:POPL:2009},
which achieved great impact in both academia and industry \cite{Distefano:CACM:2019}.
Furthermore,
Smallfoot is very efficient due to its use of ``exclude-the-middle" rule
in which it can avoid the proof search over the disjunction in the consequent.
However, Smallfoot works for
hardwired lists and binary trees only. In contrast,
Cycomp, a recent complete entailment procedure, is a cyclic proof system without
``exclude-the-middle``,
can support general inductive predicates, but has double exponential
time complexity due to the proof search (and back-tracking)
 in the consequent.
%Therefore,
%  \bachle{The ``therefore'' argument we make here is not cohesive. We talked about Smallfoot being excluded middle style and it has limited expressiveness. For Cycomp, we do not mention whether it is excluded middle style or not.}
% \bachle{I would say here something along this line: In this paper, we follow an excluded middle style approach to improve the efficiency of proof search while enhancing the expressiveness of the assertion language to capture more types of data structures.}

%\paragraph{Cyclic Entailment Proofs}
In this paper, we introduce  a cyclic proof system with an ``exclude-the-middle"-styled decision procedure for
decidable
 yet expressive inductive predicates. % is worth exploring. we propose such %\bachle{it's not clear what ``such'' means here in this context. I can guess it means excluded middle style, but still it's ambiguous to unfamiliar readers}
%\bachle{Do we propose a decidable fragment?. If so, why don't we talk about this in the abstract?}.
Especially, we show that
our procedure runs
  in  polynomial time when
the maximum number of fields of data structures is
 bounded by a constant. % \bachle{How impressive is our proposed solution compared to baseline approaches? We can be up front and emphasize it here.}.
The decidable fragment, called {\cslpluse},
contains inductive definitions of
 compositional predicates and pure properties.  These predicates
can capture nested list segments,
 skip lists and trees.
The pure properties of small models
can model
a wide range of common data structures e.g.,
a list with fast forward pointers, a nested list being sorted,
 a tree being a binary search tree \cite{Katelaan:IJCAI:2018,McPeak:CAV:2005}.
% universally quantify the data content of heap regions that satisfy the predicate.
 This fragment is much more expressive than Smallfoot's fragment
 and is non-overlapping %\bachle{why incomparable? Incomparable means something negative.}
 with Cycomp's one \cite{Makoto:APLAS:2019}: there exist some entailments which our system can handle but Cyccomp could not,
 and vice versa.
 %\bachle{How expressive is our proposed extension as compared to baseline approaches? We talked about Makoto earlier, but we need a bit of a connection here to have a more visible side-by-side comparison with baselines}.

%, is incomparable to the one in \cite{Makoto:APLAS:2019}.
%It is an extension of separation logic with  Especially,
%  the validity of entailment with these properties can be verified locally through minimal models of finite heap domains.

Our procedure is a variant of the cyclic proof system,
which was first introduced by Brotherston \cite{Brotherston:05,Brotherston:CADE:11} and has become
 one of the main solutions to induction reasoning. % in separation logic.
Intuitively, a cyclic proof is naturally
represented as a tree of statements (entailments in this paper): the leaves are either axioms or nodes which are
linked back to inner nodes, the root of the tree is the theorem to be proven, and
nodes are connected to one or more children by locally sound proof rules. Alternatively,
 a cyclic proof can be viewed as a tree possibly containing
some backlinks (a.k.a. cycles, e.g., ``C, if B, if C'')
such that the proof satisfies some global soundness condition.
This condition ensures that the proof can be viewed as a proof of {\em infinite descent}.
%of inductive definitions.
Particularly, for a cyclic entailment proof with inductive definitions,
if every cycle
 contains an unfolding of some inductive predicate, then that
 predicate is infinitely often reduced into a strictly ``smaller" predicate;
 this is impossible as the semantics of inductive definitions
 only allows
 %requires %%\bachle{to my understanding, it's not ``'requires' but instead ``only allows''. Since the heap is considered a well-founded set, we can only unfold some finite steps until we reach the so-called minimal element.}
 finite steps of unfolding. Hence, that
 proof path with the cycle can be disregarded.

The proposed system advances Brotherston's system in three ways.
First,  the proposed proof search algorithm is specialized
to {\cslpluse} in which it includes ``exclude-the-middle``
rules and excludes
any back-tracking.
The existing proof procedures typically search
for a proof (and back-track)
 over disjunctive cases generated from
 unfolding inductive predicates in the RHS of an entailment.
To avoid such costly searches, we propose a ``exclude-the-middle``-styled normalized rules
in which unfolding of inductive predicates in the RHS always
produces one disjunct. Therefore,
our system is much more efficient than existing systems. %\bachle{I feel that this should be the first difference to be mentioned among the three. From the beginning, we emphasized ``efficiency''. Thus it would flow better this way (more consistent story): we improved efficiency by using the norm rules to avoid costly search. Then, oh look, we improved the efficiency but we do not ever give up soundness and completeness. We explain our magic in detail in the next sections!}
Second, while a standard Brotherston system is incomplete, our proof search is 
complete in {\cslpluse}: If it is stuck (i.e., it can not apply any inference rules) %\bachle{we need formal definition of what ``stuck'' means},
then the root entailment is invalid. %\bachle{this sentence is missing a phrase talking about Brotherston's sytem: is Brotherston's system incomplete?}.

Lastly, while the global soundness in
 \cite{Brotherston:CADE:11}
must be checked globally and explicitly, %we show that 
every backlink generated in {\cslpluse}
 is sound by design. %\bachle{unfamiliar readers may scream here: we haven't mentioned back-link elsewhere before and thus the relationship between back-link being sound and global soundness isn't clear. Readers will ask: why do they (the back-link and global soundness) stand together in one sentence here?}. %% by design of the proof system. 
%\bachle{Eliminating the need of back-tracking looks like a strong contribution of our work. Consider moving this up earlier in the intro.}
%
%\bachle{Yes, I do see the story being fragmented a bit from the above paragraph to here. In our current presentation, we talk: completeness, soundness, and efficiency. And in this paragraph, we come back to soundness and completeness. Would be better as I felt above to go with this flow: Efficiency, Soundness, Completeness. And then continue further on the soundness and completeness story in this paragraph}
We note that Cycomp, introduced in \cite{Makoto:APLAS:2019},
was the first work to show completeness of a cyclic proof system. However, in contrast to
ours, it did not discuss the global soundness condition, which is the key idea
attributing to the
soundness of cyclic proofs.

%\bachle{Oh, and I just remember: why do we not mention the enhanced expressiveness of the proposed system via a decidable fragment? Is this good contribution too since we talked about a comparison with Smallfoot's expressiveness earlier?}

\paragraph{Contributions} Our primary contributions are summarized as follows.
\begin{itemize}%[noitemsep,topsep=1pt]
\item
We present a novel decision procedure, called {\toolname}, %\bachle{name of decision proc: S2SLin}
 for the entailment problem
 in separation logic with
 inductive definitions of compositional predicates. % and pure properties. %\bachle{S2SLin is ... more efficient and effective compared to state-of-the-art baselines, while maintaining soundness and completeness.}
\item
 We provide a complexity analysis of the procedure.
\item
We have implemented the proposal in a prototype tool, called {\toolname},
 and tested it with the SL-COMP 2022 benchmarks \cite{slcomp:sl:22,Sighireanu:TACAS:19}.
The experimental results
  show that {\toolname} is effective and efficient
when compared with
the state-of-the-art solvers.
\end{itemize}

\paragraph{Organization} The remainder of the paper is organised as follows.
Sect. \ref{sec:spec} describes syntax of
%separation logic
formulas in
fragment {\cslpluse}.
Sect. \ref{sec:overview} presents the basics of an ``exclude-the-middle" proof system and cyclic proofs.
 Sect. \ref{sec.entail} elaborates the result,
the novel cyclic proof system including
an illustrative example.
Sect. \ref{sec.deci} discusses
 the soundness and completeness.
Sect. \ref{sec.impl} presents the implementation
and evaluation.
Sect. \ref{sec.related} discusses related work.
 Finally, Sect. \ref{sec.conc}
concludes the work.
All proofs are available in Appendix.

\section{Decidable Fragment {\cslpluse}}\label{sec:spec}
Subsection \ref{sec.spec} presents syntax of separation logic formulae
and
recursive definitions of linear predicates
and local properties.
Subsection \ref{sec.spec.sem} shows semantics.

 \subsection{Separation Logic Formulas}\label{sec.spec}

Concrete heap models assume a fixed finite collection {\Dns},
a fixed finite collection {\Flds},
a set {\Locations} of locations (heap addresses),
a set of non-addressable values {\Val}, with the requirement that \form{{\Val} {\cap} {\Locations}  {=} \emptyset} (i.e., no pointer arithmetic). \form{\nil} is a special element of {\Val}.
%  \form{\nil {\in} {\Val}}.
\form{\mathbb{Z}} denotes the set of integers
(\form{\mathbb{Z} {\subseteq} \Val})
and \form{k} denotes integer numbers.
 \form{\Var} an infinite set of variables,
\form{\bar{v}} a sequence of variables.

\paragraph{Syntax}
  Disjunctive formula \form{\constr}, symbolic heaps \form{\D},
spatial formula {\heap}, pure formula \form{\pure}, pointer (dis)equality $\phi$, and (in)equality formula
$\alpha$ are as follows.
 \[
\begin{array}{ll}
\begin{array}{l}
\constr ::= \D ~|~ \constr ~{\vee}~ \constr
\qquad  ~ \D ::= \heap{\wedge}\pure \mid \exists {v}{.}~\heap{\wedge}\pure
 \\
 \heap ::= \emp ~|~
\sepnodeF{x}{c}{f{:}v,..,f{:}v} ~|~
\seppredF{\code{P}}{\setvars{v}} % ~|~
~|~\heap {\sep}\heap \\
 \pure ::=  \true \mid
   \atom \mid \phi \mid {\neg} \pure \mid
\exists v{.}~\pure \mid
\pure {\wedge} \pure \\
\end{array}
&\quad
\begin{array}{l}
\phi ::=  v {=}v \mid v{=}\nil  \\
\atom ::=   \a {=} \a \mid \a {\leq} \a \\
 a ::= \!k \mid v
\end{array}
\end{array}
\]
where \form{v {\in} \Var}, \form{c {\in} {\Dns}} and \form{f{\in} {\Flds}}.
%The spatial formula {\heap} may be conjoined
%by \form{\emp} predicate, points-to predicates \form{\sepnodeF{x}{c}{f{:}v,..,f{:}v}}
%and inductive predicates \form{\seppredF{\code{P}}{\setvars{v}}}.
Note that we often discard field names \form{f}
 of points-to predicates \form{\sepnodeF{x}{c}{f{:}v,..,f{:}v}} and use the
 short form as \form{\sepnodeF{x}{c}{\setvars{v}}}.
 \form{v_1 {\neq} v_2}
is the short form of \form{\neg (v_1{=}v_2)}.
\form{E} denotes for either a variable or \form{\nil}.
\form{\D\subst{E}{v}} denotes the formula obtained from
\form{\D} by
substituting \form{v} by \form{E}.
 {\em A symbolic heap is referred as a base, denoted as \form{\base}, if it does not
 contain any occurrence of inductive predicates.}

\hide{\noindent{\bf Inductive Definitions}
 Each data structure is defined using the keyword \code{data} and
each inductive predicate is defined by a disjunction \form{\constr}
using the key word \code{pred}.
In each disjunct (called a branch), we require that variables which are not formal parameters must
be existentially quantified.
A predicate is called {\em complete} if its branches are pairwise disjoint.

Given \form{ \seppred{\code{pred~P}}{\setvars{v}}  {\equiv} \constr;},
we use \form{\unfold(\seppred{\code{P}}{\setvars{t}})}
to denote an unfolding of the inductive predicate:
\form{\unfold(\seppred{\code{P}}{\setvars{t}}) {\equiv} \constr[\setvars{t}/\setvars{v}]}. Given the definition \form{\constr} of predicate \form{\seppred{\code{P}}{\setvars{t}}},
a predicate occurrence \form{\seppred{\code{Q}}{\setvars{v}}}  is called a subterm of \form{\seppred{\code{P}}{\setvars{t}}}
if it is a subformula of \form{\constr}. This relation is
 captured by the operation \form{\seppred{\code{Q}}{\setvars{v}} {\subterm} \seppred{\code{P}}{\setvars{t}}}.
The subterm has transitivity property: if \form{\seppred{\code{P_1}}{\setvars{v}_1} {\subterm} \seppred{\code{P_2}}{\setvars{v}_2}} and \form{\seppred{\code{P_2}}{\setvars{v}_2} {\subterm} \seppred{\code{P_3}}{\setvars{v}_3}}, then \form{\seppred{\code{P_1}}{\setvars{v}_1} {\subterm} \seppred{\code{P_3}}{\setvars{v}_3}}.
A parameter \form{v \in \setvars{v}} is called rootable if
\form{v} is allocated in the definition i.e., \form{\sepnodeF{v}{\_}{\_} \in \constr}.}

\paragraph{Inductive Definitions}\label{spec.deci.ent}

 We write
\code{\PName} to denote a set of \form{n} defined predicates %% as
 \form{\code{\PName}{=}\{\code{P_1},...,\code{P_n}\}} in our system.
Each inductive predicate has following types of parameters:
a pair of root and segment defining segment-based linked points-to heaps,
reference parameters (e.g., parent pointers, fast-forwarding pointers),
transitivity parameters
(e.g., singly-linked lists
where every heap cell contains the same value \form{a})
and pairs of ordering parameters
(e.g., 
trees being binary search trees). 
 An inductive predicate is defined as
\[
\begin{array}{l}
 \seppred{\code{pred~P}}{r{,} F{,} \setvars{B}{,}u{,}sc{,}tg} ~{\equiv}~ \emp {\wedge} r{=}F {\wedge} sc{=}tg \\
~\qquad \vee ~ \exists {X}_{tl}, \setvars{Z},sc'.
\sepnode{r}{c}{X_{tl}{,}\setvars{p}{,}u{,}sc'} ~{\sep}~ \heap' ~{\sep}~ \seppred{\code{P}}{{X}_{tl}{,} F{,} \setvars{B}{,}u{,}sc'{,}tg} \wedge
 r{\neq} F \wedge  sc \diamond sc'
\end{array}
\]
where \form{r} is the root, \form{F} the segment,
 \setvars{B} the borders, \form{u} the parameter for a transitivity property, \form{sc} and \form{tg} source and
 target, respectively, parameters of an
 order
 property, \form{\sepnode{r}{c}{X_{tl}{,}\setvars{p}{,}u{,}sc'} ~{\sep}~ \heap'}
 the matrix of the heaps, and \form{\diamond \in \{=,\geq,\leq\}}. 
(The extension for multiple local properties is straightforward.)
Moreover, this definition
is constrained by the following
three conditions on heap connectivity, establishment, and termination.

\noindent{\bf Condition C1.} In the recursive rule, 
\form{ \setvars{p} = \{\nil\} {\cup} \setvars{Z}}.
This condition implies that %variables at fields of a points-to predicate
%are decided by its
If two variables points to the same heap,
their content must be the same. 
For instance, 
the following definition of singly-linked lists of even length does not
satisfy this condition.
\saveone\[\saveone
\begin{array}{l}
 \seppred{\code{pred~ell}}{r{,}F} ~{\equiv}~ \emp {\wedge} r{=}F  ~\vee ~ \exists {x_{1}}{,}{X} .
\sepnode{r}{c_1}{x_{1}} {\sep}\sepnode{x_1}{c_1}{X}  {\sep} \seppred{\code{ell}}{X{,}F} {\wedge} r{\neq}F
\end{array}
\]
as \form{n_3} and \form{X} are not field variables of the node pointed-to by \form{r}.

\noindent {\bf Condition C2.}
The matrix heap defines nested and connected list segments as:
\saveone\[\saveone
\form{\heap' {:=} \seppred{\code{Q}}{Z{,}\setvars{U}} \mid \heap'{\sep}\heap' \mid \emp }
\] where
 \form{Z {\in} \setvars{p}} and \form{(\setvars{U} \setminus \setvars{p}) \cap Z = \emptyset}. This condition ensures connectivity (i.e. all
 allocated heaps are connected to the root)
 and establishment (i.e. every existential quantifier  either is allocated or equals to a parameter).

\noindent{\bf Condition C3.} There is no mutual recursion.
We define an order \form{\orderp} on inductive predicates as:
 \form{\code{P} ~{\orderp}~ \code{Q}}
if  at least one occurrence of predicate \code{Q} appears
in the definition of \code{P} and \code{Q}
is called a direct sub-term of \code{P}.
We use \form{{\orderstarp}} to denote the transitive closure
 of \form{\orderp}.

\hide{ %in one of the following forms:
\begin{itemize}
\item {\bf C3.1} Injectivity: \form{ sc'{=}r } and \form{sc \in \setvars{p}}. % where %% \form{k \in \mathbb{Z}},
%\form{v \in \setvars{p}}
%\item {\bf C5.2} Transitivity: \form{s{=}v {\wedge} s'{=}r } where \form{v \in \setvars{p}}
\item {\bf C3.2} Order: \form{sc{\diamond}sc'{+}k } where  \form{k \in \mathbb{Z}},
\form{\diamond \in \{=,\geq,\leq\}} % and
%% \form{v \in \setvars{p}}
\end{itemize}}

Several definition examples are shown as follows.
\[
\begin{array}{l}
\seppred{\code{pred~ll}}{r{,}F} ~{\equiv}~ \emp {\wedge} r{=}F ~ \vee ~ \exists {X_{tl}}.
 \sepnode{r}{c_1}{X_{tl}} {\sep} \seppred{\code{ll}}{X_{tl},F} {\wedge} r{\neq}F \\
 \seppred{\code{pred~nll}}{r{,}F{,}B} ~{\equiv}~ \emp {\wedge} r{=}F \\
 \quad \vee ~ \exists X_{tl}{,}Z.
 \sepnode{r}{c_3}{X_{tl}{,}Z} {\sep}  \seppred{\code{ll}}{Z,B}{\sep} \seppred{\code{nll}}{X_{tl}{,}F{,}B} {\wedge} r{\neq}F\\
 \seppred{\code{pred~skl1}}{r{,}F} ~{\equiv}~ \emp {\wedge} r{=}F
~\vee ~ \exists {X_{tl}}.
 \sepnode{r}{c_4}{X_{tl}{,}\nil{,}\nil} {\sep}  \seppred{\code{skl1}}{X_{tl},F} {\wedge} r{\neq}F  \\
\seppred{\code{pred~skl2}}{r{,}F} ~{\equiv}~ \emp {\wedge} r{=}F \\
 \quad \vee ~ \exists {X_{tl},Z_1}.
 \sepnode{r}{c_4}{Z_1{,}X_{tl}{,}\nil} {\sep} \seppred{\code{skl1}}{Z_1{,}X_{tl}} {\sep}  \seppred{\code{skl2}}{X_{tl},F} {\wedge} r{\neq}F  \\
 \seppred{\code{pred~skl3}}{r{,}F} ~{\equiv}~ \emp {\wedge} r{=}F \\
 \quad \vee ~ \exists {X_{tl}{,}Z_1{,}Z_2}.
 \sepnode{r}{c_4}{Z_1{,}Z_2{,}X_{tl}} {\sep} \seppred{\code{skl1}}{Z_1{,}Z_2} {\sep}  \seppred{\code{skl2}}{Z_2{,}X_{tl}} {\sep}  \seppred{\code{skl3}}{X_{tl}{,}F} {\wedge} r{\neq}F  \\
\seppred{\code{pred~tree}}{r{,}B} ~{\equiv}~ \emp {\wedge} r{=}B  \\
~\quad \vee ~ \exists {r_l}, {r_r}.
\sepnode{r}{c_{t}}{r_l{,}r_r} {\sep} \seppred{\code{tree}}{r_l{,}B} {\sep} \seppred{\code{tree}}{r_r{,}B} \wedge r{\neq}B
\end{array}
\]
 \code{ll} defines singly-linked lists, \code{nll} defines lists of acyclic lists,
 \code{slk1}, \code{slk2} and \code{slk3} define skip-lists.
Finally, \code{tree} defines binary trees.
\hide{ In intuition, given an inductive predicate
 \form{\seppred{\code{P}}{r{,} F{,} \setvars{B}}} in {\cslpluse}
and \form{\sstack,\sheaps \models \seppred{\code{P}}{r{,} F{,} \setvars{B}}}
then the heap \form{\sheaps} defines a collection of paths
starting from \form{\sstack(r)}
and ending at either \form{\sstack(F)} or \form{\sstack(b)}
 where \form{b \in \setvars{B}}.}
We extend predicate \code{ll} with
transitivity and order parameters to obtain
predicate 
 \code{lla}
and \code{lls}, respectively, as follows.
\[
\begin{array}{l}
\seppred{\code{pred~lla}}{r{,}F{,}a} ~{\equiv}~ \emp {\wedge} r{=}F ~ \vee ~ \exists X_{tl}. \sepnode{r}{c_2}{X_{tl}{,}a} \sep \seppred{\code{lla}}{X_{tl}{,}F{,}a}  {\wedge} r{\neq}F \\
 \seppred{\code{pred~lls}}{r{,}F{,}mi{,}ma} ~{\equiv}~ \emp {\wedge} r{=}F {\wedge}ma{=}mi \\
 \quad \vee ~  \exists X_{tl}{,}mi_1. \sepnode{r}{c_4}{X_{tl}{,}mi_1} \sep \seppred{\code{lls}}{X_{tl}{,}F{,}mi_1{,}ma} {\wedge} r{\neq}F \wedge mi {\leq} mi_1\\
 \end{array}
 \]

\paragraph{Unfolding}
Given \form{\seppred{\code{pred~P}}{\setvars{t}}\equiv \constr} and a formula \form{\seppred{\code{P}}{\setvars{v}}{\sep}\D}, then
unfolding $\seppred{\code{P}}{\setvars{v}}$ means replacing \form{\seppred{\code{P}}{\setvars{v}}} by \form{\constr[\setvars{v}/\setvars{t}]}.
 %both
%\form{\seppred{\code{P}}{\setvars{v}_0}[\setvars{v}/\setvars{t}]} and \form{\code{Q_i}(...)} is called a direct sub-term of \form{\seppred{\code{P}}{\setvars{v}}}. %% if it is a subformula of \form{\constr}.
%Note that the sub-term relation is transitive.
We annotate a number, called unfolding number, for each occurrence of inductive predicates.
%These unfolding numbers are important to construct cyclic proofs.  %% to denote the hierarchy of the sub-term relationships.
Suppose
$\exists \setvars{w}.
\sepnode{r}{c}{\setvars{p}} ~{\sep}~ \seppred{\code{Q}_1}{\setvars{v}_1}{\sep}...{\sep}\seppred{\code{Q}_m}{\setvars{v}_m} ~{\sep}~ \seppred{\code{P}}{\setvars{v}_0} {\wedge} \pure$
be
the recursive rule,
then
in the unfolded formula, if \form{\seppred{\code{P}}{\setvars{v}_0[\setvars{v}/\setvars{t}]}^{k_1}} and
\form{\code{Q_i}(...)^{k_2}} are direct
sub-terms of \form{\seppred{\code{P}}{\setvars{v}}^{k}} like above, then \form{k_1{=}k{+}1} and \form{k_2 = 0}.
When it is unambiguous, we discard
the annotation of the unfolding number for simplicity.

\subsection{Semantics}\label{sec.spec.sem}

The program state is interpreted by a pair \form{(\sstack{,}\sheaps)} where \form{\sstack {\in} {\Stack}}, \form{\sheaps {\in} {\Store} }
and stack ${\Stack}$ and heap ${\Store}$ are defined
as:
\[
\begin{array}{lcl}
{\Store}  & {\defsym} &  {\Locations} {\rightharpoonup_{fin}} ({\Dns} ~{\rightarrow}~
%% ~{\sep}~
(\Flds ~{\rightarrow}~ \Val \cup \Locations)^m)  \\
{\Stack} & {\defsym} &  {\Var} ~{\rightarrow}~ \Val \cup \Locations
\end{array}
\]
Note that we assume that every data structure contains at most \form{m} fields.
Given  a formula \form{\constr},
its semantics is given by a relation:
\form{\sstack{,}\sheaps ~{\force}~ \constr}
in which the stack \form{\sstack} and the heap \form{\sheaps} satisfy the constraint
 \form{\constr}.
The semantics is shown below
\[
 \begin{array}{lcl}
\form{\sstack},\form{\sheaps} \force \emp &
\iffs & \dom({\sheaps}) {=} \emptyset\\
\form{\sstack},\form{\sheaps} \force \sepnodeF{v}{c}{f_i:v_i} &
%% iffs &  l {=}\sstack(v), \text{dom}(\sheaps) {=}\{l \rightarrow r \}
%% \mbox{ and } r(c,f_i){=} \sstack({v_i})\\
\iffs &  {\dom}(\sheaps) {=}\{\sstack(v)\}, \sheaps(\sstack(v)){=}g,
 g(c,f_i){=} \sstack({v_i})\\
\form{\sstack},\form{\sheaps}  \force \seppredF{{P}}{\setvars{v}} & \iffs & \form{(\sheaps,\sstack(\setvars{v}_1),..,\sstack(\setvars{v}_k)) \in \sem{P}} \\
\form{\sstack},\form{\sheaps} \force  \heap_1 \sep \heap_2 &
\iffs & \exists \sheaps_1,\sheaps_2 ~s.t~ \sheaps_1 {\#} \sheaps_2 \mbox{, }
 \sheaps{=}\sheaps_1 {\cdot} \sheaps_2, \mbox{, }
 \sstack,\sheaps_1 \force \heap_1 \mbox{ and } \sstack,\sheaps_2 \force \heap_2\\
\form{\sstack},\form{\sheaps} \force \true &
\iffs & \mbox{always} \\
\form{\sstack},\form{\sheaps} \force \heap{\wedge}\pure &
\iffs &  {\sstack},\form{\sheaps} \force {\heap}  \text{ and } {\sstack} \force {\pure} \\
\form{\sstack},\form{\sheaps} \force \exists {v} {.}\D &
\iffs & \exists {\alpha} . {\sstack}[v \pto {\alpha}],\form{\sheaps} \force {\D} \\
\form{\sstack},\form{\sheaps} \force  \constr_1 \vee \constr_2 &
\iffs & \sstack,\sheaps \force \constr_1 \mbox{ or } \sstack,\sheaps \force \constr_2\\
\end{array}
\]
%where
 \form{dom(g)} is the domain of \form{g}, %. Note that
%\form{dom(\sheaps)}
 \form{\sheaps_1 {\#} \sheaps_2} denotes disjoint
 heaps $h_1$ and $h_2$ i.e.,
 \form{{\dom}(\sheaps_1) {\cap} {\dom}(\sheaps_2) {=} \emptyset},
and
  \form{\sheaps_1 {\cdot} \sheaps_2} denotes the
  union of two disjoint heaps.
If \form{\sstack} is a stack, \form{v {\in} \Var},  and
\form{\alpha {\in} \Val {\cup} \Locations}, we write
\form{ {\sstack}[v {\pto} {\alpha}] = {\sstack}}
if \form{v {\in} \dom(\sstack)}, otherwise
\form{ {\sstack}[v {\pto} {\alpha}] = {\sstack} {\cup} \{(v, \alpha)\}}.
 Semantics of non-heap (pure) formulas %depends on stack valuations only and
is omitted for simplicity.
The interpretation of an inductive predicate
\form{\seppredF{\code{P}}{\setvars{t}}}
is based on the 
least fixed point semantics \sem{P}.

Entailment
\form{\D \models \D'} holds iff for all \form{\sstack} and \form{\sheaps},
if \form{\sstack,\sheaps \force \D} then \form{\sstack, \sheaps \force \D'}.

\section{Entailment Problem \& Overview}
\label{sec:overview}

Throughout this work, we consider the following problem.
\[
\begin{array}{|ll|}
\hline
 \quad \text{PROBLEM:} & {\entProb}.  \\
 \quad \text{INPUT:} & \form{\D_a \equiv \heap_a{\wedge}\pure_a} \text{ and } \form{\D_c \equiv \heap_c{\wedge}\pure_c}\text{ where } \FV(\D_c) \subseteq \FV(\D_a)\cup\{\nil\}.  \quad\\
 \quad \text{QUESTION:} & \text{Does } \form{\D_a ~\models~ \D_c} \text{ hold? }  \qquad\\
\hline
\end{array}
\]

  An entailment, denoted as \form{\enode}, is syntactically  formalized as:
  \form{\entailNCyc{\D_a}{\D_c}{\heap}}
  where
 \form{\D_a} and \form{\D_c} are quantifier-free formulas whose syntax are defined in the preceding section.

\label{mov.derv.tree}
In Sect. \ref{sec.exmid}, we present the basis of
an exclude-the-middle proof system and our approach
to {$\entProb$}. In
Sect. \ref{sec.cyclic.proof}, we describe the foundation
of cyclic proofs.

\subsection{Exclude-the-middle proof system}\label{sec.exmid}
Given a goal \form{\entailNCyc{\D_a}{\D_c}{\heap}},
an entailment proof system might derive
entailments with disjunction in the right-hand side (RHS).
Such an entailment can be obtained by a proof rule that replaces an inductive predicate by its definition rules.
Authors of Smallfoot \cite{Berdine:APLAS05} introduced a normal form and proof rules to prevent such entailments 
when the predicate are lists or trees. Basically, Smallfoot considers
the following two scenarios.
\begin{itemize}
    \item Case 1 (Exclude-the-middle and Frame): The inductive predicate matches with a points-to predicate in the left hand side (LHS). For instance, the entailment is of the form $\enode_1: \sepnodeF{x}{c}{z} \sep \D ~\ent~ \seppred{\code{ll}}{x,y} \sep \D'$, where \code{ll} is singly-linked lists and $\seppred{\code{ll}}{x,y}$ matches with $\sepnodeF{x}{c}{z}$ as they have the same root $x$. To discharge $\enode_1$, a typical proof system
    might search for a proof through two definition rules of  predicate \code{ll} (i.e., by unfolding $\seppred{\code{ll}}{x,y}$ into two disjuncts):
    One includes the base case with $x=y$ and another contains the recursive case with $x\neq y$. Smallfoot prevents such unfolding by excluding the middle in the LHS:
    It reduces the entaiment into two premises: $\sepnodeF{x}{c}{z} \sep \D \land x=y ~\ent~ \seppred{\code{ll}}{x,y} \sep \D'$ and $\sepnodeF{x}{c}{z} \sep \D \land x\neq y ~\ent~ \seppred{\code{ll}}{x,y} \sep \D'$. The first one
    considers the base case of the list (that is, $\seppred{\code{ll}}{x,x}$) and is equivalent to
    $\sepnodeF{x}{c}{z} \sep \D \land x=y ~\ent~ \D' $.
    And the second premise checks the inductive case of the list
    and is equivalent to $\D \land x\neq y ~\ent~ \seppred{\code{ll}}{x,z} \sep \D'$.
    \item Case 2 (Induction proving via hard-wired Lemma). The inductive predicate matches other inductive predicates in the LHS. For example, the entailment is of the form $\enode_2:\seppred{\code{ll}}{x,z} \sep \D ~\ent~ \seppred{\code{ll}}{x,\nil} \sep \D'$. Smallfoot handle $\enode_2$ by using a proof rule as the consequence of applying
    the following hard-wired lemma $\seppred{\code{ll}}{x,z} \sep \seppred{\code{ll}}{z,\nil} \models \seppred{\code{ll}}{x,\nil}$ and reduces the entailment
    to $\D ~\ent~ \seppred{\code{ll}}{z,\nil} \sep \D'$.
\end{itemize}
In doing so, Smallfoot does not introduce a disjunction in the RHS. However, as it uses specific lemmas in the induction reasoning, it only works for the hardwired lists.

In this paper, we propose {\toolname} as an exclude-the-middle's system
for user-defined predicates, those in {\cslpluse}. In stead of using hardwired lemmas, we apply cyclic proofs for
induction reasoning. For instance, to discharge the entailment $\enode_2$ above,  {\toolname} first unfolds
$\seppred{\code{ll}}{x,z}$ in the LHS and obtains two premises:
\begin{itemize}
    \item $\enode_{21}: (\emp \land x=z) \sep \D  ~\ent~ \seppred{\code{ll}}{x,\nil} \sep \D'$; and
    \item $\enode_{22}: (\sepnodeF{x}{c}{y} \sep \seppred{\code{ll}}{y,z} \land x \neq z) \sep \D  ~\ent~ \seppred{\code{ll}}{x,\nil} \sep \D'$
\end{itemize}
While it reduces $\enode_{21}$
 to $\D[z/x] ~\ent~ \seppred{\code{ll}}{z,\nil} \sep \D'[z/x]$, for $\enode_{22}$, it further applies the frame rule as in {\bf Case 1} above and obtains $ \seppred{\code{ll}}{y,z} \sep \D \land x \neq z ~\ent~ \seppred{\code{ll}}{y,\nil} \sep \D'$. Then, it makes a backlink between the latter and $\enode_2$ and closes this path. By doing so, it does not introduce disjunctions in the RHS and can handle
 user-defined predicates.

\subsection{Cyclic proofs}\label{sec.cyclic.proof}
 Central to our work is a procedure that  construct a
 cyclic proof for an entailment.
Given an entailment \form{\D~ \ent ~\D'},
if our system can derive a cyclic proof, then \form{\D \models \D'} holds.
If, instead, it is stuck without a proof, then \form{\D \models \D'} is not valid.

The procedure includes proof rules, each of which
  is of the form:
\[
\AxiomC{
$
 \enode_1   \quad ... \quad \enode_n
$
}
\LeftLabel{\scriptsize \code{PR_0}}
\RightLabel{\scriptsize \code{cond}}
\UnaryInfC{$
 \enode
$}
\DisplayProof
\]
where entailment \form{\enode} (called the conclusion) is reduced to
entailments \form{\enode_1}, ..,\form{\enode_n}
(called the premises)
through inference rule \code{PR_0}
given that the {\em side condition} \code{cond} holds.
%The meaning of this rule is defined as follows.
%\begin{itemize}
%\item (if) if all premises are valid, the conclusion is valid.
%\item (only if) all premises are valid only if the conclusion is valid.
%\end{itemize}
%% Similar to other proof systems,

A cyclic proof is  a proof tree \form{\utree{i}} which is
 a tuple \form{(\sc{V}, \sc{E}, \backfun)} where
\begin{itemize}%[noitemsep,topsep=0pt]
    \item
 $V$ is a finite set of nodes representing entailments derived during the proof search;
    \item A directed edges \form{(\et, \code{PR}, \et') \in \sc{E}} (where \form{\et'} is a child of \form{\et}) means
 that the premise \form{\et'} is derived from the conclusion
 \form{\et} via %% a substitution
 %% $\sub$: \form{\et' ~{\equiv}~ \et\sub};
 inference rule  \code{PR}. For instance,
suppose that the
  rule \code{PR_0} above has been applied, 
then
the following \form{n} edges are generated:
 \form{(\et, \code{PR_0}, \et_1)}, .., \form{(\et, \code{PR_0}, \et_n)};
    \item and $\backfun$ is a partial relation which captures back-links in
 the proof tree. If
  \form{\ctree{\et_c}{\et_b}{\sub}} holds, then
 \form{\et_b} is linked back to its ancestor \form{\et_c} through the substitution \form{\sub}
(where \form{\et_b} is referred as a {\em bud}
and \form{\et_c} is referred as a {\em companion}).
%$\backfun$ is implemented using a inference rule.
%which is the most important rule in a cyclic proof system.
In particular, \form{\et_c} is of the form: \form{\D ~\ent~ \D'}
and \form{\et_b} is of the form: \form{\D_1 {\wedge}\pure ~\ent~ \D_1'}
where \form{\D \equiv \D_1\sub} and \form{\D' \equiv \D'_1\sub}.
\end{itemize}
A leaf node %% in a proof tree
is marked as
  closed if it is evaluated as
 valid (i.e. the node is applied with an axiom) or %% entailment
 invalid (i.e. no rule can apply), or it is linked back. Otherwise, it is marked as open.
A proof tree is {\em invalid} if it contains at least one invalid leaf node.
It is a {\em pre-proof} if all its leaf nodes are either valid
or linked back.
A pre-proof is a cyclic proof if
a  global soundness
condition is imposed in the tree. 
 Intuitively,
this soundness condition requires that for every \form{\ctree{\enode_c}{\enode_b}{\sub}},
there exist inductive predicates \form{\seppred{\code{P}}{\setvars{t_1}}} in \form{\enode_c} and \form{\seppred{\code{Q}}{\setvars{t_2}}} in \form{\enode_b} such that \form{\seppred{\code{Q}}{\setvars{t_2}}}
is a subterm of  \form{\seppred{\code{P}}{\setvars{t_1}}}.

\begin{defn}[Trace]
\noindent Let \form{\utree{i}} be a pre-proof of \form{\D_{a} ~{\ent}~ \D_{c}} and $({\D_{a_i} ~{\ent}~ \D_{c_i} })_{i{\geq}0}$ be a path of $\utree{i}$.
A trace following
$({\D_{a_i} {\ent} \D_{c_i} })_{i{\geq}0}$ is a sequence $(\alpha_i)_{i{\geq}0}$ such that each $\alpha_i$  (for all
$i{\geq}0$) is a subformula of \form{{\D_{a_i}}} containing predicate $\code{P}(\setvars{t})^u$, and either:
\begin{itemize}%[nolistsep,topsep=0pt,leftmargin=*]
\item $\alpha_{i{+}1}$ is the subformula occurrence in $\D_{a_{i+1}}$ corresponding to $\alpha_{i}$ in $\D_{a_{i}}$.
\item or ${\D_{a_i} ~{\ent}~ \D_{c_i} }$ is the conclusion
of a left-unfolding rule, $\alpha_{i} \equiv \code{P}(\setvars{t})^u$ is unfolded,
and 
$\alpha_{i+1}$ is a subformula in $\D_{a_{i+1}}$
and is the 
 definition rule of $\code{P}(\setvars{x})^{u}[\setvars{t}/\setvars{x}]$.
 In this case,
 \form{i}
is said to be a {\em progressing point} of the trace.
\end{itemize}
\end{defn}

\begin{defn}[Cyclic proof]\label{cyclic}
A pre-proof \form{\utree{i}} of \form{\D_{a} ~{\ent}~ \D_{c}} is a cyclic proof if, for every infinite path $(\form{\D_{a_i} {\ent} \D_{c_i}})_{i{\geq}0}$ of $\utree{i}$,
there is a tail of the path $p{=}(\form{\D_{a_i} ~{\ent}~ \D_{c_i}})_{i{\geq}n}$
 such that there is a trace following $p$ which has
infinitely progressing points.
\end{defn}

Suppose that all proof rules are (locally) sound
(i.e., if the premises are valid then the conclusion is valid), the following Theorem shows the {\em global soundness}.
\begin{theorem}[Soundness \cite{Brotherston:CADE:11}]
If there is a cyclic proof of \form{\D_a ~\ent~ \D_c}, then \form{\D_a \models \D_c}.
\end{theorem}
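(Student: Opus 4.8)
The plan is to prove the contrapositive by \emph{infinite descent}, following the template of Brotherston's soundness argument. Suppose $\D_a \ent \D_c$ is not valid, so there is a state $(\sstack,\sheaps)$ with $\sstack,\sheaps \force \D_a$ but $\sstack,\sheaps \nforce \D_c$; I will call such a pair a \emph{countermodel} of the entailment. First I would record the only consequence of local soundness that is needed: since every proof rule is (locally) sound, contrapositively, if the conclusion of a rule instance has a countermodel then so does at least one of its premises; and if a bud $\enode_b \equiv \D_1{\wedge}\pure \ent \D_1'$ with companion $\enode_c \equiv \D \ent \D'$ (where $\D \equiv \D_1\sub$, $\D' \equiv \D_1'\sub$) has a countermodel, then precomposing the stack with $\sub$ turns it into a countermodel of $\enode_c$. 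Starting at the root, which has countermodel $(\sstack,\sheaps)$, I would repeatedly descend to a countermodel-carrying premise and, on reaching a bud, jump along its backlink to the companion. Because $\utree{i}$ is a pre-proof, every leaf is either an axiom---hence valid, hence not a countermodel carrier---or a bud; so the process never halts and yields an infinite path $p_0 = (\D_{a_i} \ent \D_{c_i})_{i\ge0}$ of $\utree{i}$ (with backlinks unrolled), together with countermodels $(\sstack_i,\sheaps_i)$ attached to each node.

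Next I would invoke the defining property of a cyclic proof (Definition~\ref{cyclic}): a tail $p = (\D_{a_i} \ent \D_{c_i})_{i\ge n}$ of $p_0$ admits a trace $(\alpha_i)_{i\ge n}$ with infinitely many progressing points. Each $\alpha_i$ is a subformula of $\D_{a_i}$ carrying a predicate occurrence $\code{P}_i(\setvars{t})^u$, and by splitting $\sheaps_i$ according to $\sstack_i,\sheaps_i \force \D_{a_i}$ there is a sub-heap satisfying $\seppred{\code{P}_i}{\setvars{t}}$. Using the least-fixed-point semantics I would write $\sem{\code{P}} = \bigcup_\lambda \sem{\code{P}}^\lambda$ for the approximant chain obtained by iterating the definitional operator from $\emptyset$, and let $\mu_i$ be the least ordinal with that sub-heap in $\sem{\code{P}_i}^{\mu_i}$. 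The heart of the argument is then a \emph{descent lemma}: along the trace, $\mu_{i+1} \le \mu_i$ at every step, and $\mu_{i+1} < \mu_i$ at every progressing point. Non-progressing steps merely transport (or substitute) the tracked occurrence without rebuilding it, so its witnessing sub-heap---and hence the rank $\mu_i$---is unchanged; a progressing step is a left-unfolding of $\alpha_i \equiv \code{P}_i(\setvars{t})^u$, and a countermodel with a $\code{P}_i$-witness of rank $\mu_i > 0$ yields in the unfolded premise a countermodel whose recursive $\code{P}_i$-occurrence has a witness of \emph{strictly smaller} rank, by the very definition $\sem{\code{P}}^{\lambda+1} = F(\sem{\code{P}}^{\lambda})$. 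Granting the lemma, the infinitely many progressing points in $p$ produce an infinite strictly-decreasing sequence of ordinals, which is impossible; hence no countermodel exists and $\D_a \models \D_c$.

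The step I expect to be the main obstacle is the descent lemma, because it is not a purely syntactic fact: it requires refining local soundness of \emph{each} rule into a ``trace-respecting, rank-monotone'' form---a countermodel of the conclusion must propagate to a countermodel of some premise whose tracked-predicate rank is no larger, and strictly smaller exactly when that predicate is unfolded. Two points demand care: (i) the interaction with existential quantifiers and with the backlink substitutions $\sub$, where one must check that the stack witnesses chosen for the propagated countermodel do not disturb the sub-heap that certifies $\alpha_i$ and its rank; and (ii) verifying that in {\cslpluse} the least fixed point is reached at $\omega$---the definitional operators being continuous on the powerset lattice---so that the ranks $\mu_i$ are finite ordinals and the final contradiction is simply with well-foundedness of $\mathbb{N}$.
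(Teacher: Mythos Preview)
Your proposal is correct and follows essentially the same approach as the paper, which does not give its own detailed proof but cites \cite{Brotherston:CADE:11} and sketches exactly the infinite-descent argument you outline: assume a countermodel, propagate it along an infinite path using local soundness, and use the trace with infinitely many progressing points to obtain an infinite strictly-decreasing chain of approximant ranks, contradicting the least-fixed-point semantics. Your identification of the descent lemma as the crux and your remarks on rank preservation under substitution and on $\omega$-continuity are apt refinements of that sketch.
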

The proof is by contraction and can be found in \cite{Brotherston:CADE:11}. Intuitively, if we can derive a cyclic proof for \form{\D_a ~\ent~ \D_c} and \form{\D_a \not\models \D_c}, then the inductive predicate at the progress points can be unfolded infinite often.
This contradicts with the least semantics of the predicate.

\section{Cyclic Entailment Procedure}
\label{sec.entail}
In this section, we present our main proposal, the entailment procedure
{\allEnt} with the proposed inference rules
(subsection \ref{ent.search.deci}),
and an illustrative example in subsection \ref{ent.example}.

\subsection{Proof Search}\label{ent.search.deci}
% \begin{figure}[tbh]
\begin{wrapfigure}{r}{0.6\textwidth} \savespace \savespace 
  \begin{minipage}{0.50\textwidth}\savespace \savespace 
\[
\begin{array}{l}
\hline
 {\allEnt} \\
\hline
 \code{\bf input}{:}~{\form{\enode_0}}
 \qquad \qquad
 \code{\bf output}{:}~{\code{\valid} ~or~ \code{invalid}} \\
\begin{array}{rl}
  1{:}& \form{i {\leftarrow} 0};
   \form{\utree{i} {\leftarrow} \enode_0}; \\
 2{:} & \code{\bf while} ~{ \form{\true} }~ \code{\bf do} \\
 3{:} & \quad   \form{(\code{res},\enode_i, PR_i) {\leftarrow} \isclose{\utree{i}}} ; \\
 4{:} &  \quad \code{\bf if~}{\code{res}{=}\valid}~ \code{\bf then} ~ \code{\bf return}~\valid;\\
% 7{:}&   \qquad   \code{\bf return}~(\valid, \form{\Horn}); \\%\tcc*[r]{cyclic proofs}
 5{:}&  \quad \code{\bf if~}{\code{res}{=}\invalid} ~\code{\bf then} \code{\bf ~ return}~\code{\invalid};\\
 6{:} &\quad  \code{\bf if~}\code{\lbent}(\utree{i}, \enode_i) = \false~ \code{\bf then} \\
 7{:}&\qquad  \form{\utree{i+1} \leftarrow \code{apply}(\utree{i}, \enode_i, PR_i)} ; \\
 %4{:}&\qquad  \form{\utree{i}^2  {\leftarrow} \code{\evalbase}(\utree{i}^1)} ; \\
 %\tcc*[r]{check base cases}
 % \tcc*[r]{apply induction rule}
 % 9{:}&   \qquad
 %      \form{\utree{i{+}1}  {\leftarrow} \code{reduce}(\utree{i}^3,\enode_i) } ;  \\
 %\tcc*[r]{unfold inductive predicates}
% }
8{:}       & \quad \form{i {\leftarrow} i{+}1};\\
9{:}& \code{\bf end}
\end{array} \\
\hline
\end{array}
\] \savespace
\caption{Proof tree construction procedure}\label{algo.proof.search}
\end{minipage} \savespace \savespace \savespace
\end{wrapfigure}

%We introduce a normal form of entailment
%(NF), in which
%unfolding a predicate symbol in the RHS
%always obtains one disjunct.
The proof search algorithm
{\allEnt} is presented in Fig. \ref{algo.proof.search}.
{\allEnt} takes \form{\enode_0} as input, produces
cyclic proofs and based on that decides
whether the input
    is
\form{\valid} or \form{\invalid}.
Initially, for every \form{\seppred{\code{P}}{\setvars{v}}^k \in \enode_0},
\form{k} is reset to \form{0} and
\utree{0} only has \form{\enode_0}
as an open leaf, the root.
The overall
 idea of
{\allEnt} is to iteratively reduce \form{\utree{0}} into a sequence of cyclic proof trees
\utree{i}, $i\geq 0$. % that represents
% all solvable entailments using the proposed set of
%proof rules.
On line 3, through procedure \isclose{\utree{i}},  {\allEnt} chooses an {\em open} leaf node $\enode_i$ and a proof rule
$PR_i$
to apply.
If \isclose{\utree{i}} returns {\valid} (that is, every leaf is applied to an axiom rule or involved in a backlink),
{\allEnt} returns \form{\valid} on line 4.
If it returns {\invalid},  then {\allEnt} returns {\invalid} (one line 5). Otherwise, it tries to link ${\enode_i}$
back to an internal node (on line 6). If this attempt
fails, it applies the rule (line 7).

Note that for each leaf, {\code{is\_closed}} attempts rules in the following order: normalization rules,
 axiom rules,
 and reduction rules.
A rule $PR_i$ is chosen if its conclusion can be unified with the leaf, through some substitution $\sub$.
 Then,  on line 7, for each premise of $PR_i$, procedure \code{apply} creates a new open node and connects the node to $\enode_i$ via a new edge.
If $PR_i$ is an axiom, 
procedure \code{apply} marks $\enode_i$ as closed and returns.
%If all leaves are marked closed, on line 4, {\allEnt} returns \form{\valid}. 

\paragraph{Procedure \isclose{\utree{i}}} This procedure examines
the following three cases.
\begin{enumerate}
\item First, if all leaf nodes are marked closed and none of them is {\invalid}
then
\code{is\_closed} returns  \form{\valid}. % with a pre-proof.  %, and then {\allEnt},
\item Secondly, \code{is\_closed} returns \form{\invalid} if there exists an open leaf node
\form{\enode_i:~\D~\ent~\D'} in NF such that one of the four following conditions holds:
\begin{enumerate}
\item \form{\enode_i} could not be applied by
any inference rule. %, and then {\allEnt},
%% \item \form{\FV(\D') \not\subseteq \FV(\D)}
\item there exists a predicate \form{op_1(E) \in \D} such that
\form{op_2(E) \notin \D'} and
one of the following conditions holds:
\begin{itemize}
\item either \form{ \seppred{\code{P}}{E'{,} E{,}...}}
or \form{\sepnode{E'}{c}{E{,}..}} are in both sides
\item both \form{\seppred{\code{P}}{E'{,} E{,}...} \not\in \D}
and \form{\sepnode{E'}{c}{E{,}..} \not\in \D}
\end{itemize}
\item there exists a predicate \form{op_1(E) {\in} \D'}  such that \form{G(op_1(E)) {\in} \D} and
\form{op_2(E) {\notin} \D}.
\item there exist \form{\sepnodeF{x}{c_1}{\setvars{v}_1} \in \D}, \form{\sepnodeF{x}{c_2}{\setvars{v}_2} \in \D'}
such that \form{c_1 \not\equiv c_2} or \form{\setvars{v}_1 {\not}{\equiv} \setvars{v}_2}.
\end{enumerate}
\item Lastly, there exists an open leaf node \form{\enode_i} that could be applied
by an inference rule (e.g. $PR_i$), \code{is\_closed} returns the triple (\form{\unknown}, \form{\enode_i}, $PR_i$). 
\end{enumerate}

%The proof rules are classified into three groups: 
% normalization rules,
% axiom rules,
% and reduction rules. We discuss these rules in the next paragraphs.
%{\allEnt} works as follows.
%\begin{enumerate}
%\item
 %Every leaf node is applied with normalization rules
%(line 3) until it is in NF before
%being applied with either
%  evaluation rules (line 4), reduction rules
% (line 9) or backlink generation (line 5).
%\item A leaf node in NF which could not be applied with any rules (is stuck) is
%classified as {\invalid} and any model satisfying its LHS is a counter-model.
%\item  A leaf node that could be applied with an axiom rule is
%classified as {\valid}.
%\end{enumerate}

In the rest, we discuss the proof rules %% which presented in
%% Fig. \ref{fig.ent.comp.red}
and the auxiliary procedures in detail.

\hide{
In particular, 
  \code{\evalbase} 
is used to discharge
a solvable leaf node.
An entailment is solvable (and called a base entailment) if
both its LHS
and RHS
are {\em base} formulae.
If it is evaluated as {\valid}, it is marked as closed.
If the base formula contains unknown predicates,
\code{\evalbase} may generate relational assumptions
to constrain the pure properties in the LHS of the input entailment.
If the leaf node is not a base formula,
it uses procedure \code{abs}  check whether
the LHS of the node is unsatisfiable. If this is the case, it returns
{\valid}.
Next, {\allEnt} attempts to establish back-links over the remaining leaf
 nodes and marks the nodes as closed accordingly.
After that, at line 5, {\allEnt}
checks whether all leaf nodes are marked {\em closed}.
If it is the case, it
 returns \form{\valid} with the set of
relational assumptions.
If the tree contains a {\invalid} node, it stores the current tree
as a witness of invalidity (lines 9 and 11) prior to loading and examining
 another tree on the waiting list \form{\forest_{w}} (line 10).
Otherwise, it chooses an {\em open} leaf node to expand the proof
using inference rules.
When {\allEnt} applies an inference rule
to a leaf node \form{\enode} and reduces
this node into \form{n} new nodes \form{\enode_1},...,\form{\enode_n},
there are two scenarios.
If the rule is right unfolding \code{RU} (i.e.,
unfolding an
inductive predicate occurrence in RHS of a node \form{\enode}),
then it generates
up to \form{n} trees based on
the current tree  where each tree is augmented with one new node
\form{\enode_j} (\form{j\in \{1...n\}})
and one new edge \form{(\enode, RU_j, \enode_j)}.
If the rule applied is not right unfolding,
  it augments the tree with \form{n} new nodes
and \form{n} new edges \form{(\enode, R_1, \enode_1)},...,
\form{(\enode, R_n, \enode_n)}.
}

 \paragraph{Normalization} 
 \begin{figure}[tb]
%\scriptsize
\begin{center}
%% \begin{minipage}{1\textwidth}
%%   \begin{frameit}
\[
\AxiomC{$\begin{array}{l}
\entailNCyc{\D\subst{E}{x}}{\D'\subst{E}{x}}{\heap\subst{E}{x}}
\end{array}$}
\LeftLabel{\rulename{Subst}}
\UnaryInfC{$\begin{array}{l}
\entailNCyc{\D{\wedge}x{=}E}{\D'}{\heap}
\end{array}$}
\DisplayProof
\quad
\AxiomC{$\begin{array}{l}
    \entailNCyc{\D{\wedge}E_1{=}E_2}{\D'}{\heap}\\
\entailNCyc{\D{\wedge}E_1{\neq}E_2}{\D'}{\heap}
  \end{array}$}
\RightLabel{\scriptsize $\begin{array}{c}
    E_1{=}E_2,E_1{\neq}E_2 {\not\in} \pure ~\& \\
    \FV({E_1,E_2}) \subseteq 
     (\FV(\D) {\cup} \FV(\D'))^S
  \end{array}$}
\LeftLabel{\scriptsize ExM}
\UnaryInfC{$\begin{array}{l}
\entailNCyc{\D}{\D'}{\heap}
\end{array}$}
\DisplayProof
\]
\[
\AxiomC{\entailNCyc{\D}{\D'}{\heap}}
\LeftLabel{\scriptsize =L}
\UnaryInfC{$\begin{array}{l}
\entailNCyc{\D{\wedge}E{=}E}{\D'}{\heap}
\end{array}$}
\DisplayProof
\qquad
\AxiomC{$\begin{array}{l}
\entailNCyc{(\heap{\wedge}\pure)\subst{tg}{sc}}{\D'\subst{tg}{sc}}{\heap\subst{tg}{sc}}
  \end{array}$}
%\RightLabel{\scriptsize $\code{P} {\in}  {\csl}$}
\LeftLabel{\scriptsize LBase}
\UnaryInfC{$\begin{array}{l}
\entailNCyc{ \seppredF{\code{P}}{E{,}E{,}\setvars{B}{,}u{,}sc{,}tg}{\sep} \heap{\wedge}\pure}{\D'}{\heap}
\end{array}$}
\DisplayProof
\]
\[
\AxiomC{$\begin{array}{l}
\entailNCyc{ op(E){\sep}\heap{\wedge}\pure{\wedge}G(op(E)){\wedge}E{\neq}\nil}{\D'}{\heap}
  \end{array}$}
\RightLabel{\scriptsize $E{\neq}\nil {\notin} \pure$}
\LeftLabel{\scriptsize ${\neq}\nil$}
\UnaryInfC{$\begin{array}{l}
\entailNCyc{ op(E){\sep}\heap{\wedge}\pure{\wedge}G(op(E))}{\D'}{\heap}
\end{array}$}
\DisplayProof
\]
\[
\AxiomC{$\begin{array}{l}
\entailNCyc{op_1(E_1){\sep}op_2(E_2){\sep}\heap{\wedge}\pure{\wedge}E_1{\neq}E_2}{\D'}{\heap}
  \end{array}$}
\RightLabel{\scriptsize $ E_1{\neq}E_2 {\not\in}\pure
  \text{ and } \form{G(op_1(E_1))}, \form{G(op_2(E_2)) \in \pure}$}
\LeftLabel{\scriptsize ${\neq}\sep$}
\UnaryInfC{$\begin{array}{l}
\entailNCyc{op_1(E_1){\sep}op_2(E_2){\sep}\heap{\wedge}\pure}{\D'}{\heap}
\end{array}$}
\DisplayProof
\]
\caption{Normalization rules}
\label{fig.ent.comp.base}
\end{center} \savespace \savespace  \savespace
\end{figure}

An entailment is in the normal form (NF) if
 its LHS is in NF.
%To define the normalization form,
We write
\form{op(E)} to denote for either \form{\sepnodeF{E}{c}{\setvars{v}}} or
\form{\seppred{\code{P}}{E{,} F{,} \setvars{B}{,}\setvars{v}}}.
%%or \form{\seppredF{\code{tree}}{E{,}\setvars{v}}}.
 Furthermore,
the guard \form{G(op(E))} is defined by:
 %\[
%% \begin{array}{cc}
\form{G(\sepnodeF{E}{c}{\setvars{v}}) \defsym \true} %% & \qquad
and
\form{G(\seppred{\code{P}}{E{,} F{,} \setvars{B}{,}\setvars{v}}) \defsym E{\neq}F}. %% & \qquad
%% \form{G(\seppredF{\code{tree}}{E{,}\setvars{v}}) \defsym E{\neq}\nil}
%% \end{array}
%% \]

\begin{defn}[Normal Form]\label{defn.nf} A formula \form{\heap{\wedge} \phi {\wedge} \a}
is in normal form if:
\[
\begin{array}{clcl}
 1.& \form{op(E) \in \heap} \text{ implies } \form{G(op(E)) \in \phi} & \qquad
 4.& \form{E_1{=}E_2 \not\in \phi} \\
2.& \form{op(E) \in \heap} \text{ implies } \form{E{\neq}\nil \in \phi} & \qquad
5.& \form{E{\neq}E  \not\in \phi} \\
3.&  \form{op_1(E_1)\sep op_2(E_2) \in \heap} \text{ implies }
\form{E_1{\neq}E_2\in \phi} & \qquad
6.& \form{\a} \text{ is satisfiable}
\end{array}
 \]
\end{defn}
If \form{\D} is in NF and for any \form{\sstack, \sheaps \models \D}, then
 \form{\dom(\sheaps)} is uniquely defined by \form{\sstack}.

The normalisation rules are presented in Fig. \ref{fig.ent.comp.base}.
{Basically, {\allEnt} applies these rules to
a leaf exhaustively and transforms it into NF before other.}
Given an inductive predicate \form{\seppredF{\code{P}}{E,F,...}},
rule \rulename{ExM} excludes the middle
by doing case analysis for the predicate
between base-case (i.e., \form{E{=}F}) and recursive-case 
(i.e., \form{E{\neq}F}).
The normalization rule \form{{\neq}{\nil}} follows the following facts:
%\[
%\begin{array}{lcllcl}
$\sepnode{E}{c}{\anon}  \imply E{\neq}\nil $
and $\seppred{\code{P}}{E{,} F{,} \anon} {\wedge}E{\neq}F  \imply  E{\neq}\nil$.
 Similarly, rule \form{{\neq}{\sep}} follows the following facts:
%\[
%\begin{array}{l}
%\begin{array}{lcllcl}
$x{\pto}\anon {\sep} \seppred{\code{P}}{y{,} F{,} \anon} {\wedge}y{\neq}F  \imply  x{\neq}y$, 
$x{\pto}\anon {\sep} y{\pto}\anon  \imply x{\neq}y$, and
$\seppred{\code{P_i}}{x{,} F_1{,} \anon} {\sep} \seppred{\code{P_j}}{y{,} F_2{,} \anon} {\wedge}x{\neq}F_1 {\wedge}y{\neq}F_2  \imply x{\neq}y$.
%\end{array}
% \]

\paragraph{Axiom and Reduction} \label{ent.comp.base}

\begin{figure}[tb]
%\small
\begin{center}
%% \begin{minipage}{1\textwidth}
%%   \begin{frameit}
    \[
\AxiomC{}
\LeftLabel{\scriptsize Id}
\UnaryInfC{$\begin{array}{l}
 \D \wedge \pure ~\ent~ {\D }
\end{array}$}
\DisplayProof
\quad
 \AxiomC{}
 \LeftLabel{\scriptsize Emp}
% \RightLabel{\scriptsize $\a_1 \models a_2 $}
\UnaryInfC{$\begin{array}{l}
\entailNCyc{\emp{\wedge}\pure}{\emp{\wedge}\true}{\heap}
\end{array}$}
\DisplayProof
\quad
\AxiomC{}
\LeftLabel{\scriptsize Inconsistency}
\RightLabel{$
\begin{array}{l}
\pure ~{\models}~ \false \\
\end{array} $}
\UnaryInfC{$\begin{array}{l}
 \heap {\wedge} {\pure}  ~\ent~ {\D }
\end{array}$}
\DisplayProof
\]
\[
\AxiomC{\entailNCyc{\D}{\D'}{\heap}}
\LeftLabel{\scriptsize =R}
\UnaryInfC{$\begin{array}{l}
\entailNCyc{\D}{\D'{\wedge}E{=}E}{\heap}
\end{array}$}
\DisplayProof
\quad
\AxiomC{\entailNCyc{\D{\wedge}\pure}{\D'}{\heap}}
\LeftLabel{\scriptsize Hypothesis}
\RightLabel{\scriptsize $\pure \models \pure' $}
\UnaryInfC{$\begin{array}{l}
\entailNCyc{ \D{\wedge}\pure}{\D'{\wedge}\pure'}{\heap}
\end{array}$}
\DisplayProof
\quad
%% \AxiomC{
%% $\begin{array}{c}
%%     \entailNCyc{\D}{\D'\subst{tg}{sc}}{}
%% \end{array}$
%% }
%% \LeftLabel{\scriptsize ${\rulename{R{-}Subst}}$}
%% \RightLabel{ $\begin{array}{c}
%% \FV(\D') \not= \emptyset
%% %% \heap'\not\equiv\emp \text{ or } \\
%% %% \pure' \not\equiv \true
%%  %%    \heap'{\equiv}\seppredF{\code{tree}}{x{,}u{,}t} \text{ or } \heap'{\equiv}\sepnodeF{x}{c}{\anon}
%%  %% %%     (\{x\}\cup\setvars{t}) {\cap}\setvars{w}_1{=}\emptyset\\
%%  %% %%     (\{x\}\cup\setvars{t}) {\cap}\setvars{w}_2{=}\emptyset\\
%%   \end{array}$
%% }
%% \UnaryInfC{
%% $\entailNCyc{\D}{\D'\wedge tg{=}sc}{\heap}$
%% }
%% \DisplayProof
%% \qquad
\AxiomC{$\begin{array}{l}
\entailNCyc{\D}{\D'\wedge tg{=}sc}{\heap}
  \end{array}$}
%\RightLabel{\scriptsize $\code{P} {\in}  {\csl}$}
\LeftLabel{\scriptsize RBase}
\UnaryInfC{$\begin{array}{l}
\entailNCyc{\D}{ \seppredF{\code{P}}{E{,}E{,}\setvars{B}{,}u{,}sc{,}tg}{\sep} \D'}{\heap}
\end{array}$}
\DisplayProof
\]
\[
\AxiomC{
$\begin{array}{c}
    \entailNCyc{\heap_1{\wedge}{\pure}}{\heap_2} {\heap{\sep}\heap'} \quad
    \entailNCyc{\heap{\wedge}\pure}{\heap'{\land} \pure'} {\heap{\sep}\heap'}
\end{array}$
}
\LeftLabel{\scriptsize ${\sep}$}
\RightLabel{$\begin{array}{c}
\code{roots}(\heap_1)\cap\code{roots}(\heap)=\emptyset ~\&~
\FV(\heap_2) {\subseteq} \FV(\heap_1{\wedge}\pure){\cup}\{\nil\}\\
\&~\FV(\heap') {\subseteq} \FV(\heap{\wedge}\pure){\cup}\{\nil\}
\end{array}$}
%\RightLabel{LHS is in NF %% $\begin{array}{c}
 %%    \heap'{\equiv}\seppredF{\code{tree}}{x{,}u{,}t} \text{ or } \heap'{\equiv}\sepnodeF{x}{c}{\anon}
 %% %%     (\{x\}\cup\setvars{t}) {\cap}\setvars{w}_1{=}\emptyset\\
 %% %%     (\{x\}\cup\setvars{t}) {\cap}\setvars{w}_2{=}\emptyset\\
 %%  \end{array}$
%}
\UnaryInfC{
$\entailNCyc{\heap_1 {\sep} \heap{\wedge}\pure}{\heap_2 {\sep} \heap' \land \pure'}{\heap}$
}
\DisplayProof
\]
\[
\AxiomC{
$
\begin{array}{l}
  {\seppredF{\code{Q_1}}{E_1{,}B}^{\textcolor{blue}{0}}{\sep}\seppredF{\code{Q_2}}{E_2{,}X}^{\textcolor{blue}{0}}{\sep}\seppredF{\code{P}}{X{,}F{,}\setvars{B}{,}u{,}sc'{,}tg}^{\textcolor{blue}{k}}   {\sep}\D_1{\wedge}} x{\neq}F_3{\wedge}\pure_0 \\ \quad   ~\ent~{ \seppredF{\code{Q}}{x{,}F_3{,}\setvars{B}{,}u{,}sc{,}tg_2} {\sep} \heap_2{\wedge}\pure_2}  \\
\end{array}
$
}
\RightLabel{\scriptsize  $\sepnodeF{x}{c}{\_}{\not\in}\heap_2$
}
\LeftLabel{\scriptsize Frame}
\UnaryInfC{$\begin{array}{l}
    \entailNCyc{\seppredF{\code{P}}{x{,}F{,}\setvars{B}{,}u{,}sc{,}tg}^{\textcolor{blue}{k}}{\sep}\D_1{\wedge}x{\neq}F_3
      }{\sepnodeF{x}{c}{X{,}E_1{,}E_2{,}u{,}sc'} {\sep} \heap_2{\wedge}\pure_2}{\heap}
\end{array}$}
\DisplayProof
\]
%% \[
%% \AxiomC{$\begin{array}{l}
%% \entailNCyc{\D_1}{\heap_2{\wedge}\pure_2\wedge tg{=}sc}{\heap}
%%   \end{array}$}
%% %\RightLabel{\scriptsize $\code{P} {\in}  {\csl}$}
%% \LeftLabel{\scriptsize RBase}
%% \UnaryInfC{$\begin{array}{l}
%% \entailNCyc{\D_1}{ \seppredF{\code{P}}{E{,}E{,}\setvars{B}{,}u{,}sc{,}tg}{\sep} \heap_2{\wedge}\pure_2}{\heap}
%% \end{array}$}
%% \DisplayProof
%% \quad
%% \AxiomC{$\begin{array}{l}
%% \entailNCyc{\D_1}{\heap_2{\wedge}\pure_2\wedge tg{=}sc}{\heap}
%%   \end{array}$}
%% \LeftLabel{\scriptsize RBase2}
%% \UnaryInfC{$\begin{array}{l}
%% \entailNCyc{\D_1}{ \seppredF{\code{tree}}{\nil{,}u{,}sc{,}tg}{\sep} \heap_2{\wedge}\pure_2}{\heap}
%% \end{array}$}
%% \DisplayProof
%% \]
\[
\AxiomC{$\begin{array}{l}
    { \sepnodeF{x}{c}{X{,}E_1{,}E_2{,}u{,}sc'} {\sep}\heap_1{\wedge}\pure_1{\wedge}x{\neq}F} \\
 \quad \ent~ { \sepnodeF{x}{c}{X{,}E_1{,}E_2{,}u{,}sc'}{\sep}\seppredF{\code{Q_1}}{E_1{,}B}{\sep}\seppredF{\code{Q_2}}{E_2{,}X}{\sep}\seppredF{\code{P}}{X{,}F{,}\setvars{B}{,}u{,}sc'{,}tg}{\sep}\heap_2 {\wedge}\pure_2{\wedge}\pure_0}
      %% \setvars{t}  \text{ are fresh}
  \end{array}$}
\RightLabel{\scriptsize $\begin{array}{c} \dagger
    %%\code{P} {\in}  {\csl} \\
%%      (\{x\}{\cup}\setvars{t})) {\cap} \setvars{w}_1 {=}\emptyset \\
%%      (\{x,F\}{\cup}\setvars{v}){\cap} \setvars{w}_2 {=}\emptyset
 \end{array}$}
\LeftLabel{\scriptsize RInd}
\UnaryInfC{$\begin{array}{l}
\entailNCyc{\sepnodeF{x}{c}{X{,}E_1{,}E_2{,}u{,}sc'} {\sep}\heap_1{\wedge}\pure_1{\wedge}x{\neq}F}{ \seppredF{\code{P}}{x{,}F{,}\setvars{B}{,}u{,}sc{,}tg}{\sep} \heap_2{\wedge}\pure_2}{\heap}
\end{array}$}
\DisplayProof
\]
\[
\AxiomC{
$
\begin{array}{l}
  %\entailCyc{(\heap{\wedge}\pure)\sub}{(\seppredF{\code{P}}{x,F_2,\setvars{B}{,}s{,}t'} {\sep} \heap_2{\wedge}\pure_2)\sub}{\heap\sub} \qquad \quad \sub{\equiv}[F/x,s/t] 
  { \sepnodeF{x}{c}{X{,}E_1{,}E_2{,}u{,}sc'}{\sep}\seppredF{\code{Q_1}}{E_1{,}B}^{\textcolor{blue}{0}}{\sep}\seppredF{\code{Q_2}}{E_2{,}X}^{\textcolor{blue}{0}}{\sep}\seppredF{\code{P}}{X{,}F{,}\setvars{B}{,}u{,}sc'{,}tg}^{\textcolor{blue}{k{+}1}}   {\sep}\D_1{\wedge}} x{\neq}F_3{\wedge}\pure_0 \\ \quad  ~\ent~{ \seppredF{\code{Q}}{x{,}F_3{,}\setvars{B}{,}u{,}sc{,}tg_2} {\sep} \heap_2{\wedge}\pure_2}  \\
\end{array}
$
}
\RightLabel{\scriptsize $\sharp$ %% $\seppredF{\code{P}}{x{,}F{,}\setvars{B}{,}u{,}sc{,}tg}{\not\in}\heap_2$
}
\LeftLabel{\scriptsize LInd}
\UnaryInfC{$\begin{array}{l}
    \entailNCyc{\seppredF{\code{P}}{x{,}F{,}\setvars{B}{,}u{,}sc{,}tg}^{\textcolor{blue}{k}}{\sep}\D_1{\wedge}x{\neq}F_3
      }{\seppredF{\code{Q}}{x{,}F_3{,}\setvars{B}{,}u{,}sc{,}tg_2} {\sep} \heap_2{\wedge}\pure_2}{\heap}
\end{array}$}
\DisplayProof
\]
\caption{Reduction rules
 %% $\ddagger: %\begin{array}{c} %\code{P}{\in}  {\csl}  %% \quad (\form{\code{P} {\orderstarp} \code{Q}}
 %%  %% ~or~
 %%  %% \code{Q} \equiv \code{P})
 %%  %\quad
 %%  \seppredF{\code{P}}{x{,}F{,}\setvars{v}{,}\anon{,}\anon} {\not\in} \heap_2{\sep}\heap \quad
 %%  %% (\{x,F\}{\cup}\setvars{v}) {\cap} \setvars{w}_1 {=}\emptyset \quad
 %%  %%    (\{x,F_2\}{\cup}\setvars{v}_2){\cap} \setvars{w}_2 {=}\emptyset
 %%  %  \end{array}
 %%  $; and \\
 (where $\sharp{:}~\seppredF{\code{P}}{x{,}F{,}\setvars{B}{,}u{,}sc{,}tg}{\not\in}\heap_2$, $\dagger{:}~ \sepnodeF{x}{c}{X{,}E_1{,}E_2{,}u{,}sc'}  {\not\in} \heap_2$)
}
\label{fig.ent.comp.red}
\end{center}\savespace \savespace 
\end{figure}

Axiom rules include \rulename{Emp},
 \rulename{Inconsistency} and \rulename{Id}
 presented in Fig. \ref{fig.ent.comp.red}.
If each of these rules is applied into a leaf node, the node is
evaluated as
\form{\valid} and marked as closed.
\label{ent.comp.search}
The remaining ones in Fig. \ref{fig.ent.comp.red} are reduction rules. 

To simplify the presentation, the unfoldings
in rules \rulename{Frame}, \rulename{RInd}, and \rulename{LInd}
are applied with the following definition of inductive predicates:
\saveone\[\saveone
\begin{array}{l}
  \seppredF{\code{P}}{x{,}F{,}\setvars{B}{,}u{,}sc{,}tg} \equiv \emp {\wedge}x{=} F {\wedge}sc{=}tg \\
\quad \vee~ \exists X{,} sc'{,}d_1{,} d_2 . \sepnodeF{x}{c}{X{,}d_1{,}d_2{,}u{,}sc}{\sep}\seppredF{\code{Q_1}}{d_1{,}B}{\sep}\seppredF{\code{Q_2}}{d_2{,}X}{\sep}\seppredF{\code{P}}{X{,}F{,}\setvars{B}{,}u{,}sc'{,}tg}{\wedge}\pure_0
\end{array}
\]
where \form{B {\in} \setvars{B}}, the matrix \form{\heap'} contains two nested predicates
\form{Q_1} and \form{Q_2}, and the heap cell \form{c \in \Dns}
is defined as
\form{\code{data}~ c \{c ~ next;c_{1}~down_1;c_{2}~down_2;\tau_s~scdata;\tau_u ~udata\}}
where \form{c_{1},c_{2} {\in} \Dns},
\form{down_1}, \form{down_2} fields are for the nested
predicates in the matrix heaps,
\form{udata} field is for the transitivity data, and \form{scdata} field are for %% either injectivity or
ordering data.
The formalism of these rules for general form of the matrix heaps \form{\heap'} is presented in
\repconf{App. \ref{app.ent.red}.}{\cite{Loc:Lin:TR:2019}.}

 \rulename{{=}R} and \rulename{Hypothesis}
eliminate  pure constraints in the RHS.
In rule \rulename{{\sep}}, 
\form{\code{roots}(\heap)} is defined inductively as:
\form{\code{roots}(\emp){\equiv}\{\}}, \form{\code{roots}(r\pto\anon){\equiv}\{r\}}, \form{\code{roots}(P(r,F,..)){\equiv}\{r\}}
and \form{\code{roots}(\heap_1{\sep}\heap_2)\equiv\code{roots}(\heap_1)\cup\code{roots}(\heap_2)}.
This rule 
is applied in three ways. First, it is applied into an entailment which is of the form
\form{\heap {\wedge} \pure ~\ent~ \heap {\wedge} \pure'}. It
matches and discards
the identified heap predicates between the two sides so as to generate
a premise with empty heaps. As a result, this premise may be applied with the axiom
rule \rulename{EMP}.
Secondly, it is applied into an entailment whose LHS is a base formula
e.g., \form{\sepnodeF{x_1}{c_1}{\setvars{v}_1}{\sep}...{\sep}\sepnodeF{x_n}{c_n}{\setvars{v}_n} {\wedge} \pure~\ent~ \heap' {\wedge} \pure'}.
For each points-to predicate \form{\sepnodeF{x_i}{c_i}{\setvars{v}_i} {\in} \heap'},
{\allEnt} searches for one points-to predicate \form{\sepnodeF{x_j}{c_j}{\setvars{v}_j}}
in the LHS such that \form{\sepnodeF{x_j}{c_j}{\setvars{v}_j} \equiv \sepnodeF{x_i}{c_i}{\setvars{v}_i}}.
Likewise, for each occurrence of inductive predicates \form{\seppred{\code{P}}{r{,} F{,} \setvars{B}{,}u{,}sc{,}tg}}
 in the RHS, {\allEnt} searches for a points-to predicate \form{r\pto\anon}
in the LHS 
such that rule \rulename{RInd}
could be applied.
If any of these searches fails, {\allEnt} decides the conclusion as {\invalid}.
Lastly, it is applied into an entailment that is of the form \form{\D_1 \sep \D ~\ent~ \D_2 \sep \D'} where 
either \form{\D_1 ~\ent~\D_2} or \form{\D ~\ent~ \D'} could be linked back into an internal node.

Rule \rulename{LInd} unfolds the inductive predicates
in the LHS.
We notice that every LHS of entailments in
this rule also captures the unfolding numbers
for subterm relationship and generates the progressing point in the cyclic proofs afterward.
These numbers are essential for our system to construct cyclic proofs.
This rule is applied in a {\em depth-first}
 manner i.e., if there are more than one
occurrences of inductive predicates in the LHS that could be applied by this rule,
the one with the greatest unfolding number is chosen.
We emphasize that the last five rules
still work well when the predicate in the RHS
 contains only
 a subset of the local properties wrt. the predicate in the LHS.

\paragraph{Back-Link Generation}\label{ent.comp.cyclic}

Procedure {\lbent} generates a back-link as follows.
In a pre-proof, given a path containing a back-link, say \form{\enode_1,\enode_2,..,\enode_m} where
\form{\enode_1} is a companion and \form{\enode_m} a bud, then
\form{\enode_1} is in NF and of the following form:

\begin{itemize}
\item \form{\enode_1{\equiv}{\seppredF{\code{P}}{x{,}F{,}\setvars{B}{,}u{,}sc{,}tg}^{k}{\sep}\heap{\wedge}\pure{\wedge}x{\neq}F{\wedge}x{\neq}\nil} 
~ \ent~{\seppredF{\code{Q}}{x{,}F_2{,}\setvars{B}{,}u{,}sc{,}tg_2}{\sep}\heap'{\wedge}\pure'}}.
\item \form{\enode_2} is obtained from applying \rulename{LInd} into \form{\enode_1}. \form{\enode_2} is of the form:
\[
\begin{array}{l}
\form{{\sepnodeF{x}{c}{X,\setvars{p},{,}u{,}sc} {\sep}\heap' {\sep} \seppredF{\code{P}}{X{,}F{,}\setvars{B}{,}u{,}sc'{,}tg}^{k{+}1}{\sep}\heap{\wedge}\pure{\wedge}x{\neq}F{\wedge}x{\neq}\nil{\wedge}\pure_1} \\
\quad \ent~{\seppredF{\code{Q}}{x{,}F_2{,}\setvars{B}{,}u{,}sc{,}tg_2}{\sep}\heap'{\wedge}\pure'}}
\end{array}
\]
We remark that
\form{sc \diamond sc' \in \pure_1}
 and if \form{k \geq 1} then \form{sc_i \diamond sc \in \pure}
\item \form{\enode_3}, .., \form{\enode_{m{-}4}} are obtained from applications of normalization rules
in order to normalize the LHS of \form{\enode_2} due to the presence of \form{\heap'}.
We note that as the roots of inductive predicates in \form{\heap'} are fresh variables,
the applications of the normalization rules above do not affect the RHS of \form{\enode_2}.
That means RHS of \form{\enode_3}, .., \form{\enode_{m{-}4}} are the same with the RHS of \form{\enode_2}.
As a result,
\form{\enode_{m{-}4}} is of the form:
\[
\begin{array}{l}
\form{{\sepnodeF{x}{c}{X,\setvars{p},{,}u{,}sc} {\sep}\heap_1'' {\sep} \seppredF{\code{P}}{X{,}F{,}\setvars{B}{,}u{,}sc'{,}tg}^{k{+}1}{\sep}\heap{\wedge}\pure{\wedge}x{\neq}F{\wedge}x{\neq}\nil{\wedge}\pure_1{\wedge}\pure_2} \\
\quad \ent~{\seppredF{\code{Q}}{x{,}F_2{,}\setvars{B}{,}u{,}sc{,}tg_2}{\sep}\heap'{\wedge}\pure'}}
\end{array}
\]
where \form{\heap_1''} may be \form{\emp} and \form{\pure_2} is a conjunction of disequalities coming from \rulename{ExM}.
\item \form{\enode_{m{-}3}} is obtained from  application of \rulename{ExM} over \form{x} and \form{F_2}
and of the form:
\[
\begin{array}{l}
\form{{\sepnodeF{x}{c}{X,\setvars{p},{,}u{,}sc} {\sep}\heap_1'' {\sep} \seppredF{\code{P}}{X{,}F{,}\setvars{B}{,}u{,}sc'{,}tg}^{k{+}1}{\sep}\heap{\wedge}\pure{\wedge}x{\neq}F{\wedge}x{\neq}\nil{\wedge}\pure_1{\wedge}\pure_2} \\
\quad {\wedge}x{\neq}F_2~\ent~{\seppredF{\code{Q}}{x{,}F_2{,}\setvars{B}{,}u{,}sc{,}tg_2}{\sep}\heap'{\wedge}\pure'}}
\end{array}
\]
(For the case \form{x{=}F_2}, the rule \rulename{ExM} is kept applying until either \form{F\equiv F_2},
 that is two sides are reaching the end of the same heap segment, or it is stuck.)
\item  \form{\enode_{m{-}2}} is obtained from  application of \rulename{RInd} and is of the form:
\saveone\[\saveone
\begin{array}{l}
\form{{\sepnodeF{x}{c}{X,\setvars{p},{,}u{,}sc} {\sep}\heap_1'' {\sep} \seppredF{\code{P}}{X{,}F{,}\setvars{B}{,}u{,}sc'{,}tg}^{k{+}1}{\sep}\heap{\wedge}\pure{\wedge}x{\neq}F{\wedge}x{\neq}\nil{\wedge}\pure_1{\wedge}\pure_2} \\
\quad {\wedge}x{\neq}F_2~\ent~{\sepnodeF{x}{c}{X{,}\setvars{p}{,}u{,}sc}{\sep}\heap_2'' {\sep}\seppredF{\code{Q}}{X{,}F_2{,}\setvars{B}{,}u{,}sc'{,}tg_2}{\sep}\heap'{\wedge}\pure'}{\wedge}\pure_2'}
\end{array}
\]
\item  \form{\enode_{m{-}1}} is obtained from  application of \rulename{Hypothesis} to eliminate \form{\pure_2'}
 (otherwise, it is stuck) and is of the form:
\saveone\[\saveone
\begin{array}{l}
\form{{\sepnodeF{x}{c}{X,\setvars{p},{,}u{,}sc} {\sep}\heap_1'' {\sep} \seppredF{\code{P}}{X{,}F{,}\setvars{B}{,}u{,}sc'{,}tg}^{k{+}1}{\sep}\heap{\wedge}\pure{\wedge}x{\neq}F{\wedge}x{\neq}\nil{\wedge}\pure_1{\wedge}\pure_2} \\
\quad {\wedge}x{\neq}F_2~\ent~{\sepnodeF{x}{c}{X{,}\setvars{p}{,}u{,}sc}{\sep}\heap_2'' {\sep}\seppredF{\code{Q}}{X{,}F_2{,}\setvars{B}{,}u{,}sc'{,}tg_2}{\sep}\heap'{\wedge}\pure'}}
\end{array}
\]
\item \form{\enode_{m}} is obtained from  application of \rulename{\sep} and is of the form:
\saveone\[\saveone
\begin{array}{l}
\form{ \seppredF{\code{P}}{X{,}F{,}\setvars{B}{,}u{,}sc'{,}tg}^{k{+}1}{\sep}\heap{\wedge}\pure{\wedge}x{\neq}F{\wedge}x{\neq}\nil{\wedge}\pure_1{\wedge}\pure_2 {\wedge}x{\neq}F_2} \\
\quad \ent~\form{\seppredF{\code{Q}}{X{,}F_2{,}\setvars{B}{,}u{,}sc'{,}tg_2}{\sep}\heap'{\wedge}\pure'}
\end{array}
\]
\end{itemize}
When \form{k \geq 1} it is always possible to
link \form{\enode_m} back to \form{\enode_1} through 
the substitution is \form{\sub{\equiv}[x/X,sc/sc']} after weakening some pure
constraints in its LHS.
\hide{A routine of rule applications to generate a pre-proof in our system typically is:
\form{(\rulename{EXM}, \rulename{LInd}, \rulename{{\neq}\nil}, \rulename{RInd}, (\rulename{HYPOTHESIS})^*,
 (\rulename{EXM})^*, (\rulename{{\neq}\sep})^*)^+,\sep}
where \form{R^*} (resp. \form{R^+}) means the sequence of rules \form{R} can be applied
 zero (resp. one) or more times.
%% and \form{R^+} means the sequence of rules in  \form{R} can be applied more than or equal to one time.
Here, rule \form{\code{\scriptsize HYPOTHESIS}} is applied to eliminate
pure formulas capturing the local properties of the occurrence of inductive predicates in the RHS.
As a result, a pair of companion and bud in the system of spatial-only definitions are as follows.
\[
\begin{array}{rl}
\text{bud:} \qquad & \qquad
\form{ \seppred{\code{P_1}}{X_1{,} F_1{,} \setvars{B}_1}^{l_1} {\sep} ...{\sep}
\seppred{\code{P_m}}{X_m{,} F_m{,} \setvars{B}_m}^{l_m} {\sep} \D_a{\wedge}\pure_1 \\
& \qquad \qquad \ent~ {\seppred{\code{P_1}}{r_1{,} F_{2_1}{,} \setvars{B}_1} {\sep} ...{\sep}
\seppred{\code{P_m}}{r_m{,} F_{2_m}{,} \setvars{B}_m}  {\sep} \D_c}}\\
\text{companion:} & \qquad
\form{ \seppred{\code{P_1}}{x_1{,} F_1{,} \setvars{B}_1}^{k_1} {\sep} ...{\sep}
\seppred{\code{P_m}}{x_m{,} F_m{,} \setvars{B}_m}^{k_m} {\sep} \D_a \\
&  \qquad\qquad \ent~{\seppred{\code{P_1}}{t_1{,} F_{2_1}{,} \setvars{B}_1} {\sep} ...{\sep}
\seppred{\code{P_m}}{t_m{,} F_{2_m}{,} \setvars{B}_m} {\sep} \D_c}}
\end{array}
\]
where \form{\heap} is a conjunction of either points-to or inductive predicates,
\form{k_1,...,k_m {\geq}0},
\form{l_1{-}k_1}, ..., \form{l_m{-}k_m} are positive numbers. %% and
%%  at least one of \form{l_i{-}k_i} (\form{i \in \{1,..,m\}})
%% is positive.
Furthermore, the substitution is \form{\sub{\equiv}[x_1/X_1,...,x_m/X_m]}.
For those inductive predicates with parameters specifying local properties, the substitution are extended with
the substitution of the source local
parameters (e.g., \form{[sc/sc']}).
}
\hide{In the following, we briefly describe a cyclic proof technique
enhancing the proposal in \cite{Brotherston:05}.

\begin{defn}[Pre-proof]
A {\em pre-proof} of entailment \form{\D_{a} {\ent_{\lemstore}} \D_{c}}
is a pair ($\utree{i}$, $\mathcal L$) where $\utree{i}$ is a derivation tree
and $\mathcal L$ is a back-link function
 such that:
%%\begin{itemize}
%%  \item
the root of $\utree{i}$  is \form{\D_{a} {\ent_{\lemstore}} \D_{c}};
%%  \item
for every edge from \form{E_i} to \form{E_j} in  $\utree{i}$ , \form{E_i} is
 a conclusion of an
inference rule with a premise $E_{j}$.
There is a back-link between \form{E_c} and \form{E_l}
if there exists
   $\mathcal L$(\form{E_l}) = \form{E_c} (i.e., \form{E_{c}=E_{l}\,\theta} with some substitution $\theta$)
;
 %% \item
and for every leaf \form{E_l}, \form{E_l} is an axiom rule (without conclusion). 
\end{defn}
\noindent If $\mathcal L$(\form{E_l}) {=} \form{E_c}, \form{E_{l}}
 (resp. \form{E_{c}}) is referred as a bud (resp. companion).

\begin{defn}[Trace]
\noindent Let ($\utree{i}$, $\mathcal L$) be a pre-proof of \form{\D_{a} {\ent_{\lemstore}} \D_{c}}; $({\D_{a_i} {\ent_{\lemstore_i}} \D_{c_i} })_{i{\geq}0}$ be a path of $\utree{i}$.
A trace following
$({\D_{a_i} {\ent_{\lemstore_i}} \D_{c_i} })_{i{\geq}0}$ is a sequence $(\alpha_i)_{i{\geq}0}$ such that each $\alpha_i$  (for all
$i{\geq}0$) is an instance of the predicate $\code{P}(\setvars{t})$ in the formula ${\D_{a_i}}$, and either:
\begin{itemize}%[nolistsep,topsep=0pt,leftmargin=*]
\item $\alpha_{i{+}1}$ is the sub-formula containing an instance of $\code{P}(\setvars{t})$
  in $\D_{a_{i+1}}$;
\item or ${\D_{a_i} {\ent_{\lemstore_i}} \D_{c_i} }$ is the conclusion of an unfolding rule,
$\alpha_i$ is an instance predicate $\code{P}(\setvars{t})$ in $\D_{a_i}$ and
$\alpha_{i+1}$ is a sub-formula
\form{\D}[\setvars{t}/\setvars{v}] which is
a definition rule of the inductive predicate
$\code{P}(\setvars{v})$. i is a progressing point of the trace.
\end{itemize}
\end{defn}

To ensure that a pre-proof is sound,
a global
{\em soundness condition} must be imposed to guarantee well-foundedness.

\begin{defn}[Cyclic proof]\label{cyclic}
A pre-proof ($\utree{i}$, $\mathcal L$) of \form{\D_{a} {\ent_{\lemstore}} \D_{c}} is a cyclic proof if, for every infinite path $(\form{\D_{a_i} {\ent_{\lemstore_i}} \D_{c_i}})_{i{\geq}0}$ of $\utree{i}$,
there is a tail of the path $p{=}(\form{\D_{a_i} {\ent_{\lemstore_i}} \D_{c_i}})_{i{\geq}n}$
 such that there is a trace following $p$ which has
infinitely progressing points.
\end{defn}
}

\subsection{Illustrative Example}\label{ent.example}

We illustrate our system through the following example:
\[
\begin{array}{l}
 \enode_0{:}~ \seppredF{\code{lls}}{x{,}\nil{,}mi{,}ma}^0
                \wedge x{\neq}\nil
  ~ \ent ~
  \seppredF{\code{llb}}{x{,}\nil{,}mi}  \\
\end{array}
\]
where the sorted linked-list \code{lls} %% and constant linked-list \code{lla}
(\form{mi} is the minimum value and \form{ma} is the maximum value)
is defined in Sect. \ref{spec.deci.ent} 
  and  \code{llb} defines
 singly-linked lists whose values are greater than
 or equal to a constant number. %and values of two continuous nodes are different by one.
Particularly, predicate \code{llb} is defined as follows.
\[
\begin{array}{l}
\seppred{\code{pred~llb}}{r{,}F{,}b} ~{\equiv}~ \emp {\wedge} r{=}F \\
 \quad \vee ~  \exists X_{tl}{,}d. \sepnode{r}{c_4}{X_{tl}{,}d} \sep \seppred{\code{llb}}{X_{tl}{,}F{,}b} {\wedge} r{\neq}F \wedge
 b {\leq} d\\
\end{array}
\]
Since the LHS is stronger than the RHS, this entailment is valid.
Our system could generate the cyclic proof (shown in Fig. \ref{mov.ctree}) to prove the validity of \form{\enode_0}.
In the following, we present step-by-step
to show how the proof was created.
%% where %% \rulename{HYPO} denotes for \rulename{HYPOTHESIS} and
%% its nodes are as follows.
Firstly, \form{\enode_{0}}, which is in NF, is applied with rule \rulename{LInd} to
unfold predicate \form{\seppredF{\code{lls}}{x{,}\nil{,}mi{,}ma}^0} and obtain
\form{\enode_{1}} as:
\[
\begin{array}{ll}
\enode_{1}{:}& \sepnodeF{x}{c_4}{X,m'} \sep
  \seppredF{\code{lls}}{X{,}\nil{,}m'{,ma}}^{\textcolor{red}{1}}
                \wedge x{\neq}\nil \wedge mi{\leq}m' %{\wedge} mi{=}m'{-}1
 ~ \ent ~
  \seppredF{\code{llb}}{x{,}\nil{,}mi}
\end{array}
\]
We remark that the unfolding number of the recursive predicate \code{lls} in the LHS is increased by \form{1}.
Next, our system normalizes \form{\enode_{1}} by applying rule \rulename{ExM} into \form{X} and \form{\nil}
to generate two children \form{\enode_{2}} and \form{\enode_{3}} as follows.
\[
\begin{array}{ll}
  \enode_{2}{:}& \sepnodeF{x}{c_4}{X,m'} \sep
  \seppredF{\code{lls}}{X{,}\nil{,}m'{,}ma}^1
                \wedge x{\neq}\nil \wedge mi{\leq}m' %{\wedge} mi{=}m'{-}1
\wedge X{=}\nil \\
& \quad \ent ~
  \seppredF{\code{llb}}{x{,}\nil{,}mi} \\
  \enode_{3}{:}& \sepnodeF{x}{c_4}{X,m'} \sep
  \seppredF{\code{lla}}{X{,}\nil{,}m'{,}ma}^1
                \wedge x{\neq}\nil \wedge mi{\leq}m' %% {\wedge} mi{=}m'{-}1
                \wedge X{\neq}\nil
 \\ &
  \quad \ent ~
  \seppredF{\code{llb}}{x{,}\nil{,}mi}
\end{array}
\]
For the left child, it applies normalization rules to obtain 
\begin{figure}[t!]
\begin{center}
\begin{tikzpicture}[node distance=18mm,level 1/.style={sibling distance=22mm},
      level 2/.style={sibling distance=7mm},
                        level distance=22pt, draw]
  \tikzstyle{every state}=[draw,text=black]

\node (E1)                    {\textcolor{blue}{$\enode_0$}};
   \node         (F1) [below =2mm  of E1] {\textcolor{black}{$\enode_{1}$}};
   \node         (B2) [below left=1mm and 4mm of F1] {$\enode_{2}$};
   \node         (C2) [below right=1mm and 4mm of F1] {\textcolor{black}{$\enode_{3}$}};
   \node         (D2) [below left=2mm and 1mm of B2] {\textcolor{black}{$\enode_{4}$}};
   \node         (E2) [below left=2mm and 2mm   of D2] {\textcolor{black}{$\enode_{5}$}};
   \node         (F2) [below left=2mm and 2mm  of E2] {\textcolor{black}{$\enode_{6}$}};
   \node         (G2) [below =2mm  of F2] {\textcolor{black}{$\enode_{7}$}};
   \node         (H2) [below =2mm  of G2] {\textcolor{black}{$\enode_{8}$}};
   %% \node         (K2) [below =2mm  of H2] {\textcolor{black}{}};
   \node         (G3) [below right=4mm and 1mm   of C2] {\textcolor{black}{$\enode_{9}$}};
    \node         (H3) [below right=4mm and 1mm   of G3] {\textcolor{black}{$\enode_{10}$}};
\node         (I1) [below left=1mm and 4mm of H3] {$\enode_{11}$};
   \node       (I2) [below right=1mm and 4mm of H3] {\textcolor{blue}{$\enode_{12}$}};

  \path     (E1) edge             node [left] {{\scriptsize LInd}} (F1)
            (F1) edge             node [pos=0.1,left] {{\scriptsize ExM}} (B2)
            edge              node[pos=0.1,right]  { {\scriptsize ExM}} (C2)
            (B2) edge             node [left] {{\scriptsize Subst}} (D2)
            (D2) edge             node [left] {{\scriptsize LBase}} (E2)
            (E2) edge             node [left] {{\scriptsize RInd}} (F2)
            (F2) edge             node [right] {{\scriptsize Hypothesis{$+$}RBase}} (G2)
            (G2) edge             node [left] {{\scriptsize $\sep$}} (H2)
            %% (H2) edge             node [left] {{\scriptsize EMP}} (K2)
            (C2) edge             node [left] {{\scriptsize ${\neq}\sep${+}RInd}} (G3)
            (G3) edge             node [left] {{\scriptsize Hypothesis}} (H3)
        %% (G) edge             node [pos=0.2,left] {{\scriptsize CC}} (H1)
        %%     edge             node [pos=0.2,right] {{\scriptsize CC}} (H2)
            (H3) edge             node [pos=0.1,left] {{\scriptsize \form{\sep}}} (I1)
            edge              node[pos=0.1,right]  { {\scriptsize \form{\sep}}} (I2)
            (I2)  edge [->,bend right=45,dotted]  node[right] {\scriptsize \form{[x/X,mi/m']}} (E1)
;
\end{tikzpicture}
\end{center}\savespace
\caption{Cyclic Proof of \form{\seppredF{\code{lls}}{x{,}\nil{,}mi{,ma}}^0
                {\wedge}x{\neq}\nil
  ~ \ent ~
  \seppredF{\code{llb}}{x{,}\nil{,}mi}}.}\label{mov.ctree}
  \savespace 
\end{figure}

\form{\enode_4}  (substitute \form{X} by \form{\nil})
and then
\form{\enode_5}, by \rulename{LBase} to unfold \form{\seppredF{\code{lls}}{\nil{,}\nil{,}m'{,}ma}^1} to the base case,
  as:
\[
\begin{array}{ll}
\enode_{4}{:}& \sepnodeF{x}{c_4}{\nil,m'}\sep
  \seppredF{\code{lls}}{\nil{,}\nil{,}m'{,}ma}^1
                \wedge x{\neq}\nil \wedge mi{\leq}m' %{\wedge} mi{=}m'{-}1
  ~ \ent ~
  \seppredF{\code{llb}}{x{,}\nil{,}mi}  \\
\enode_{5}{:}& \sepnodeF{x}{c_4}{\nil,ma}
                \wedge x{\neq}\nil \wedge mi{\leq}ma  %{\wedge} mi{=}ma{-}1
  ~ \ent ~
  \seppredF{\code{llb}}{x{,}\nil{,}mi} 
\end{array}
\]
Now, \form{\enode_5} is in NF. {\toolname} applies \rulename{RInd}
and then \rulename{RBase} to % instantiate the predicate
\code{llb}
in the RHS as:
\[
\begin{array}{ll}
\enode_{6}{:}& \sepnodeF{x}{c_4}{\nil,ma}
                \wedge x{\neq}\nil \wedge mi{\leq}ma  %{\wedge}mi{=}ma{-}1
  \\ &
  \quad \ent ~ \sepnodeF{x}{c_4}{\nil,ma}\sep
  \seppredF{\code{llb}}{\nil{,}\nil{,}mi} \wedge mi{\leq}ma\\
\enode_{6'}{:}& \sepnodeF{x}{c_4}{\nil,ma}
                \wedge x{\neq}\nil \wedge mi{\leq}ma %{\wedge} mi{=}ma{-}1
  ~ \ent ~ \sepnodeF{x}{c_4}{\nil,ma} {\wedge}mi{\leq}ma
\end{array}
\]
After that, as \form{mi{\leq}ma \imply mi{\leq}ma}, \form{\enode_{6'}} is applied with \rulename{Hypothesis}
to %% eliminate the pure constraint in the RHS and
obtain \form{\enode_{7}}.
\[
\enode_{7}{:}~ \sepnodeF{x}{c_4}{\nil,ma}
                \wedge x{\neq}\nil \wedge mi{\leq}ma % {\wedge}mi{=}ma{-}1
  ~ \ent ~ \sepnodeF{x}{c_4}{\nil,ma}
\]
As the LHS of \form{\enode_{7}} is in NF and a base formula,
 it is sound and complete to apply rule
\form{\sep} to have \form{\enode_{8}} as:
\form{ \emp
                \wedge x{\neq}\nil \wedge mi{\leq}ma
  ~ \ent ~ \emp}.
By \rulename{Emp}, \form{\enode_{8}} is decided as \form{\valid}.
For the right branch of the proof, \form{\enode_{3}} is applied with rule \rulename{{\neq}{\sep}} and then
\rulename{RInd} to obtain \form{\enode_{9}}:
\[
\begin{array}{ll}
\enode_{9}{:}& \sepnodeF{x}{c_4}{X,m'}{\sep}
  \seppredF{\code{lls}}{X{,}\nil{,}m'{,}ma}^1
                \wedge x{\neq}\nil  \wedge mi{\leq}m'
 \wedge X{\neq}\nil \wedge x{\neq}X
  \\ &
  \quad \ent ~ \sepnodeF{x}{c_4}{X,m'}{\sep}
  \seppredF{\code{llb}}{X{,}\nil{,}mi} {\wedge}mi{\leq}m'
\end{array}
\]
After that, \form{\enode_{9}} is applied with \rulename{Hypothesis} to eliminate the pure constraint
in the RHS:
\[
\begin{array}{ll}
     \enode_{10}{:}& \sepnodeF{x}{c_4}{X,m'}{\sep}
  \seppredF{\code{lls}}{X{,}\nil{,}m'{,}ma}^1
                \wedge x{\neq}\nil  \wedge mi{\leq}m' \wedge X{\neq}\nil \wedge x{\neq}X
  \\ &
  \quad \ent ~ \sepnodeF{x}{c_4}{X,m'}{\sep}
  \seppredF{\code{llb}}{X{,}\nil{,}mi}
\end{array}
\]
\form{\enode_{10}} is then applied with \rulename{\sep} to obtain
\form{\enode_{11}} and \form{\enode_{12}} as follows.
\[
\begin{array}{ll}
     \enode_{11}{:}& \sepnodeF{x}{c_4}{X,m'}~ \ent ~ \sepnodeF{x}{c_4}{X,m'}  \\
\enode_{12}{:}& 
  \seppredF{\code{lls}}{X{,}\nil{,}m'{,}ma}^1
                \wedge x{\neq}\nil \wedge mi{\leq}m' \wedge X{\neq}\nil \wedge x{\neq}X
  ~ \ent ~
  \seppredF{\code{llb}}{X{,}\nil{,}mi}
\end{array}
\]
\form{\enode_{11}} is valid by \rulename{Id}.
\form{\enode_{12}} is successfully linked back to \form{\enode_{0}}  to form
a pre-proof as
\[
\form{(\seppredF{\code{lls}}{X{,}\nil{,}m'{,}ma}^1
                {\wedge}X{\neq}\nil)[x/X,mi/m']
  ~\ent ~   \seppredF{\code{llb}}{X{,}\nil{,}mi}[x/X,mi/m'] }
\]
is identical to \form{\enode_0}. Since \form{\seppredF{\code{lls}}{X{,}\nil{,}m'{,}ma}^1} in \form{\enode_{12}}
is the subterm of \form{\seppredF{\code{lls}}{x{,}\nil{,}mi{,}ma}^0}
in \form{\enode_{0}}, %% In the next section, we show that any pre-proof in
%% our system is also a cyclic proof.
our system decided that \form{\enode_{0}} is valid with the cyclic proof
presented in Fig. \ref{mov.ctree}.

\section{Soundness, Completeness, and Complexity}
\label{sec.deci}

 We describe the soundness, termination, and completeness of {\allEnt}. 
First, we need to show the invariant about the quantifier-free entailments of our system.
\begin{corollary}\label{cor.free.inv}
  Every entailment %% \form{\D ~\ent~ \D'}
  derived from {\allEnt} is quantifier-free. 
\end{corollary}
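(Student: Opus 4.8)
~The plan is to prove the statement by induction on the construction of the proof tree $\utree{i}$ built by {\allEnt}, establishing the stronger invariant that every node $\enode:\entailNCyc{\D}{\D'}{\heap}$ occurring in any $\utree{i}$ has both $\D$ and $\D'$ quantifier-free. The base case is immediate: the root $\enode_0$ is quantifier-free directly from the statement of the problem {\entProb}, whose inputs $\D_a$ and $\D_c$ are quantifier-free symbolic heaps, and from the syntactic definition of an entailment $\enode$. The induction step reduces to checking that whenever a quantifier-free conclusion is reduced by an inference rule, all of its premises are again quantifier-free. I would first dispatch the surrounding machinery, which cannot reintroduce a binder either: \code{apply} only creates nodes and edges, \code{is\_closed} only inspects leaves, and the back-link substitutions $\sub$ produced by \lbent\ (of the shape $[x/X,sc/sc']$) map variables to variables.

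The core is then a case analysis over the rules of Fig.~\ref{fig.ent.comp.base} and Fig.~\ref{fig.ent.comp.red}. The normalisation rules \rulename{Subst}, \rulename{ExM}, \rulename{{=}L} and the two disequality-introducing rules only apply a substitution $[E/x]$ with $E\in\Var\cup\{\nil\}$ or conjoin pure (dis)equalities, and \rulename{LBase} (resp.\ \rulename{RBase}) replaces an inductive-predicate occurrence by its base disjunct $\emp\wedge E{=}E\wedge sc{=}tg$ while adjusting the pure part; all of these are patently quantifier-free-preserving. The axioms \rulename{Id}, \rulename{Emp}, \rulename{Inconsistency} have no premises; \rulename{{=}R} and \rulename{Hypothesis} rewrite only the pure component of the consequent; and the splitting rule \rulename{\sep} decomposes the conclusion into two entailments whose spatial and pure parts are sub-formulas of the conclusion's. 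The last routine case is \rulename{LInd}: unfolding an inductive predicate on the \emph{left}, the variables bound by the $\exists$ of the recursive disjunct are Skolemised, i.e.\ dropped and replaced by \emph{fresh} free variables as the disjunct enters the antecedent (this is exactly why the rule is locally sound), while the consequent is untouched; hence each premise is again quantifier-free.

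The step I expect to be the real obstacle is the pair of rules \rulename{RInd} and \rulename{Frame}, which cross an inductive predicate with a concrete points-to atom, since a naive unfolding of a definition body would expose the existentials $\exists X,sc',d_1,d_2.\,\sepnodeF{x}{c}{X{,}d_1{,}d_2{,}u{,}sc}\sep\dots$. The key observation --- and this is precisely what the ``exclude-the-middle'' design secures --- is that each of these rules applies only when the conclusion already pairs the inductive predicate $\seppredF{\code{P}}{x{,}\dots}$ with a concrete cell $\sepnodeF{x}{c}{X{,}E_1{,}E_2{,}u{,}sc'}$ having the \emph{same} root $x$ (visible in the shape of the conclusions in Fig.~\ref{fig.ent.comp.red}, and ensured by the matching search performed by \rulename{\sep} before \rulename{RInd} fires). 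Because $x$ is bound to a single heap location whose contents are fixed by that cell, the would-be existentials of the definition body are forced to coincide with the arguments $X,E_1,E_2,sc'$ of that already-present atom; the rules therefore take as their single premise the one disjunct obtained by this identification, so the bound variables of the definition resurface as \emph{free} variables and no binder is introduced. I would make this precise by rewriting the unfolding as $\D \vdash (\exists \vec w.\,B)\sep\heap_2 \equiv \D \vdash \exists \vec w.\,(B\sep\heap_2)$ with $\vec w$ disjoint from $\FV(\heap_2)$, and then noting that since $x$ is assigned a unique cell on both sides, the obligation is discharged by the specific witnesses read off that cell --- exactly the premise of the rule. Once \rulename{RInd} and \rulename{Frame} are handled in this way the induction closes, and since every entailment produced by {\allEnt} lies in some $\utree{i}$ and is reached from $\enode_0$ by finitely many rule applications, every derived entailment is quantifier-free.
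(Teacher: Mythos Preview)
Your proposal is correct, but it takes a considerably longer route than the paper's own proof, and it misses the one-line observation that does all the work.

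The paper does not perform a case analysis over every rule. It observes that the only rules that could possibly inject a quantifier are \rulename{LInd} and \rulename{RInd}, because these are the only rules that instantiate the recursive disjunct of a predicate definition, whose body is of the form $\exists\setvars{w}.\,\sepnodeF{x}{c}{\setvars{p}}\sep\heap'\sep\seppredF{\code{P}}{\dots}\wedge\pure_0$. In the general formulation of these two rules (App.~\ref{app.ent.red}) the surviving binder in the premise is exactly $\exists(\setvars{w}\setminus\setvars{p})$. The paper then simply invokes Condition~\textbf{C1} on the shape of {\cslpluse} definitions, which forces every existentially quantified variable of the recursive rule to occur among the fields $\setvars{p}$ of the root points-to atom; hence $\setvars{w}\subseteq\setvars{p}$, so $\setvars{w}\setminus\setvars{p}=\emptyset$ and the premise is quantifier-free. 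That is the whole proof.

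Your argument for \rulename{RInd}/\rulename{Frame} --- that the would-be existentials are forced to coincide with the field arguments $X,E_1,E_2,sc'$ of the matching cell on the left --- is exactly this fact in disguise: it works \emph{because} C1 guarantees that every bound variable is a field, so matching the root cell instantiates all of them. You never name C1, which is the actual hinge; without it (e.g.\ for the even-length list predicate \code{ell} that C1 is designed to exclude) your matching argument would leave a dangling existential. Your Skolemisation account of \rulename{LInd} is fine and is another way to see the same thing, but again the paper's syntactic route via C1 is shorter and uniform for both rules. The full induction and the inspection of \rulename{Subst}, \rulename{ExM}, \rulename{\sep}, etc.\ are sound but unnecessary once one realises that only the two unfolding rules touch a definition body.
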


The following lemma shows the soundness of the proof rules.
%For every rule, if all premises are valid, so is the conclusion.
\begin{lemma}[Soundness]\label{ent.local.sound}
For each proof rule,
 if all premises are valid, then the conclusion is valid.
\end{lemma}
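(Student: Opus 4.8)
The plan is to prove Lemma~\ref{ent.local.sound} by a case analysis over the inference rules, partitioned into three groups: (i) the normalization rules of Fig.~\ref{fig.ent.comp.base}, (ii) the axiom rules \rulename{Emp}, \rulename{Inconsistency}, \rulename{Id} of Fig.~\ref{fig.ent.comp.red}, and (iii) the reduction rules \rulename{{=}R}, \rulename{Hypothesis}, \rulename{RBase}, \rulename{{\sep}}, \rulename{Frame}, \rulename{RInd}, \rulename{LInd}. For each rule I would fix an arbitrary model $\sstack,\sheaps$ of the conclusion's LHS and show, using the validity of the premises, that $\sstack,\sheaps$ models the conclusion's RHS. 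Throughout I would use the semantics from Sect.~\ref{sec.spec.sem}, in particular the $\force$-clauses for $\sep$, $\emp$, points-to, inductive predicates, and the least-fixed-point characterisation $\sem{P}$, together with the unfolding equation $\unfold(\seppred{\code{P}}{\setvars{t}}) \equiv \constr[\setvars{t}/\setvars{x}]$, which is a semantic equality (each inductive predicate equals the disjunction of its definition rules under the least-fixed-point interpretation).

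For the normalization rules the arguments are elementary equivalences. For \rulename{Subst}: if $\sstack \force x{=}E$ then substituting $E$ for $x$ does not change the denotation of any formula, so the conclusion's validity is equivalent to the premise's; \rulename{=L} and \rulename{=R} are trivial since $E{=}E$ holds in every stack. For \rulename{ExM}: the two premises cover the cases $\sstack \force E_1{=}E_2$ and $\sstack \force E_1{\neq}E_2$, which are exhaustive, so any model of the conclusion's LHS is a model of one premise's LHS. For \rulename{LBase}/\rulename{RBase}: these rest on the semantic fact that $\seppredF{\code{P}}{E{,}E{,}\setvars{B}{,}u{,}sc{,}tg} \cong \emp \wedge sc{=}tg$, i.e.\ the base-case disjunct is the only satisfiable one when root equals segment (the recursive disjunct requires $r{\neq}F$); I would justify this by unfolding once and observing the recursive disjunct is unsatisfiable. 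For \rulename{{\neq}\nil} and \rulename{{\neq}\sep}: these add pure facts that are semantic consequences of the spatial part --- a points-to cell has a non-null address, and two distinct allocated heaps have disjoint non-empty domains hence distinct roots; the displayed implications ($\sepnode{E}{c}{\anon} \imply E{\neq}\nil$, etc.) are exactly what must be checked, and each follows directly from the $\force$-clauses for points-to, $\sep$, and (for the predicate cases) one unfolding under the guard $E{\neq}F$. For the axioms: \rulename{Id} and \rulename{Emp} are immediate from reflexivity of $\force$ and the $\emp$-clause; \rulename{Inconsistency} holds vacuously since $\pure \models \false$ makes the LHS unsatisfiable. \rulename{Hypothesis} is soundness of weakening the RHS pure part: $\pure \models \pure'$ and $\sstack \force \pure$ give $\sstack \force \pure'$.

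The real work is in \rulename{{\sep}}, \rulename{Frame}, \rulename{RInd}, and \rulename{LInd}. For \rulename{{\sep}}: given $\sheaps = \sheaps_1 \cdot \sheaps$ with $\sstack,\sheaps_1 \force \heap_1{\wedge}\pure$ and $\sstack,\sheaps \force \heap{\wedge}\pure$ (the shared $\pure$ here is a slight abuse; I would be careful with the exact variable bookkeeping in the rule), the two premises give $\sstack,\sheaps_1 \force \heap_2$ and $\sstack,\sheaps \force \heap'{\wedge}\pure'$; the side conditions $\code{roots}(\heap_1)\cap\code{roots}(\heap)=\emptyset$ and the FV-containment conditions ensure the split of $\sheaps$ witnessing $\heap_1 \sep \heap$ on the LHS is also a valid witness for $\heap_2 \sep \heap'$ on the RHS --- the key point being that disjointness of the addressed regions is preserved because the roots are disjoint and, for formulas in NF, the domain is determined by the stack (the remark after Definition~\ref{defn.nf}). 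For \rulename{RInd}: here one must fold on the right --- given a model of $\sepnodeF{x}{c}{X{,}E_1{,}E_2{,}u{,}sc'} \sep \heap_1{\wedge}\pure_1{\wedge}x{\neq}F$, the premise says this same heap also satisfies the unfolded-once body of $\seppredF{\code{P}}{x{,}F{,}\setvars{B}{,}u{,}sc{,}tg} \sep \heap_2{\wedge}\pure_2$, and since $x{\neq}F$ the body equals the recursive disjunct, so by the unfolding equation the heap satisfies $\seppredF{\code{P}}{x{,}F{,}\setvars{B}{,}u{,}sc{,}tg} \sep \heap_2{\wedge}\pure_2$; the side condition $\dagger$ prevents the framed cell at $x$ from being doubly counted. \rulename{Frame} is the analogous statement for matching a points-to on the right against a predicate on the left after one unfolding, with side condition $\sepnodeF{x}{c}{\_}\not\in\heap_2$. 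For \rulename{LInd}: this is left-unfolding, the easiest semantically --- a model of $\seppredF{\code{P}}{x{,}F{,}\setvars{B}{,}u{,}sc{,}tg}^k \sep \D_1 {\wedge} x{\neq}F_3$ with $x{\neq}F$ (which holds because $F_3$ is the RHS segment and the side condition $\sharp$ together with NF forces the relevant disequality, or more directly: if $x{=}F$ the LHS reduces and a different rule applies) is, by the unfolding equation, a model of the premise's LHS, so validity of the premise transfers. The unfolding-number annotations are irrelevant to \emph{local} soundness (they matter only for the global cyclic-soundness condition), so I would explicitly note that they carry no semantic content and can be ignored here.

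The main obstacle I anticipate is the bookkeeping in \rulename{{\sep}} and \rulename{Frame}/\rulename{RInd}: getting the heap splitting exactly right, making sure the shared context heap $\heap$ is handled consistently on both sides, and verifying that the FV side conditions plus the NF assumption (domain determined by stack) genuinely suffice to transfer the \emph{same} split from LHS to RHS --- a naive argument would only give \emph{some} split on the RHS, and one must rule out the pathological case where the RHS forces a different, incompatible partition. For the predicates-with-local-properties variants (the source/target parameters $sc,tg$ and transitivity parameter $u$), I would remark that the pure constraints $sc \diamond sc'$, $sc{=}tg$ etc.\ attached to each unfolding are exactly the pure parts that travel with $\pure_0$ in the displayed rules, so the same model-theoretic argument goes through with the pure obligations discharged by \rulename{Hypothesis} at the appropriate points; the final sentence of Sect.~\ref{ent.search.deci} (the rules still work when the RHS predicate carries a subset of the local properties) is handled by noting that dropping a conjunct from the RHS pure part only weakens the obligation.
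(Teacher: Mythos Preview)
Your proposal is correct and follows essentially the same route as the paper's proof: a rule-by-rule case analysis with a direct model-theoretic argument for each case, relying on the unfolding equation for \rulename{LInd}/\rulename{RInd}/\rulename{Frame} and on the elementary semantic facts you list for the normalization and axiom rules. One small simplification: the ``main obstacle'' you anticipate for \rulename{\sep} is not one---for \emph{local soundness} the side conditions are irrelevant, since any split $\sheaps=\sheaps_1\cdot\sheaps_2$ witnessing $\heap_1\sep\heap$ on the left already yields, via the two premises, $\sstack,\sheaps_1\force\heap_2$ and $\sstack,\sheaps_2\force\heap'\wedge\pure'$, hence $\sstack,\sheaps\force\heap_2\sep\heap'\wedge\pure'$ with the \emph{same} split; the disjoint-roots and FV side conditions are needed only for the completeness direction (Lemma~\ref{rule.complete}).
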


As every backlink
generated contains at least one pair of inductive predicate occurences in
 a subterm relationship,
the global soundness condition holds in our system.

\begin{lemma}[Global Soundness]\label{ent.global.sound}
A pre-proof derived
is indeed a cyclic proof.
\end{lemma}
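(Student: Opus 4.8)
I want to show that every pre-proof $\utree{i}$ returned by {\allEnt} satisfies the global soundness condition of Definition~\ref{cyclic}: every infinite path of $\utree{i}$ has a tail carrying a trace with infinitely many progressing points. Everything hinges on the explicit shape of a companion--bud block established in Section~\ref{ent.comp.cyclic}, so I would first record, for each backlink $\ctree{\enode_c}{\enode_b}{\sub}$ of $\utree{i}$, the following facts. (i) $\enode_c$ is in normal form with a distinguished antecedent occurrence $\seppredF{\code{P}}{x{,}F{,}\setvars{B}{,}u{,}sc{,}tg}^{k}$ with $k{\geq}1$. (ii) The unique child of $\enode_c$ on the way to $\enode_b$ is the premise of an application of \rulename{LInd} that unfolds exactly this occurrence, introducing the self-recursive occurrence $\seppredF{\code{P}}{X{,}F{,}\setvars{B}{,}u{,}sc'{,}tg}^{k{+}1}$ in the antecedent. (iii) Every rule applied from there down to $\enode_b$ — the normalisation rules, \rulename{RInd}, \rulename{Hypothesis}, \rulename{\sep} — only frames, substitutes, or weakens the antecedent, so this occurrence is carried untouched to $\enode_b$. (iv) The backlink substitution $\sub{=}[x/X,sc/sc']$ maps it onto the distinguished occurrence of $\enode_c$. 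By the unfolding-number annotation convention, (ii) says precisely that $\seppredF{\code{P}}{X{,}F{,}\setvars{B}{,}u{,}sc'{,}tg}\subterm\seppredF{\code{P}}{x{,}F{,}\setvars{B}{,}u{,}sc{,}tg}$ and that this \rulename{LInd} step is a progressing point of the trace following the distinguished occurrence.

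\textbf{Extracting the infinite trace.} Next I would use that $\utree{i}$ is a pre-proof: every leaf is an axiom leaf or a bud, so the finite side-branches spawned inside a block (e.g. the $x{=}F_2$ cases of \rulename{ExM}) all terminate at axiom leaves, and hence an infinite path $\pi$ cannot linger inside one and must traverse backlinks infinitely often. Since $\utree{i}$ has only finitely many companion nodes, some companion $\enode_c^{\star}$ is re-entered by $\pi$ infinitely often; let $p$ be the tail of $\pi$ from the first such re-entry. Because $\enode_c^{\star}$'s unique child is the \rulename{LInd} premise of fact (ii), each re-entry of $\enode_c^{\star}$ along $p$ is immediately followed by that progressing step. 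Then I would build the trace along $p$: take $\alpha_0$ to be the distinguished occurrence of $\enode_c^{\star}$, follow $\alpha$ through the \rulename{LInd} step into the recursive occurrence $\seppredF{\code{P}}{X{,}F{,}\setvars{B}{,}u{,}sc'{,}tg}^{k{+}1}$ (a progressing point), then through the framing/normalisation/weakening steps of fact (iii) as ordinary trace steps, and across the backlink (fact (iv)) back to the distinguished occurrence of $\enode_c^{\star}$, whence it repeats. Each loop of $p$ through $\enode_c^{\star}$ contributes at least one progressing point, and $p$ performs infinitely many such loops, so the trace following $p$ has infinitely many progressing points, establishing Definition~\ref{cyclic}. (As a by-product, combining this with Lemma~\ref{ent.local.sound} and Theorem~(Soundness) gives that the root entailment of every pre-proof derived by {\allEnt} is valid.)

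\textbf{The hard part.} The delicate part will be facts (iii)--(iv): one has to verify, rule by rule over the block catalogued in Section~\ref{ent.comp.cyclic}, that the distinguished $\code{P}$-occurrence survives as a single, identifiable antecedent subformula occurrence — it is never unfolded, dropped, or merged with another predicate — and that $\sub$ connects the bud occurrence to the companion occurrence exactly in the way the definition of trace demands, so that the sequence assembled above genuinely is a trace. A secondary point is the extraction argument itself: showing that a single companion is re-entered infinitely often and that the intervening path stays confined so that the trace is not stranded inside a nested sub-block. This is routine once the block structure of Section~\ref{ent.comp.cyclic} is in hand, but it should be spelled out with care for entailments over nested predicates $\code{Q}_1,\code{Q}_2$, whose own matching may invoke a sub-cycle.
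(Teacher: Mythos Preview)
Your outline is sound and follows the standard template for establishing global soundness of a cyclic proof system: fix an infinite path, extract a recurring companion by pigeonhole, and thread a trace through the distinguished antecedent occurrence that is unfolded by \rulename{LInd} on every loop. The paper, however, organises the argument differently. Rather than tracking the distinguished occurrence through each block rule-by-rule (your facts (iii)--(iv)), it proves a single structural fact: \emph{all cycles in the pre-proof are pairwise node-disjoint}. This is established by contradiction, via a case analysis on the variables split by the \rulename{ExM} application that would have to sit at the branch point shared by two overlapping cycles; in each case (points-to root, segment predicate root, tree root) the paper shows that one of the two putative backlinks could not actually have been formed, because the relevant predicate occurrence has been eliminated by \rulename{LBase} or carries fresh parameters that prevent the required substitution. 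Once disjointness is in hand, the trace argument collapses to one line: an infinite path is eventually trapped in a single cycle, that cycle's \rulename{LInd} step is traversed infinitely often, and nothing outside the cycle can interfere with the traced occurrence.

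The practical difference is exactly where you put the work. Your ``hard part'' --- verifying that the distinguished $\code{P}$-occurrence survives, and that the intervening path between two visits to $\enode_c^{\star}$ does not strand the trace in a nested sub-cycle --- is precisely what the paper's disjointness lemma absorbs. Your approach is more local (one block at a time) but leaves the nested-cycle case as a deferred obligation; the paper's approach is more global (one contradiction argument on the shape of overlapping cycles) and makes that obligation vacuous. Both are valid; the paper's route is arguably cleaner here because the fragment's rigid block structure makes the \rulename{ExM} case analysis short, whereas your rule-by-rule survival check would have to be redone for every intervening block the path might traverse.
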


The termination relies on the number of premises/entailments generated by rule \form{\sep}.
As the number of inductive symbols and their arities are finite,
there is a finite number of equivalent classes of these entailments in which
any two entailments in the same class are equivalent under some substitution and linked back together.
Therefore, the number of premises generated by rule \form{\sep}
is finite
considering the generation of backlinks.

\begin{lemma}\label{lem.term.shape}
{\allEnt} terminates.
\end{lemma}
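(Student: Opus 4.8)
\medskip
\noindent\textbf{Proof plan.} The plan is to show that the proof tree $\utree{i}$ built by {\allEnt} is finitely branching and contains no infinite path; K\"onig's lemma then forces it to be finite, and since each iteration of the loop expands exactly one open leaf, {\allEnt} halts. Finite branching is immediate, as every rule in Figs.~\ref{fig.ent.comp.base} and~\ref{fig.ent.comp.red} has at most two premises. To exclude infinite paths, first I would dispose of the ``non-unfolding'' rules by a lexicographic measure on entailments: \rulename{Subst} removes a variable; \rulename{=L}, ${\neq}\nil$, ${\neq}\sep$, and \rulename{ExM} each add a (dis)equality atom that, by the side conditions of these rules together with the Normal-Form conditions (Def.~\ref{defn.nf}), can never be re-added, so only finitely many fire before a leaf is in NF; \rulename{LBase} and \rulename{RBase} each remove one inductive occurrence (from the LHS, resp.\ RHS); \rulename{{=}R} and \rulename{Hypothesis} shrink the RHS pure part; and the axioms \rulename{Id}, \rulename{Emp}, \rulename{Inconsistency} close a leaf outright. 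Hence any infinite path must apply one of \rulename{LInd}, \rulename{RInd}, \rulename{Frame} infinitely often; since each \rulename{RInd}/\rulename{Frame} step is immediately followed by bookkeeping (\form{\sep}, \rulename{Hypothesis}) that consumes the points-to cell it exposes, it is really \rulename{LInd}---which may raise an unfolding number without bound---that must recur infinitely.

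\medskip
\noindent\textbf{The core step: finitely many companion shapes.} The heart of the argument is a pigeonhole bound on the entailment shapes that recur along a path. As laid out in the ``Back-Link Generation'' paragraph, between two consecutive \rulename{LInd} applications on the same spine the path follows a fixed routine (\rulename{LInd}, then normalizations, then \rulename{ExM}, \rulename{RInd}, possibly several \rulename{Hypothesis} steps, then \form{\sep}), and each such cycle consumes one points-to cell on each side via \rulename{RInd} and \form{\sep}. Conditions C1, C2 and C3 guarantee that unfolding $\code{P}(\ldots)$ introduces only predicates $\orderstarp$-below \code{P}---of which there are finitely many, each of finite arity, rooted at fresh variables that \form{\sep} and \rulename{Frame} then frame away---so the number and size of the ``active'' inductive occurrences and of accumulated pure atoms stay bounded. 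Since the set of defined predicates and their arities is finite, there are, up to renaming of variables, only finitely many equivalence classes of such companion-shaped entailments; using Corollary~\ref{cor.free.inv} to stay quantifier-free, the second occurrence of any class along a path is detected by \lbent\ and linked back, closing that branch. Therefore no path can be infinite, K\"onig's lemma applies, and {\allEnt} terminates.

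\medskip
\noindent\textbf{Main obstacle.} The delicate ingredient is the boundedness claim above: that no path accumulates unboundedly many, or unboundedly large, inductive occurrences or pure atoms, so that the set of reachable companion shapes is genuinely finite. This is precisely where Conditions C1 (heap connectivity), C2 (establishment and connectivity of the matrix), and C3 (no mutual recursion, with the well-founded order $\orderstarp$) are needed: they force each \rulename{LInd} step to replace a predicate by strictly $\orderstarp$-smaller predicates plus a single points-to cell that \form{\sep}/\rulename{Frame} removes on the next cycle, so the working set does not grow; the unfolding numbers---mere annotations that do not change a formula modulo the equivalence used for backlinks---serve only to certify the subterm relation required by Lemma~\ref{ent.global.sound}. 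Making this bound explicit (for instance by exhibiting a concrete finite superset of the companion shapes) is the one genuinely technical part; the lexicographic and K\"onig packaging around it is routine.
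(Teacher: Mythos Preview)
Your proposal is correct and follows essentially the same strategy as the paper: both arguments reduce termination to the observation that, because the set of inductive predicate symbols and their arities is finite, rule~\form{\sep} can generate only finitely many equivalence classes of entailments up to renaming, so eventually a bud is linked back to a companion and the search closes. The paper packages the ``non-unfolding rules make progress'' part via an explicit lexicographic size measure on entailments (maximal minus current predicate counts, disequality counts, formula length) rather than via K\"onig's lemma, but this is a presentational difference; the substantive step---finiteness of companion shapes forcing backlinks---is identical in both.
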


In the following, we show the complexity analysis.
First, we show
that every occurrence of inductive predicates in the LHS is unfolded at most two times. 
\begin{lemma}\label{lem.term.comp.shape}
Given any entailment \form{\entailNCyc{\seppredF{\code{P}}{\setvars{v}}^k \sep \D_a}{\D_c}{\heap}}, then \form{0 \leq k \leq 2}.
\end{lemma}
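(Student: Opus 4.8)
The plan is to track the \emph{unfolding number} (stamp) of a single predicate occurrence as it moves down a branch of the proof tree. The lower bound $k\geq 0$ is immediate: {\allEnt} resets every stamp in $\enode_0$ to $0$ (line~1, Fig.~\ref{algo.proof.search}), and by the stamping convention of Sect.~\ref{sec:spec} the only rule that changes the stamp of an occurrence in the left-hand side is \rulename{LInd}, which raises by exactly one the stamp of the recursive residual $\seppred{\code{P}}{X,\ldots}$ and gives the freshly exposed matrix predicates $\seppred{\code{Q_1}}{\ldots}$, $\seppred{\code{Q_2}}{\ldots}$ the stamp $0$; the normalisation rules, \rulename{Frame}, \rulename{RInd} and \form{\sep} all leave stamps unchanged. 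So the whole content of the lemma is the upper bound $k\leq 2$: no predicate occurrence is unfolded by \rulename{LInd} more than twice along a branch.

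The key ingredient — and what I expect to be the main obstacle — is a precise version of the shape analysis already sketched in the \emph{Back-Link Generation} part of Sect.~\ref{ent.search.deci}. I would phrase it as the following invariant: whenever \rulename{LInd} is applied to a stamp-$j$ occurrence $\seppred{\code{P}}{x,F,\ldots}$ sitting in a normal-form entailment whose right-hand side is a predicate rooted at the same variable $x$, the subsequent rules are forced — a run of \rulename{ExM} and \form{{\neq}\nil} normalisations, then \rulename{RInd} on the right-hand side, then a (possibly empty) run of \rulename{Hypothesis} and \form{{\neq}\sep} steps, and finally \form{\sep} — the matrix predicates $\seppred{\code{Q_1}}{\ldots}$, $\seppred{\code{Q_2}}{\ldots}$ are split off by \form{\sep} into independent sub-goals whose own stamps restart at $0$ (so the argument recurses on them), and the residual main premise is again a normal-form entailment of the shape $\seppred{\code{P}}{X,F,\ldots}{\sep}\heap{\wedge}\pure ~\ent~ \seppred{\code{Q}}{X,F_2,\ldots}{\sep}\heap'{\wedge}\pure'$ carrying stamp $j{+}1$ on $\seppred{\code{P}}{X,\ldots}$ and with a fresh root $X$ shared by both sides — i.e.\ it is in ``companion/bud shape''. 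Making this rigorous means threading the root variables, the border parameters $\setvars{B}$, and the pure and ordering constraints through that chain, together with the weakening step that lets a bud match a companion; this is essentially the content of that paragraph and of Lemma~\ref{ent.global.sound}.

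Granting the invariant, the bound follows from a short case analysis on a single lineage. Take an occurrence first introduced with stamp $0$ (either in $\enode_0$ or as a matrix predicate produced by an earlier \rulename{LInd}). If \rulename{LInd} is never applied to it, its stamp stays $0$. Otherwise the first \rulename{LInd} together with the forced chain reach a node $N_1$ in companion shape carrying stamp $1$. Before applying any reduction rule, {\allEnt} first calls {\lbent} (line~6, Fig.~\ref{algo.proof.search}); if this back-link succeeds — necessarily onto a suitably shaped stamp-$0$ ancestor — the occurrence is never unfolded again and $k=1$. If it fails, $N_1$ becomes an internal node and, since its root $X$ is now covered by a predicate and not a points-to (so \rulename{RInd} cannot fire) while $\seppred{\code{P}}{X,\ldots}$ is its highest-stamped predicate, the second \rulename{LInd} unfolds it and the forced chain reaches a node $N_2$ in bud shape carrying stamp $2$. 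Because the companion $N_1$ carries stamp $1\geq 1$, the back-link from $N_2$ to $N_1$ through $[x/X,sc/sc']$ (after weakening the pure part of its left-hand side) is \emph{always} available — this is exactly the ``$k\geq 1$'' case of the Back-Link Generation analysis — so {\lbent} closes $N_2$ and a stamp-$2$ occurrence is never unfolded. Hence no stamp $3$ is ever created, and every derived entailment satisfies $0\leq k\leq 2$; applying the same reasoning recursively to the matrix sub-goals shows they cannot evade the count, since they always start from fresh stamp-$0$ occurrences.
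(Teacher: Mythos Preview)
Your proposal is essentially correct and rests on the same core mechanism as the paper: a predicate occurrence can be \rulename{LInd}-unfolded at most twice because after the second unfolding the resulting bud can always be linked back to the stamp-$1$ companion (the ``$k\geq 1$'' observation in the Back-Link Generation paragraph). The paper's proof, however, is organised quite differently. Rather than packaging the forced chain \rulename{LInd}--normalise--\rulename{RInd}--\rulename{Hypothesis}--\form{\sep} as a single invariant, it fixes a concrete predicate system ($\code{P_1}\orderp\code{P_2}$, $\code{Q_1}\orderp\code{Q_2}$), inducts on the number of inductive predicates in the LHS, and does an exhaustive case split on the relationship between the LHS predicate $\code{P}$ and the matched RHS predicate $\code{Q}$: same definition (with/without nested matrix), syntactically dependent in either direction, or independent. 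In each case it traces the rule applications explicitly and shows that either the branch is declared \invalid\ (when the matrix heaps of $\code{P}$ and $\code{Q}$ mismatch, so the fresh root $d$ has no counterpart on the other side), or the number of predicates drops (induction hypothesis), or a back-link forms after one or two unfoldings---the two-unfolding case arising exactly when $\FV(\pure)\cap\FV(\a)$ is nonempty so that the ordering substitution $[sc/sc']$ cannot be applied at stamp~$1$.

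Your abstraction buys concision but hides a point the paper's case analysis makes explicit: the ``forced chain'' does \emph{not} always complete---in the dependent and independent cases the procedure frequently returns \invalid\ mid-chain rather than reaching a back-link---and your argument should say outright that such early termination is harmless for the bound (you gesture at this but never state it). Conversely, the paper's case analysis makes visible \emph{why} the first back-link can fail (the arithmetic source variable $sc'$ is fresh and not yet constrained like $sc$) and \emph{why} the second must succeed, which your appeal to the ``$k\geq 1$'' clause takes on faith.
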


Let n be the maximum number of predicates
(both inductive predicates and points-to predicates) among 
the LHS of the input and the definitions in \form{\PName}, and
\form{m} be the maximum number of fields of data structures.
Then, the complexity is defined as follows.

\begin{proposition}[Complexity]\label{prop.complex}
${\entProb}$ is $\mathcal{O}(n \times 2^{m} + n^3)$.
\end{proposition}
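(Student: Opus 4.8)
The plan is to bound the running time of {\allEnt} by the number of nodes in the proof tree it constructs times the work done per node, separating out the part of that work that is exponential in the field bound $m$.

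\textbf{Shape of derived entailments.} I would first record that, by Corollary~\ref{cor.free.inv}, every derived entailment is quantifier-free, and, by Lemma~\ref{lem.term.comp.shape}, every inductive atom on a left-hand side carries an unfolding number in $\{0,1,2\}$, so \rulename{LInd} is applied to a given occurrence at most twice. Condition~C3 (no mutual recursion) makes the order $\orderstarp$ on the $n{=}|\PName|$ predicates a finite order of height at most $n$, so unfolding a predicate atom either reproduces the same predicate --- caught and linked back within two more \rulename{LInd} steps, by the ``round'' of rule applications laid out in Sect.~\ref{ent.comp.cyclic} --- or introduces atoms that are strictly smaller in $\orderstarp$. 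Iterating, the left-hand side of any derived entailment contains $O(n)$ spatial atoms and $O(n)$ pure (dis)equalities, and any root-to-leaf path performs $O(n)$ applications of \rulename{LInd}; I will call the nodes carrying these applications the \emph{spine} of the derivation.

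\textbf{Counting nodes.} Between two consecutive spine nodes the only rules applied are the normalisation rules of Fig.~\ref{fig.ent.comp.base} and a constant number of \rulename{RInd}/\rulename{Hypothesis}/\rulename{RBase}/\form{\sep} steps; the normalisations act once on each of the at most $m$ fields of the points-to cell just exposed by \rulename{LInd} (the \rulename{Subst}, \form{{\neq}\nil}, \form{{\neq}\sep} rules) and at most once, per field, via \rulename{ExM}. The branching rules are \rulename{ExM} and \form{\sep}, each binary; the crux of the count is that branching does not compound, because in the $E_1{=}E_2$ premise of \rulename{ExM} the forced \rulename{Subst} strictly decreases the number of free segment/border variables, while the $E_1{\neq}E_2$ premise advances deterministically to an axiom of Fig.~\ref{fig.ent.comp.red} or to a backlink without creating a further spine node. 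Consequently each spine node sprouts a fan of at most $2^{m}$ short side-derivations, giving $O(n)$ spine nodes and $O(n\cdot 2^{m})$ nodes overall.

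\textbf{Work per node and assembly.} At a spine node the dominant cost is the atom matching in \form{\sep}, \rulename{Frame}, and \rulename{RInd}: for each of the $O(n)$ right-hand-side atoms one scans the $O(n)$ left-hand-side atoms, which (absorbing the bounded per-cell field arity into the constant) is $O(n^{2})$ per node, hence $O(n^{3})$ over the spine. At each of the $O(n)$ points-to cells produced by unfolding, the procedure also performs the satisfiability check of the pure part (clause~6 of Definition~\ref{defn.nf}) and the implication checks of \rulename{Hypothesis}/\rulename{Inconsistency}, which range over the $\le m$ field values and the linear order on the ordering parameters of a single cell and therefore cost $O(2^{m})$; summed over the cells this is $O(n\cdot 2^{m})$. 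Adding the two contributions yields $\mathcal{O}(n\times 2^{m}+n^{3})$, as claimed. The main obstacle is exactly the node count: a naive bound would let the binary branching of \rulename{ExM} and \form{\sep} compound along a path of length $\Theta(n m)$ and give an exponential tree, so the proof must lean on the depth-first, greatest-unfolding-number discipline of \rulename{LInd} (which keeps the spine linear) and on backlinks being emitted eagerly and soundly by design (Lemma~\ref{ent.global.sound}), so that the only combinatorial explosion that survives is the independent $2^{m}$ aliasing analysis of each individual data cell rather than a product taken over all the cells on a path.
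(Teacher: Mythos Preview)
Your approach and the paper's share the same high-level decomposition: isolate the work done per inductive predicate in the antecedent and argue that it contributes $O(2^m)$ from the matrix-heap subproblem plus lower-order normalisation work, then sum over $O(n)$ predicates. The paper, however, organises this as an explicit recurrence $P(n,m) = P(n{-}1,m) + q(m) + O(n^2)$, where $P$ is indexed by the number of predicates remaining in the LHS: the $r{=}F$ branch of \rulename{ExM} eliminates one predicate (via \rulename{LBase}) and recurses on $P(n{-}1,m)$; the $r{\neq}F$ branch applies \rulename{LInd}, \rulename{RInd}, and then \rulename{$\sep$} to split off a \emph{matrix-heap entailment} of size at most $m$, whose cost $q(m)=O(2^m)$ is bounded by the number of substitution-equivalence classes of $m$-predicate entailments (this is what caps how many nodes can appear before a backlink closes the branch). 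Solving the recurrence yields $O(n\cdot 2^m + n^3)$ directly, and Lemma~\ref{lem.term.comp.shape} is invoked only to double the constant.

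Your ``spine plus side-branches'' picture is close, but two points need tightening. First, you take $n=|\PName|$ when bounding the spine via the height of $\orderstarp$; the paper's $n$ is the maximum predicate count over the input LHS and all definition bodies, and the spine bound comes from the recurrence on that count, not from the predicate hierarchy. Second, your assertion that the $E_1{\neq}E_2$ premise ``advances deterministically to an axiom or to a backlink without creating a further spine node'' elides the essential mechanism: it is rule \rulename{$\sep$} that peels the matrix heap off into a separate sub-entailment of size $\leq m$, and it is the bounded equivalence-class count of \emph{that} sub-entailment that gives the $2^m$. Without invoking \rulename{$\sep$} explicitly, the nested predicates introduced by unfolding would remain in the main goal and your $O(n)$ bound on the LHS size would fail. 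Finally, the $O(2^m)$ you attribute to per-cell satisfiability and \rulename{Hypothesis} checks is not part of the paper's accounting and is not needed; the exponential in $m$ comes entirely from the matrix-heap combinatorics.
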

As such, if \form{m} is bounded by a constant, the complexity becomes polynomial in time.

 Our completeness proofs are shown in two steps. First, we show the proofs for an entailment
 whose LHS is a base formula.
 Second, we show the correctness when the LHS contains inductive predicates.
In the following, we first define the base formulas of the LHS derived by {\allEnt} from occurrences of inductive predicates.
Based on that, we define bad models to capture counter-model of invalid entailments.

\begin{defn}[{\cslpluse} Base]\label{defn.composition.base} Given \form{\heap}, define \form{\basecomp{\heap}} as follows. 
\[
\begin{array}{l}
\basecomp{\seppred{\code{P}}{E{,}F{,}\setvars{B}{,}u{,}sc{,}tg}} ~{\defsym}~ \sepnodeF{E}{c}{F{,}E_1{,}E_2{,}u{,}tg} \sep \basecomp{\seppredF{\code{Q_1}}{E_1{,}B}} {\sep} \basecomp{\seppredF{\code{Q_2}}{E_2{,}F}}  {\wedge} \pure_0   \\
\basecomp{\sepnodeF{E}{c}{\setvars{v}}} ~{\defsym}~ \sepnodeF{E}{c}{\setvars{v}}
\qquad \quad
\basecomp{\emp} ~{\defsym}~ \emp \qquad \quad
\basecomp{\heap_1 {\sep} \heap_2} ~{\defsym}~ \basecomp{\heap_1} {\sep} \basecomp{\heap_2}
\end{array}
\]
\end{defn}
The definition  for general predicates
with arbitrary matrix heaps
is presented in \repconf{App. \ref{app.ent.red}.}{\cite{Loc:Lin:TR:2019}.}
As  \form{\PName} does not include mutual recursion (Condition {\bf C3}),
 the definition above terminates in a finite number of steps.
In a pre-proof,
these {\cslpluse} base formulas 
of the LHS of an entailment are obtained
once every inductive predicate has been unfolded once.

\begin{lemma}
If \form{\heap \wedge \pure} is in NF then \form{\basecomp{\heap} \wedge \pure} is in NF,
and \form{\basecomp{\heap}\wedge \pure ~\ent~ \heap} is valid.
\end{lemma}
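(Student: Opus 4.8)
I would prove both claims together by structural induction on the spatial formula $\heap$, the only substantive case being $\heap=\seppred{\code{P}}{E{,}F{,}\setvars{B}{,}u{,}sc{,}tg}$. The induction is well-founded because the $\basecomp{}$-construction recurses only into the nested matrix predicates, which precede $\code{P}$ in $\orderstarp$ (a self-occurrence, if any, is truncated to $\emp$), and Conditions~\textbf{C1}--\textbf{C3} — in particular the absence of mutual recursion — guarantee that this terminates. The cases $\heap\in\{\emp,\ \sepnodeF{x}{c}{\setvars{v}}\}$ are immediate since $\basecomp{}$ is the identity there. For $\heap=\heap_1{\sep}\heap_2$ I would use locality of $\sep$: a model of $\basecomp{\heap_1}{\sep}\basecomp{\heap_2}{\wedge}\pure$ splits as $h_1{\cdot}h_2$ with $(s,h_i)\models\basecomp{\heap_i}$, and since $\pure$ is stack-only we get $(s,h_i)\models\basecomp{\heap_i}{\wedge}\pure$, so the induction hypothesis gives $(s,h_i)\models\heap_i$ and hence $(s,h)\models\heap_1{\sep}\heap_2$; the same splitting transfers the normal-form conditions.

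For the normal-form claim, the key observation is that $\basecomp{\heap}$ is a base formula — a $\sep$-conjunction of points-to atoms only, all of whose guards are $\true$ — so conditions~1 and~3 of Definition~\ref{defn.nf} reduce to the requirement that the finitely many allocated addresses of $\basecomp{\heap}$ be pairwise distinct in the pure part, and condition~2 to their being distinct from $\nil$. For addresses already occurring in $\heap$ these disequalities are inherited from $\pure$ (which is in normal form with $\heap$ by hypothesis), and for the roots freshly introduced by the $\basecomp{}$-expansions they are exactly the facts emitted in the $\pure_0$-fragments of those expansions — equivalently, exactly what the normalisation rules ${\neq}\nil$ and ${\neq}\sep$ would recover. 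Conditions~4--6 concern only the pure part and hold verbatim, as $\basecomp{}$ does not touch $\pure$.

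For validity $\basecomp{\heap}{\wedge}\pure\ent\heap$ in the predicate case: by Definition~\ref{defn.nf}(1) applied to $\seppred{\code{P}}{E{,}F{,}\ldots}\in\heap$ we have $E{\neq}F\in\pure$. By construction $\basecomp{\seppred{\code{P}}{E{,}F{,}\setvars{B}{,}u{,}sc{,}tg}}$ is precisely the \emph{recursive} disjunct of $\code{P}$ instantiated with the witnesses $X_{tl}{\mapsto}F$, $sc'{\mapsto}tg$ and the existentials pointing to fresh $E_1,E_2$, in which the matrix predicates $\seppred{\code{Q_1}}{E_1{,}B}$ and $\seppred{\code{Q_2}}{E_2{,}F}$ are replaced by their $\basecomp{}$-expansions and the recursive tail $\seppred{\code{P}}{F{,}F{,}\ldots}$ is discarded (satisfied by the empty heap through the \emph{base} disjunct, using $tg{=}tg$ and $sc\diamond tg$ from $\pure_0$). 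So for $(s,h)\models\basecomp{\seppred{\code{P}}{E{,}F{,}\ldots}}{\wedge}\pure$ I would split $h=h_0{\cdot}h_1{\cdot}h_2$ along the three $\sep$-conjuncts, apply the induction hypothesis to $\code{Q_1},\code{Q_2}$ (legitimate since they $\orderstarp$-precede $\code{P}$, with the needed root--target disequalities supplied as in the normal-form step), obtaining $(s,h_1)\models\seppred{\code{Q_1}}{E_1{,}B}$ and $(s,h_2)\models\seppred{\code{Q_2}}{E_2{,}F}$, and combine these with $h_0$ and the side conditions $E{\neq}F$, $sc\diamond tg$ to exhibit $(s,h)$ as a model of $\code{P}$'s recursive disjunct — hence of $\seppred{\code{P}}{E{,}F{,}\ldots}$ by the least-fixed-point semantics.

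\textbf{Main obstacle.} The delicate work is all in the predicate case. First, the witness instantiation has to be checked against Conditions~\textbf{C1}--\textbf{C2}: \textbf{C1} forces every matrix-predicate root to be a variable of $\setvars{p}=\{\nil\}{\cup}\setvars{Z}$ so that the heap-split along $\basecomp{}$'s $\sep$-structure is unambiguous, and \textbf{C2} forces the matrix to be exactly a $\sep$ of such nested predicates (possibly $\emp$), so the two-predicate shape of Definition~\ref{defn.composition.base} is merely notational and the arbitrary-matrix definition in the appendix is a routine generalisation. Second, and more fragile, is carrying the ``root\,$\neq$\,target'' disequalities of the nested predicates — and, for preservation, the pairwise disequalities among \emph{all} freshly introduced roots across different $\sep$-conjuncts — through the induction; the clean fix is to state the strengthened claim for $\basecomp{\heap}{\wedge}\pure'$, where $\pure'$ extends $\pure$ with exactly the disequalities the $\basecomp{}$-construction forces on fresh roots, so that the two halves of the lemma become one simultaneous induction and these disequalities are available both for normal-form checking and for applying the induction hypothesis.
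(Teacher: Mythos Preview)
The paper does not supply a proof for this lemma; it is stated in Section~\ref{sec.deci} and immediately glossed as ``$\basecomp{\heap}\wedge\pure$ is an under-approximation of $\heap\wedge\pure$'', with no corresponding appendix entry. So there is nothing to compare against, and your proposal stands on its own.

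Your overall strategy (structural induction on $\heap$, well-founded by Condition~\textbf{C3}) is correct and is the natural approach. Two remarks:

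\emph{The validity half is simpler than you present it.} You do not need the NF hypothesis, nor the auxiliary disequalities on the nested roots, to push the induction through for $\basecomp{\heap}\wedge\pure\models\heap$. The point is that $\basecomp{\seppred{\code{P}}{E,F,\ldots}}$ already carries $\pure_0$, which includes the guard $E{\neq}F$ and the order constraint $sc\diamond sc'[tg/sc']$; any model of $\basecomp{\seppred{\code{P}}{E,F,\ldots}}$ therefore witnesses the recursive disjunct of $\code{P}$ directly (with tail $\seppred{\code{P}}{F,F,\ldots,tg,tg}$ discharged by the base case). The nested IH on $\seppred{\code{Q_i}}{\ldots}$ needs only $\basecomp{\seppred{\code{Q_i}}{\ldots}}\models\seppred{\code{Q_i}}{\ldots}$, which again holds because the corresponding $\pure_0$ is baked into $\basecomp{\seppred{\code{Q_i}}{\ldots}}$; if the fresh root happens to equal its target, that $\pure_0$ makes the LHS unsatisfiable and the entailment vacuous. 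So the ``root${\neq}$target'' bookkeeping you flag as an obstacle for validity is already handled by the definition of $\basecomp{}$ itself.

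\emph{The NF half has the gap you identify, and it is a gap in the lemma's statement, not in your argument.} The fresh roots $E_1,E_2$ introduced by the expansion do not appear in the original $\pure$, so clauses~2 and~3 of Definition~\ref{defn.nf} ($E_i{\neq}\nil$ and pairwise root disequalities) are not literally in $\pure$. The $\pure_0$ attached by $\basecomp{}$ supplies $E{\neq}F$ but not these. Your proposed fix --- proving the claim for $\basecomp{\heap}\wedge\pure'$ with $\pure'$ the closure of $\pure$ under the disequalities forced by the fresh allocations --- is exactly right, and is morally what the paper intends: in context (Lemma~\ref{stuck}, Proposition~\ref{complete}) the base formula is always reached through the normalisation rules ${\neq}\nil$ and ${\neq}{\sep}$, which add precisely those disequalities.
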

In other words, \form{\basecomp{\heap}\wedge \pure} is an under-approximation
of \form{{\heap}\wedge \pure}; invalidity of  \form{\basecomp{\heap}\wedge \pure ~\ent~ \D'}
implies invalidity of  \form{{\heap}\wedge \pure ~\ent~ \D'}.

\begin{defn}[Bad Model] The bad model for \form{\basecomp{\heap} \wedge \phi \wedge \a}
in NF is obtained by assigning
\begin{itemize}
\item a distinct non-$\nil$ value to each variable in
\form{\FV(\basecomp{\heap} \wedge \phi)}; and
\item a value to each variable in \form{\FV(\a)}
such that \form{\a} is satisfiable.
\end{itemize}
\end{defn}

\begin{lemma}%[Completeness of proof rules]
\label{rule.complete}
\begin{enumerate}
\item For every proof rule except rule \rulename{\sep},
all premises are valid only if the conclusion is valid.
\item For rule \rulename{\sep} where the conclusion is of
the form \form{{\base} ~\ent~ \heap'},
all premises are valid only
if the conclusion is valid and 
    \form{\base} is in NF.
\end{enumerate}
\end{lemma}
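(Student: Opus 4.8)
The plan is to establish that every proof rule is \emph{invertible}: whenever the conclusion is valid, so is each premise, equivalently a counter-model of some premise can be turned into a counter-model of the conclusion --- so that an \form{\invalid} leaf of a proof tree witnesses invalidity of its root. Together with local soundness (Lemma~\ref{ent.local.sound}) this makes validity an exact invariant along each edge of a derivation. I would argue by a case split on the applied rule, staying within quantifier-free formulas by Corollary~\ref{cor.free.inv}, and organising the rules into four groups.

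The first group consists of the rules that transform an entailment into a logically equivalent one: \rulename{Subst} and \rulename{{=}L} rewrite by a known or tautological equality; \rulename{{=}R} deletes the tautology \form{E{=}E} from the consequent; \rulename{Hypothesis} deletes a pure conjunct of the consequent already implied by the pure part of the antecedent (its side condition \form{\pure{\models}\pure'}); and the two disequality-introduction normalisation rules add to the antecedent a disequality that every model of the antecedent already satisfies, by the guard implications recorded after Fig.~\ref{fig.ent.comp.base}. For such a rule invertibility holds in both directions at once. Rule \rulename{ExM} is the only branching normalisation rule: the antecedents of its two premises partition the models of the conclusion's antecedent according to \form{E_1{=}E_2} versus \form{E_1{\neq}E_2}, so a counter-model of the conclusion is a counter-model of one premise. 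The axiom rules \rulename{Id}, \rulename{Emp}, and \rulename{Inconsistency} have no premises, so the claim is vacuous.

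The next group is the base and unfolding rules. For \rulename{LBase} and \rulename{RBase} I would use the semantic fact that an occurrence \form{\seppredF{\code{P}}{E{,}E{,}\setvars{B}{,}u{,}sc{,}tg}} whose root coincides with its segment can satisfy only the base rule --- the recursive rule carries the guard \form{r{\neq}F} --- and is therefore equivalent to \form{\emp{\wedge}sc{=}tg}; replacing it by that (rewriting \form{sc} to \form{tg} on the left for \rulename{LBase}, adding \form{tg{=}sc} on the right for \rulename{RBase}) yields an equivalent entailment. For \rulename{LInd}, \rulename{RInd}, and \rulename{Frame} the argument rests on two facts about entailments in normal form: first, any inductive predicate \form{\seppredF{\code{P}}{x{,}F{,}\dots}} being processed comes with the guard \form{x{\neq}F} in the antecedent (by clause~1 of Definition~\ref{defn.nf}, or, for \rulename{RInd}, because it is present outright), so that predicate is forced to unfold to its recursive rule and the single-premise unfolding is an equivalence, while the side conditions \form{\sharp}, \form{\dagger}, and \form{\sepnodeF{x}{c}{\_}{\not\in}\heap_2} prevent the unfolded copy from being double-counted; second, NF pins down the heap domain from the stack (the remark after Definition~\ref{defn.nf}), so the points-to node exposed at \form{x} together with its nested predicates is uniquely matched on the other side and the framing is an equivalence. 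Hence in each of these cases a counter-model of the premise re-folds to a counter-model of the conclusion.

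The last case, rule \rulename{\sep}, is the delicate one and the reason it is singled out. The rule splits the antecedent \form{\heap_1{\sep}\heap} of the conclusion along the separating conjunction, and invertibility needs the induced heap decomposition \form{\sheaps{=}\sheaps_1{\cdot}\sheaps_2} to be \emph{uniquely forced}. This fails in general, but holds when the conclusion has the shape \form{{\base}~\ent~\heap'} with \form{\base} a base formula in NF: then the domain of the heap in any model is a fixed function of the stack, so every points-to predicate of \form{\heap_2} and every predicate of \form{\heap'} is matched by a determined sub-heap, and with the side conditions on \code{roots} and free variables only one decomposition agrees with the syntactic split. Consequently a counter-model of either premise \form{\heap_1{\wedge}\pure{\ent}\heap_2} or \form{\heap{\wedge}\pure{\ent}\heap'{\wedge}\pure'} --- completed, if a variable remains free, to a bad model (distinct non-\form{\nil} values for the location variables and a satisfiable arithmetic valuation) in the sense defined just above --- recombines into a counter-model of \form{{\base}~\ent~\heap'}; and \form{\base} is in NF whenever \rulename{\sep} is applied to it because {\allEnt} exhausts the normalisation rules before any reduction rule. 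The main obstacle is exactly this uniqueness-of-split claim; it relies on the determinism of NF heaps and on the under-approximation lemma \form{\basecomp{\heap}{\wedge}\pure{\ent}\heap} stated just above, which is why the argument does not survive inductive predicates in the antecedent --- hence the restriction in part~2 --- and one further has to check it for each of the three modes in which \rulename{\sep} is used (cancelling matched heap so that \rulename{Emp} applies, matching a base antecedent against points-to and inductive predicates on the right, and splitting off a back-linked sub-entailment), all of which reduce to the same observation.
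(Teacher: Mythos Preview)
Your treatment of part~1 is essentially what the paper does (``straightforward''), with one slip: for \rulename{ExM} you argue that a counter-model of the \emph{conclusion} is a counter-model of some premise, which is the soundness direction. For invertibility you need the converse---a counter-model of a premise is a counter-model of the conclusion---which is even easier since each premise's antecedent strengthens the conclusion's.

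For part~2 your route and the paper's diverge. You argue semantically: the base antecedent in NF is precise, so the heap split is unique, hence a counter-model of a premise recombines into a counter-model of the conclusion. The paper instead argues syntactically, by induction on the number of points-to predicates in the base LHS, peeling off one RHS predicate at a time: it case-splits on whether \form{\heap_2} is a single points-to or a single inductive predicate, uses the \code{is\_closed} invalidity checks (cases 2b--2d) to exclude the situation where the root of \form{\heap_2} is not among the LHS roots, and in the inductive-predicate case invokes \rulename{RInd} to reduce back to the points-to case. The paper also isolates two auxiliaries you do not mention: preservation of NF under substitution (needed when an equality is eliminated), and a separate proposition handling the pure part \form{\pure'} of the consequent via a bad-model argument.

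The gap in your argument is the step ``only one decomposition agrees with the syntactic split.'' Precision of the \emph{LHS} in NF gives you that \form{\dom(\sheaps)} is a function of the stack, but the RHS \form{\heap_2 \sep \heap'} may contain inductive predicates, which are not precise: a list segment \form{\seppred{\code{P}}{x,F}} can have many footprints inside a given heap. So from \form{(\sstack,\sheaps_1{\cdot}\sheaps_2)\models\heap_2\sep\heap'} you cannot directly conclude that the witnessing split coincides with \form{\sheaps_1,\sheaps_2}. The paper's one-predicate-at-a-time induction avoids this: when \form{\heap_2} is a single points-to it is precise, and when it is a single inductive predicate the \rulename{RInd} step exposes a points-to at the root, reducing to the precise case. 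Your semantic picture is the right intuition, but to make it go through you would need either to restrict to the algorithm's actual application pattern (single-predicate \form{\heap_2}) or to prove a precision lemma for the RHS under the side conditions---which is essentially what the paper's induction does.
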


\noindent The following lemma states the correctness of the procedure \code{is\_closed}
for cases 2(b-d).

\begin{lemma}[Stuck Invalidity]\label{stuck}
Given \form{\heap{\wedge}\pure~\ent~\D'} in NF, it is {\invalid} if
procedure \code{is\_closed} returns {\invalid} for
 cases 2(b-d).
\end{lemma}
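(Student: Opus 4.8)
The plan is, for each of the three clauses 2(b)--2(d) under which \code{is\_closed} reports \invalid, to exhibit an explicit counter-model: a stack--heap pair $(\sstack,\sheaps)$ with $\sstack,\sheaps\force\heap\wedge\pure$ but $\sstack,\sheaps\nforce\D'$. Such a witness immediately shows that $\heap\wedge\pure\ent\D'$ is invalid. The construction is uniform. Since $\heap\wedge\pure$ is in NF, I would first pass to the base formula $\basecomp{\heap}\wedge\pure$, which is again in NF and satisfies $\basecomp{\heap}\wedge\pure\ent\heap$ (the under-approximation lemma stated just before Definition~\ref{defn.composition.base}); hence it suffices to produce a \emph{bad model} $(\sstack,\sheaps)$ of $\basecomp{\heap}\wedge\pure$ that falsifies $\D'$. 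Two properties of a bad model drive every case: (a) every pointer variable of $\FV(\basecomp{\heap}\wedge\pure)$ together with $\nil$ gets a pairwise-distinct value, so two syntactically distinct roots denote distinct locations, and a points-to atom of $\heap'$ can be matched only against the cell produced by the atom of $\basecomp{\heap}$ with the \emph{same} root; and (b) since the LHS is in NF, $\dom(\sheaps)$ is exactly the set of denotations of the roots of the points-to atoms of $\basecomp{\heap}$, so this is also the footprint that any model of $\heap'$ with stack $\sstack$ is forced to occupy.

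Clauses 2(c) and 2(d) are then direct. For 2(d), the cell of $\sheaps$ at $\sstack(x)$ is the one described by $\sepnodeF{x}{c_1}{\setvars{v}_1}$, whereas $(\sstack,\sheaps)\force\D'$ would require that same cell to match $\sepnodeF{x}{c_2}{\setvars{v}_2}$; since either the constructors differ or the stored tuples differ syntactically (hence, after choosing the arithmetic part of the bad model to respect disequalities, in value), this cannot happen. For 2(c), we have $op_1(E)\in\D'$ with $G(op_1(E))\in\pure$ and no spatial atom of $\D$ rooted at $E$; then $E$ is not a root of $\basecomp{\heap}$, so by (b) $\sstack(E)\notin\dom(\sheaps)$, while $G(op_1(E))\in\pure$ forces $op_1(E)$ to be non-empty in the model (a points-to always is; a predicate $\seppred{\code{P}}{E,F,\ldots}$ is, since $E{\neq}F$ holds), so $\D'$ would demand $\sstack(E)\in\dom(\sheaps)$ --- a contradiction.

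The main obstacle is clause 2(b). Here $op_1(E)\in\D$ so $\sstack(E)\in\dom(\sheaps)$ (via NF condition~1 and the guard), and $op_2(E)\notin\D'$, so no spatial atom of $\heap'$ is \emph{rooted} at $E$; the real work is to rule out that $\sstack(E)$ lies \emph{strictly inside} the footprint of some segment predicate of $\heap'$, which would otherwise still let $\heap'$ cover $\sstack(E)$. I would argue by locating the predecessor of $\sstack(E)$ in the base heap $\sheaps$: such a predecessor can only come from an atom $\sepnodeF{w}{c}{\ldots E\ldots}$ or $\seppred{\code{Q}}{w,E}$ of $\heap$. In sub-case~(ii) of 2(b) no such atom occurs in $\D$ at all, so $\sstack(E)$ has no predecessor in $\sheaps$ and thus is internal to no segment, contradicting that $\heap'$ must occupy $\sstack(E)$ without owning an atom rooted there. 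In sub-case~(i) the only such atom is one shared by $\D$ and $\D'$; since a segment predicate ends just before its tail, this shared atom does not itself cover $\sstack(E)$, and a footprint-counting step --- the shared atom occupies the same sub-heap on both sides, so the remaining atoms of $\heap'$ and of $\heap$ must cover exactly the same locations, in particular $\sstack(E)$ --- forces some other atom of $\heap'$ to reach $\sstack(E)$ through a predecessor it does not own, the desired contradiction. Making this predecessor/reachability bookkeeping precise (including degenerate self-loop configurations) is the delicate part, but it is a finite structural analysis over the shape fixed by Definition~\ref{defn.nf} and the base-formula shape of Definition~\ref{defn.composition.base}.
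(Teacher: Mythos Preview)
Your proposal follows the same skeleton as the paper: take a bad model of $\basecomp{\heap}\wedge\pure$ and argue it is a counter-model in each of the three cases. The paper's own proof is a two-sentence sketch; for cases 2(b) and 2(c) it simply says that the heaps of the bad model are ``not connected'' and then invokes conditions \textbf{C1} and \textbf{C2} on the inductive definitions (every allocated cell in a model of a predicate is reachable from the root along the field structure fixed by the definition), and for 2(d) it says the two sides' cells cannot be matched. Your treatment of 2(c) and 2(d) is essentially identical to this, just written out.

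Where you diverge is case 2(b): instead of invoking \textbf{C1}/\textbf{C2} directly, you do an operational predecessor/reachability analysis. This is the same underlying idea, but your formulation is more laborious and slightly imprecise in sub-case~(i): you write ``the only such atom is one shared by $\D$ and $\D'$'', but the side condition only guarantees that \emph{some} atom with $E$ as tail/next-field occurs in both sides, not that it is unique. The cleaner route, and the one the paper takes, is to observe that in the bad model all free pointer variables denote distinct locations, so by \textbf{C1}/\textbf{C2} the footprint of any predicate instance in $\D'$ is exactly the set of locations reachable from its root through the field pattern of its definition; since $E$ is not the root of any atom of $\D'$ and the shared predecessor atom (if present) terminates before $E$, the location $\sstack(E)$ is simply unreachable inside any $\D'$-predicate and hence uncovered. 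Appealing to \textbf{C1}/\textbf{C2} directly would let you collapse your ``delicate'' predecessor bookkeeping into a one-line connectivity argument.
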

A bad model of the \form{\basecomp{\heap}{\wedge}\pure} is a counter-model.
Cases 2b) and 2c) show that the heaps of bad models are not connected and thus
accordingly to conditions {\bf C1} and {\bf C2}, any model of the LHS could not be
a model of the RHS. Case 2d) shows that heaps of the two sides could not be
matched.
Now, we show the correctness of Case 2(a) of procedure \code{is\_closed}
and invalidity is preserved during the proof search in {\allEnt}.
\begin{proposition}[Invalidity Preservation]\label{complete}
 If {\allEnt} is stuck, the input is invalid.
\end{proposition}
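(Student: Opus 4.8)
The plan is to find the unique stuck leaf, show that it is an invalid entailment, and then propagate invalidity from it up to the root along a single branch. By Lemma~\ref{lem.term.shape} the run of {\allEnt} on $\enode_0 : \D_a \vdash \D_c$ halts, and ``stuck'' means it returns \invalid; by the description of \code{is\_closed} this happens exactly when some open leaf $\enode^\star : \heap{\wedge}\pure \vdash \D'$ in NF satisfies one of the conditions 2(a)--(d). For conditions 2(b)--(d), invalidity of $\enode^\star$ is exactly Lemma~\ref{stuck}, so the only genuinely new case is 2(a), in which no inference rule applies. Throughout I would invoke Corollary~\ref{cor.free.inv} to stay inside the quantifier-free fragment, so the model-theoretic reasoning needs no Skolemisation.

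For condition 2(a), I would first pass from $\enode^\star$ to its {\cslpluse} base: by Definition~\ref{defn.composition.base} and the preceding lemma, $\basecomp{\heap}{\wedge}\pure$ is in NF and $\basecomp{\heap}{\wedge}\pure \vdash \heap$ is valid, so it suffices to refute $\basecomp{\heap}{\wedge}\pure \vdash \D'$. Since $\pure$ is in NF its (in)equality part is satisfiable (NF condition~6) and distinct non-$\nil$ values can be assigned to all pointer variables, so the bad model $(s,h)$ of $\basecomp{\heap}{\wedge}\pure$ exists and $s,h \models \basecomp{\heap}{\wedge}\pure$ by construction. The heart of this case is to show $s,h \not\models \D'$ by turning the inapplicability of each rule into an obstruction that $(s,h)$ witnesses: no axiom applies (\rulename{Id}, \rulename{Emp}, and \rulename{Inconsistency}, the last because $\pure$ is satisfiable); no \rulename{{=}R}/\rulename{Hypothesis} applies, so $\D'$ carries a pure atom not entailed by $\pure$, which the assignment refutes; and no \rulename{RBase}/\rulename{RInd}/\rulename{\sep} applies, so $\D'$ has a points-to or inductive atom whose root is either unallocated in $\basecomp{\heap}$, or allocated with a mismatching constructor or field tuple, or is an inductive atom whose segment/border parameters cannot be made to coincide with the endpoint of the unique allocated chain leaving its root. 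In each sub-case the tightness of $h$ --- it allocates exactly the cells named in $\basecomp{\heap}$, all at distinct addresses --- forces the offending conjunct of $\D'$ to fail, using the heap-connectivity conditions {\bf C1}--{\bf C2} (as in Lemma~\ref{stuck}) to exclude that a ``dangling'' predicate of $\D'$ could nonetheless be satisfied. Hence $\enode^\star$ is invalid.

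For invalidity preservation I would take the path $\enode_0 = \enode_{j_0}, \dots, \enode_{j_\ell} = \enode^\star$ from the root to the stuck leaf in the tree returned by {\allEnt}; each step has $\enode_{j_t}$ as conclusion and $\enode_{j_{t+1}}$ among its premises. Because \code{is\_closed} normalises every leaf into NF before any reduction rule and applies \rulename{\sep} only to entailments with a base-formula antecedent, the hypotheses of Lemma~\ref{rule.complete} are met at each step, so each rule used along the path preserves invalidity from a premise up to its conclusion. (The other premises of each step play no role; any closed as \valid\ by an axiom are genuinely valid by Lemma~\ref{ent.local.sound}.) An upward induction on $t$ from $\ell$ down to $0$ then shows $\enode_0$ is invalid, i.e.\ $\D_a \not\models \D_c$.

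I expect the main obstacle to be condition 2(a): ensuring the catalogue of ways in which \emph{every} rule fails at an NF leaf is exhaustive, and that each failure is a real semantic mismatch rather than a gap in the rule set. The delicate sub-case is a base-formula leaf $\basecomp{\heap}{\wedge}\pure \vdash \seppredF{\code{P}}{E,F,\dots}{\sep}\heap_2{\wedge}\pure'$ on which the matching step inside \rulename{\sep}/\rulename{RInd} fails; there I would have to argue, from establishment and connectivity ({\bf C1}--{\bf C2}) and the forced shape of the bad model, that the unique allocated chain out of $s(E)$ (when $E$ is allocated at all) cannot be organised as a $\code{P}$-segment to any admissible value of $F$ and the borders, so the predicate is unsatisfiable in $(s,h)$ --- and this must be verified uniformly across the reference, transitivity, and ordering parameter kinds of {\cslpluse} predicates. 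The remaining ingredients --- termination, the upward induction, and conditions 2(b)--(d) --- are routine bookkeeping on top of Lemmas~\ref{lem.term.shape}, \ref{rule.complete}, \ref{stuck}, and the earlier lemma on {\cslpluse} bases. \qed
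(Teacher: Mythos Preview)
Your high-level plan---locate the stuck leaf, show it invalid, and propagate invalidity to the root via Lemma~\ref{rule.complete}---coincides with the paper's. You also correctly pass to the base approximation $\basecomp{\heap}\wedge\pure$ and invoke bad models. The divergence, and the gap, is in your treatment of case~2(a) when the stuck leaf still has inductive predicates in its LHS.

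Your argument ``turn the inapplicability of each rule into an obstruction that $(s,h)$ witnesses'' conflates two leaves: the rules that fail do so at the \emph{original} leaf $\heap\wedge\pure\vdash\D'$, but the model $(s,h)$ you build is a model of the \emph{base} formula $\basecomp{\heap}\wedge\pure$. The paper bridges this with an explicit case split (its general-LHS lemma in the appendix): either the base entailment $\basecomp{\heap}\wedge\pure\vdash\D'$ is itself stuck---then the already-established base-LHS result applies and the bad model is a countermodel---or it is derivable. In the derivable branch, derivability forces that for the offending $\seppredF{\code{Q}}{x,F_3,\ldots}$ in $\D'$ one has $x\neq F_3\in\pure$ and $\sepnodeF{x}{c}{\ldots}\in\basecomp{\heap}$; but then either $\sepnodeF{x}{c}{\ldots}\in\heap$ already, so \rulename{RInd} applies at the original, or that points-to arose from some $\seppredF{\code{P}}{x,F,\ldots}\in\heap$, so \rulename{LInd} applies---either way contradicting stuckness. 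Without this split you have not shown that the bad model of the base refutes $\D'$: your catalogue reads structural facts about $\D'$ off rule failures whose side conditions also depend on the non-base shape of $\heap$. Relatedly, you name the ``delicate sub-case'' as a \emph{base-formula} leaf, but that is actually the easier case; the genuinely delicate one is when $\heap$ still contains inductive predicates, and it is exactly there that the paper's case split does the work.

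A secondary point on the upward induction: you assume \rulename{\sep} is only applied when the antecedent is a base formula, so that Lemma~\ref{rule.complete} part~2 covers every step. The paper's third usage of \rulename{\sep}---splitting a non-base conclusion so that one premise can be linked back---violates this, and Lemma~\ref{rule.complete} as stated does not cover that instance. The paper handles this by restricting to paths without such applications and appealing separately to precision of normalised base heaps (its frame lemma, Lemma~\ref{lemma.fr.gen}); your propagation step needs an analogous treatment.
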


\begin{theorem}\label{thm.complete}
\form{\entProb} is decidable.
\end{theorem}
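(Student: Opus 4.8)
The plan is to read the statement off the algorithm {\allEnt} of Fig.~\ref{algo.proof.search}: I will show that on any input $\D_a \ent \D_c$ satisfying the side condition $\FV(\D_c)\subseteq\FV(\D_a)\cup\{\nil\}$, the procedure {\allEnt} (a) halts and returns one of \valid{} or \invalid, (b) returns \valid{} only if $\D_a\models\D_c$, and (c) returns \invalid{} only if $\D_a\not\models\D_c$. Parts (a)--(c) together make {\allEnt} an effective decision procedure for \form{\entProb}, which is exactly the claim.

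\noindent\textbf{Halting and totality.} Part (a) is essentially Lemma~\ref{lem.term.shape}. The argument combines Lemma~\ref{lem.term.comp.shape} (every inductive-predicate occurrence on an LHS carries unfolding number $\le 2$) with the structural description of companion/bud pairs from Section~\ref{ent.comp.cyclic}: along any branch only finitely many distinct entailments arise up to renaming, and whenever an open leaf matches an ancestor the procedure \lbent{} closes it at line~6 instead of expanding it. Hence the proof tree $\utree{i}$ becomes stationary after finitely many iterations of the main \textbf{while} loop and the loop exits through line~4 or line~5. Corollary~\ref{cor.free.inv} is used implicitly throughout, since it keeps every derived entailment quantifier-free, so the normalisation rules, \code{is\_closed}, and \lbent{} all apply uniformly.

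\noindent\textbf{Correctness of both answers.} For (b), assume {\allEnt} returns \valid. Then case~1 of \code{is\_closed} holds: every leaf of the final $\utree{i}$ is either closed by an axiom (\rulename{Id}, \rulename{Emp}, or \rulename{Inconsistency}) or is a bud linked back to a companion, so $\utree{i}$ is a pre-proof; by Lemma~\ref{ent.global.sound} it is a cyclic proof, and since every inference rule is locally sound (Lemma~\ref{ent.local.sound}), the Soundness Theorem for cyclic proofs (\cite{Brotherston:CADE:11}, recalled in Section~\ref{sec.cyclic.proof}) gives $\D_a\models\D_c$. For (c), assume {\allEnt} returns \invalid. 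Then {\allEnt} is stuck: \code{is\_closed} reported \invalid{} at some open leaf $\heap\wedge\pure \ent \D'$ in NF by one of the four conditions 2(a)--2(d). Proposition~\ref{complete} (Invalidity Preservation) then yields $\D_a\not\models\D_c$; its proof uses Lemma~\ref{stuck} (Stuck Invalidity) for conditions 2(b)--2(d) — a bad model of the under-approximation $\basecomp{\heap}\wedge\pure$ is a genuine counter-model — and Lemma~\ref{rule.complete} (the converse of local soundness for each rule) to propagate invalidity from the offending leaf up to the root, which covers condition 2(a).

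\noindent Putting (a)--(c) together, {\allEnt} terminates and answers \valid{} exactly when $\D_a\models\D_c$; hence \form{\entProb} is decidable (and by Proposition~\ref{prop.complex} the procedure even runs in polynomial time once the field bound $m$ is fixed). The step I expect to be the main obstacle is not any single invocation above but the case analysis underpinning the \invalid{} answer: one must check that conditions 2(a)--2(d) of \code{is\_closed} are \emph{jointly exhaustive} of the ways a normalised, axiom-free, non-linkable leaf can fail to admit any reduction rule — otherwise Proposition~\ref{complete} would not apply — and, dually, that \lbent{} is attempted at every opportunity, so that the finiteness argument behind Lemma~\ref{lem.term.shape} is not bypassed by an infinite run. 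With the earlier lemmas in hand, the theorem is then immediate.
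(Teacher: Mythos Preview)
Your proposal is correct and follows essentially the same approach as the paper: the theorem is meant to be read off directly from the preceding results, combining termination (Lemma~\ref{lem.term.shape}), soundness via Lemmas~\ref{ent.local.sound} and~\ref{ent.global.sound} together with Brotherston's global soundness theorem, and completeness via Proposition~\ref{complete} (supported by Lemmas~\ref{stuck} and~\ref{rule.complete}). The paper does not spell out a separate proof of Theorem~\ref{thm.complete}, treating it as the immediate consequence of these ingredients, which is precisely what you have assembled.
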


\section{Implementation and Evaluation}\label{sec.impl}
We implement {\toolname} using OCaml.
This implementation is an instantiation of a general framework
for cyclic proofs.
To discharge satisfiability for a separation logic formula,
we utilize the cyclic proof systems to derive
bases for inductive predicates in the decidable fragment
shown in \cite{Le:VMCAI:2021}. For those formulas
beyond this fragment, we use the solver presented in \cite{Loc:CAV:2016,Loc:CAV:2017}.
 We also develop a built-in solver for discharging equalities.
 %and
%use 
%Z3 \cite{TACAS08:Moura} as a back-end SMT solver for inequalities.

We evaluated {\toolname} to show that
i) it
 can discharge 
 problems in {\cslpluse} effectively;
 and ii) its performance is compatible to
 the state-of-the-art solvers.

\paragraph{Experiment settings}
We have evaluated
 {\toolname}
 on entailment problems taken from SL-COMP 2022 \cite{slcomp:sl:22},
a competition of separation logic solvers.
We take the problems in
two divisions of the SL-COMP 2022, {\em qf\_shls\_entl} and {\em qf\_shlid\_entl},
and one new division {\em qf\_shlid2\_entl}.
 % for future competition.
All these problems semantically belongs to our decidable fragment and their syntax are written in SMT 2.6 format \cite{Sighireanu:TACAS:19}. % and semantically in {\cslpluse}.
\begin{itemize}
 \item 
Division {\em qf\_shls\_entl} includes 296 entailment problems,
\form{122} {\invalid} problems and \form{174} {\valid} problems,
with only singly linked lists.
They were randomly generated
by the authors in \cite{pldi:PerezR11}.

\item Division {\em qf\_shlid\_entl} contains 60 entailment problems
which were mostly handcrafted by the authors in \cite{Enea:FMSD:2017}.
They include
 singly-linked lists,  doubly-linked lists, lists of singly-linked lists
 or skip lists. Furthermore, the system of inductive predicates must satisfy the following
 condition: For two different predicates \form{\code{P}}, \form{\code{Q}} in the system
 of definitions, 
 either \form{\code{P} \orderstarp \code{Q}} or \form{\code{Q} \orderstarp \code{P}}.
 \item In the third division, we introduce new benchmarks, with 27 problems,
 that are beyond the problems in the previous two divisions. In particular,
 in every system
 of predicate definitions,
 there exist  two predicates \form{\code{P}}, \form{\code{Q}}  such that they are semantically equivalent.
% \form{\code{P} {\not\orderstarp} \code{Q}}. %% and  \form{\code{P}}
%% may semantically equivalent to  \form{\code{Q}}.
We have submitted this division to the Github
 repository of SL-COMP.
\end{itemize} 

 To evaluate {\toolname}'s performance,
we compared it with the state-of-the-art tools such as {\cyclic} \cite{Brotherston:CADE:11},
{\spen} \cite{Enea:FMSD:2017},
 {\songbird} \cite{Ta:FM:2016}, SLS \cite{Ta:POPL:2018}
and {\harr} \cite{Jens:TACAS:2019}. We did not include Cycomp \cite{Makoto:APLAS:2019},
 as these benchmarks are beyond its decidable fragment.
 Note that {\cyclic}, {\songbird} and SLS are not complete; for non-valid problems,
 while {\cyclic} returns {\unknown},
 {\songbird}, and SLS use some heuristic to guess the outcome.
For each division, we report the number of correct outputs ({\invalid}, {\valid}) and the time (in minutes and seconds) taken by
each tool.
Note that we use the status ({\invalid}, {\valid}) annotated with each problem in the SL-COMP benchmark as the ground truth.
If an output is the same with the status, we classify it as correct; otherwise,
it is marked as incorrect.
We also note that in these experiments we used the competition pre-processing tool \cite{Sighireanu:TACAS:19}
to transform the SMT 2.6 format into the corresponding formats of the tools before running them.
All experiments were performed on a machine with
 Intel Core i7-6700 CPU 3.4Gh and 8GB RAM. The CPU timeout is 600 seconds.

\label{impl.entl}

\begin{table}[tb]
%\small
\begin{center}
\caption{Experimental results} \label{tbl:expr.entl}
\begin{tabular}[t]{|c|c  | c | c |c  | c | c |c  | c | c |}
\hline
 Tool            &   \multicolumn{3}{c|}{{\em qf\_shls\_entl}} &   \multicolumn{3}{c|}{{\em qf\_shlid\_entl}} &   \multicolumn{3}{c|}{{\em qf\_shlid2\_entl}}  \\
 \hline
 & {\small \invalid} &  {\small \valid} & Time & {\small \invalid} &  {\small \valid} & Time & {\small \invalid} &  {\small \valid} & Time \\
    & {\small (122)} &  {\small (174)} & (296)  & {\small (24)} &  {\small (36)} &  (60) & {\small (14)} &  {\small (13)} & (27)  \\
\hline
SLS & 12 & 174 & 507m42s & 2 & 35 & 133m28s & 0 & 11 & 97m54s \\
\rowcolor{Gray} \spen & 122 & 174 & 10.78s & 14 & 13 & 3.44s & 8 & 2 & 1.69s\\
\cyclic & 0 & 58 & 1520m5s & 0 & 24 & 360m38s & 0 & 3 & 240m3s \\
\rowcolor{Gray} \harr & 39 & 116 & 425m19s & 18 & 27 & 53m56s & 8 & 7 & 156m45s \\
 \songbird & 12 & 174 & 237m25s & 2 & 35 & 40m38s & 0 & 12 & 47m11s \\
\rowcolor{Gray}\toolname & 122  & 174 & 6.22s & 24  & 36 & 0.96s & 14 & 13 & 1.20s \\
\hline
 \end{tabular}
\end{center}\savespace \savespace
\end{table}

\paragraph{Experiment results}
The experimental results are reported in Table \ref{tbl:expr.entl}.
In this table, the first column presents
the names of the tools.
The next three columns show the results of the first division
 including
the number of correct {\invalid} outputs,
the number of correct {\valid} outputs and
the time taken (where {\em m} for minutes and {\em s} for seconds), respectively.
In the third row,
the number between each pair of brackets {\em (...)} shows the number of problems
 in the corresponding column.
Similarly, the next two groups of six columns describe the results of the second
and third divisions, respectively.

In general, the experimental results show that {\toolname} is the one
(and only one) that could produce all the
correct results.
Other solvers either produced wrong results or could discharge a fraction of the experiments.
Moreover, {\toolname} took a short time for the experiments (8.38 seconds compared
to 15.91 seconds for {\spen}, 324 minutes for {\songbird}, 635 minutes
for {\harr}, 739 minutes for SLS and 2120 minutes for {\cyclic}).
While SLS returned 14 false negatives, {\spen} reported 20 false positives.  {\cyclic}, {\songbird} and {\harr} did not
produce any wrong result.
Of 569 tests,
while {\cyclic} could handle 85 tests (15\%), {\harr} could handle 215 tests (38\%)
and
{\songbird} could decide 235 tests (41.3\%).
 In total of 223 {\valid} tests,
while {\cyclic} could handle 85 problems (38\%), 
{\songbird} could decide 222 problems (99.5\%).

Now we examine the results for each division in details.
%\begin{itemize}
%\item
For {\em qf\_shls\_entl},
{\spen} returned all correct, % and 4 false positives,
{\songbird} 186,
{\harr} 155, and
{\cyclic} 58.
%(including all 174 {\valid} and
% 12 {\invalid} problems).
 %% On the remaining problems,
%% {\songbird} returned 95 {\unknown} and 15 timeout.
If we set the timeout to 2400 seconds, both
{\songbird} and {\harr} produced all the correct results.
%\item
For division {\em qf\_shlid\_entl} includes
 \form{24} {\invalid} problems and \form{36} {\valid} problems.
While {\songbird} produced
37 problems correctly,
%, 19 unknown, 3 timeout
%% For both divisions above, {\toolname} decided
%% all correct results within a much shorter time (21 seconds compared to
%% 278m minutes).
%
{\cyclic} produced 24 correct results.
{\spen} reported 27 correct results
and 13 false positives
(\code{skl2{-}vc\{01-04\}}
\code{skl3{-}vc01}, \code{skl3{-}vc\{03-10\}}).
%\item
For
the last division {\em qf\_shlid2\_entl} includes 14 {\invalid} test problems and 13 {\valid} test problems. While
{\songbird} decided only 12 problems correctly,
{\cyclic} produced 3 correct outcomes.
{\spen} reported 10 correct results. However, it produced
%% 7 wrong results including 1 false negative (\code{nll{-}mul{-}vc05})
%% and
7 false positives (\code{ls{-}mul{-}vc\{01-03\}},
 \code{ls{-}mul{-}vc05}, \code{nll{-}mul{-}vc\{01-03\}}).
%\end{itemize}

Since our experiments provide break-down results of the two divisions of
SL-COMP competition, we hope that they provide an initial understanding
of the SL-COMP benchmarks and tools. Consequently, this might reduce
the effort to prepare experiments over these benchmarks
 to evaluate new SL solvers.
%
%% We now discuss
%% two surprised observations  implied from the experiments.
%% First,%%  most tools worked well over singly-linked lists, the first decidable fragment with the simplest predicates.
%% %% Second,
%%  {\spen} introduced wrong results on {\em qf\_shlid\_entl},
%% and {\harr} did not handle {\em qf\_shlid\_entl} and {\harr} well
%% although these problems are in their decidable fragments.
%% Secondly,
%% SLS, a successor of {\songbird}, produced almost the same correct results
%% with {\songbird}. However, it returned a number of false negatives and took $2\times$ longer
%% than {\songbird}. We do not know the reason of these shortcomings.
%
Finally, one might point out that {\toolname} performed well because the entailments in the experiments
are within its scope. We do not totally disagree with this argument, but would like to emphasize
that tools do not always work well on favorable benchmarks. For example, {\spen} introduced wrong results on {\em qf\_shlid\_entl},
 and {\harr} did not handle {\em qf\_shlid\_entl} and {\em qf\_shlid2\_entl} well
 although these problems are in their decidable fragments.
 We believe that engineering design and effort play an important role along side with
 theory development.
%Therefore, to help with engineering development in this research field,
%we plan to make {\toolname}'s source code available in the near future.

\section{Related Work} \label{sec.related}

{\toolname} is a variant of the cyclic proof systems
\cite{Brotherston:05,Brotherston:CADE:11,Brotherston:APLAS:12,Le:APLAS:2018} and
\cite{Makoto:APLAS:2019}. % and two procedures
Unlike existing cyclic proof systems, the soundness of {\toolname} is local,
and the proof search is not back-tracking.
The work presented in \cite{Makoto:APLAS:2019} shows the completeness of the cyclic proof system.
Its main contribution is the introduction of rule \form{\sep} for
those entailments with disjunction in the RHS obtained from predicate unfolding.
 In contrast to \cite{Makoto:APLAS:2019}, our work includes
 normalization to soundly and completely avoid disjunction in the RHS during unfolding.
Our work also presents how to obtain the global soundness condition for cyclic proofs.
Moreover, our decidable fragment {\cslpluse} is non-overlapping to
the cone predicates introduced in \cite{Makoto:APLAS:2019}.
Furthermore, due to the empty heap in the base cases, the
matching rule in \cite{Makoto:APLAS:2019} cannot be applied to the predicates in {\cslpluse}.

Our work relates to the inductive theorem provers
introduced in \cite{Chu:PLDI:2015}, \cite{Ta:FM:2016} and Smallfoot \cite{Berdine:APLAS05}.
While \cite{Chu:PLDI:2015} is based on structural induction, \cite{Ta:FM:2016} is based on mathematical
induction.
  Smallfoot \cite{Berdine:APLAS05}
proposed a decision procedure for a fragment with linked lists
and trees (and without arithmetic).
To handle inductive entailments,
this system made use of a fixed compositional rule as
 consequences of induction reasoning.
This technique was further explored by the authors
in \cite{pldi:PerezR11}.
Compared with  Smallfoot,
our proof system replaces the compositional rule by the combination of
rule
\rulename{LInd} and the back-link construction.
In doing so, our system could support induction reasoning on
a much more expressive fragment of inductive predicates.

Our proposal also relates to works that use lemmas as consequences of induction reasoning
\cite{Berdine:APLAS05,EneaSW:ATVA:15,Loc:TACAS:2018,Ta:POPL:2018}.
These works in \cite{EneaSW:ATVA:15,Loc:CAV:2014,Loc:TACAS:2018,Ta:POPL:2018}
 automatically generate lemmas
for some classes of inductive predicates.
S2 \cite{Loc:CAV:2014} generated lemmas to normalize (such as split, equivalence)
the shapes of the synthesized data structures.
\cite{EneaSW:ATVA:15} proposed to generate several
 sets of lemmas not only for compositional predicates, but also for different predicates (e.g., completion lemmas, stronger lemmas and static parameter contraction lemmas).
To prove an entailment,
SLS \cite{Ta:POPL:2018} aims to infer general lemmas.
Similarly, {\stool} \cite{Loc:TACAS:2018} solves a more generic problem, frame inference, using cyclic proofs
and lemma synthesis.
It first infers shape-based residual frame in the LHS
and then synthesizes the pure constraints over the two sides.
It would be a future work to integrate the pure constraint synthesis
into {\toolname} to
support non-local pure properties.

{\toolname} relates to model-based decision procedures
that reduce the entailment problem in separation logic to
a well-studied problem in other domains. For instance,
in \cite{Chen:CONCUR:2017,Cook:CONCUR:2011,Gu:IJCAR:2016}
the entailment problem including singly-linked lists and their invariants
is reduced to the problem of inclusion checking in a graph
theory.
The authors in \cite{Iosif:CADE:13} reduced the entailment problem
to the satisfiability problem in second-order monadic logic.
 This reduction could
 handle an expressive fragment of spatial-based predicates, called bounded-tree width.
Recently, the work presented in \cite{Jens:TACAS:2019} show a model-based decision procedure
for a subfragment of the bounded-tree width.
%However, it is unclear how these works could support arithmetic constraints like our procedure.
Furthermore, while the work in
\cite{Enea:FMSD:2017,Radu:ATVA:2014}
reduced the entailment problem
to the tree automata inclusion checking problem,
\cite{Jansen:ESOP:2017} presented 
an idea to reduce the problem to the heap automata inclusion checking problem.
Moreover, while the procedure in \cite{Enea:FMSD:2017} supported well
compositional predicates (single and double links),
the procedure in \cite{Radu:ATVA:2014} could handle
predicates satisfying local properties (e.g., trees with parent pointers).
Our decidable fragment subsumes the one described in \cite{Berdine:APLAS05,Cook:CONCUR:2011,Enea:FMSD:2017} but is incompatible to the ones presented in \cite{Chen:CONCUR:2017,Gu:IJCAR:2016,Iosif:CADE:13,Radu:ATVA:2014}.
Works in
\cite{Navarro:APLAS:2013} and \cite{Piskac:CAV:2013,Piskac:CAV:2014} reduced the entailment problem in separation
logic into the satisfiability problem in SMT. 
While GRASShoper \cite{Piskac:CAV:2013,Piskac:CAV:2014} could handle transitive closure pure properties,
{\toolname} is capable of supporting local ones.
Unlike GRASShoper, which reduces entailment into SMT problems,
 {\toolname} reduces an entailment to admissible entailments and detects repetitions via cyclic proofs.

Our work relates to decidable fragments and
 complexity results of the entailment problem in separation logic with inductive
predicates. The entailment is 2-EXPTIME 
in cone predicates \cite{Makoto:APLAS:2019}, the bounded tree width predicates and beyond
 \cite{Iosif:CADE:13,echenim2021unifying},
and EXPTIME in a sub-fragment of cone predicates
 \cite{Radu:ATVA:2014}. In the other class,
entailment is in polynomial time
 for singly-linked lists
\cite{Cook:CONCUR:2011}, semantically linear inductive predicates \cite{Enea:FMSD:2017},
 and its extensions with arithmetic \cite{Gu:IJCAR:2016}
(but becomes EXPTIME when the lists are extended with
double links \cite{Chen:CONCUR:2017}).
Our fragment (with nested lists, trees and arithmetic properties) is roughly
 in the ``middle'' of the two classes above
where the entailment is EXPTIME and becomes polynomial
under the upper bound restriction.

%% Our proposal is closed to bi-abduction works \cite{Calcagno:POPL:2009,Trinh:APLAS:2013,Loc:CAV:2014,Brotherston:CADE:2017}.
%% While the pioneering work \cite{Calcagno:POPL:2009} focuses on linked lists,
%% authors in \cite{Trinh:APLAS:2013} target pure properties, second-order bi-abduction
%% \cite{Loc:CAV:2014} aims for
%% general inductive predicates and \cite{Brotherston:CADE:2017} supports array separation logic.
%% While our work supports bi-abduction for lineary compositional predicates, it would be
%% a future work to extend our work for
%% array separation logic combining with general inductive predicates and
%% more expressive pure properties.

\section{Conclusion}\label{sec.conc}
We have presented a novel decision procedure  for
the quantifier-free entailment problem in separation logic
combining with
inductive definitions of compositional predicates
 and pure properties.
Our proposal is the first complete cyclic proof system for the problem in separation logic
without back-tracking.
We have implemented the proposal in {\toolname} and evaluated it
over the set of nontrivial entailments taken from the SL-COMP competition.
The experimental results show that
our proposal is both effective and efficient when compared
against the state-of-the-art solvers.

For future work, we plan to combine this proposal
with the cyclic frame inference procedure presented in
\cite{Loc:TACAS:2018}
for a bi-abductive procedure. This is a basic step
to obtain
a compositional shape analysis beyond the lists and trees.
Another work is to formally prove that our system
is as strong as Smallfoot in the decidable fragment with lists and trees \cite{Berdine:APLAS05}: Given an entailment, if Smallfoot can produce a proof,
so is {\toolname}.

\bibliography{refs}

\begin{thebibliography}{10}

\bibitem{Timos:FOSSACS:2014}
Timos Antonopoulos, Nikos Gorogiannis, Christoph Haase, Max Kanovich, and
  Jo{\"e}l Ouaknine.
\newblock Foundations for decision problems in separation logic with general
  inductive predicates.
\newblock In Anca Muscholl, editor, {\em Foundations of Software Science and
  Computation Structures}, pages 411--425, Berlin, Heidelberg, 2014. Springer
  Berlin Heidelberg.

\bibitem{Berdine:APLAS05}
J.~Berdine, C.~Calcagno, and P.~W. O'Hearn.
\newblock {Symbolic Execution with Separation Logic}.
\newblock In {\em APLAS}, volume 3780, pages 52--68, November 2005.

\bibitem{Brotherston:05}
J.~Brotherston.
\newblock Cyclic proofs for first-order logic with inductive definitions.
\newblock In {\em Proceedings of {TABLEAUX-14}}, volume 3702 of {\em LNAI},
  pages 78--92. Springer-Verlag, 2005.

\bibitem{Brotherston:APLAS:12}
J.~Brotherston, N.~Gorogiannis, and R.~L. Petersen.
\newblock A generic cyclic theorem prover.
\newblock In {\em Proceedings of {APLAS-10}}, LNCS, pages 350--367. Springer,
  2012.

\bibitem{Brotherston:CADE:11}
James Brotherston, Dino Distefano, and Rasmus~Lerchedahl Petersen.
\newblock Automated cyclic entailment proofs in separation logic.
\newblock In {\em Proceedings of the 23rd International Conference on Automated
  Deduction}, CADE’11, page 131–146, Berlin, Heidelberg, 2011.
  Springer-Verlag.

\bibitem{Cristiano:NFM:15}
Cristiano Calcagno, Dino Distefano, Jeremy Dubreil, Dominik Gabi, Pieter
  Hooimeijer, Martino Luca, Peter O'Hearn, Irene Papakonstantinou, Jim
  Purbrick, and Dulma Rodriguez.
\newblock Moving fast with software verification.
\newblock In Klaus Havelund, Gerard Holzmann, and Rajeev Joshi, editors, {\em
  NASA Formal Methods}, pages 3--11, Cham, 2015. Springer International
  Publishing.

\bibitem{Calcagno:POPL:2009}
Cristiano Calcagno, Dino Distefano, Peter~W. O'Hearn, and Hongseok Yang.
\newblock Compositional shape analysis by means of bi-abduction.
\newblock In {\em POPL}, pages 289--300, 2009.

\bibitem{Chen:CONCUR:2017}
Taolue Chen, Fu~Song, and Zhilin Wu.
\newblock {Tractability of Separation Logic with Inductive Definitions: Beyond
  Lists}.
\newblock In Roland Meyer and Uwe Nestmann, editors, {\em 28th International
  Conference on Concurrency Theory (CONCUR 2017)}, volume~85 of {\em Leibniz
  International Proceedings in Informatics (LIPIcs)}, pages 37:1--37:17,
  Dagstuhl, Germany, 2017. Schloss Dagstuhl--Leibniz-Zentrum fuer Informatik.

\bibitem{Chin:CAV:2011}
W.-N. Chin, C.~Gherghina, R.~Voicu, Q.-L. Le, F.~Craciun, and S.~Qin.
\newblock A specialization calculus for pruning disjunctive predicates to
  support verification.
\newblock In {\em CAV}. 2011.

\bibitem{Chu:PLDI:2015}
Duc-Hiep Chu, Joxan Jaffar, and Minh-Thai Trinh.
\newblock Automatic induction proofs of data-structures in imperative programs.
\newblock In {\em Proceedings of PLDI}, PLDI '15, pages 457--466, New York, NY,
  USA, 2015. ACM.

\bibitem{Cook:CONCUR:2011}
B.~Cook, C.~Haase, J.~Ouaknine, M.~Parkinson, and J.~Worrell.
\newblock Tractable reasoning in a fragment of separation logic.
\newblock In {\em CONCUR}, volume 6901, pages 235--249. 2011.

\bibitem{8882771}
Christopher Curry, Quang~Loc Le, and Shengchao Qin.
\newblock Bi-abductive inference for shape and ordering properties.
\newblock In {\em 2019 24th International Conference on Engineering of Complex
  Computer Systems (ICECCS)}, pages 220--225, 2019.

\bibitem{Distefano:CACM:2019}
Dino Distefano, Manuel F\"{a}hndrich, Francesco Logozzo, and Peter~W. O'Hearn.
\newblock Scaling static analyses at facebook.
\newblock {\em Commun. ACM}, 62(8):62–70, jul 2019.

\bibitem{echenim2021unifying}
Mnacho Echenim, Radu Iosif, and Nicolas Peltier.
\newblock Unifying decidable entailments in separation logic with inductive
  definitions.
\newblock In {\em Automated Deduction-CADE 28-28th International Conference on
  Automated Deduction, Virtual Event, July 12-15, 2021, Proceedings}, pages
  183--199, 2021.

\bibitem{Enea:FMSD:2017}
Constantin Enea, Ondrej Leng{\'{a}}l, Mihaela Sighireanu, and Tom{\'{a}}s
  Vojnar.
\newblock Compositional entailment checking for a fragment of separation logic.
\newblock {\em Formal Methods in System Design}, 51(3):575--607, 2017.

\bibitem{EneaSW:ATVA:15}
Constantin Enea, Mihaela Sighireanu, and Zhilin Wu.
\newblock On automated lemma generation for separation logic with inductive
  definitions.
\newblock {\em ATVA}, 2015.

\bibitem{Gu:IJCAR:2016}
Xincai Gu, Taolue Chen, and Zhilin Wu.
\newblock {\em A Complete Decision Procedure for Linearly Compositional
  Separation Logic with Data Constraints}, pages 532--549.
\newblock Springer International Publishing, Cham, 2016.

\bibitem{Iosif:CADE:13}
R.~Iosif, A.~Rogalewicz, and J.~Sim{\'a}cek.
\newblock The tree width of separation logic with recursive definitions.
\newblock In {\em CADE}, pages 21--38, 2013.

\bibitem{Radu:ATVA:2014}
Radu Iosif, Adam Rogalewicz, and Tom{\'a}s Vojnar.
\newblock Deciding entailments in inductive separation logic with tree
  automata.
\newblock {\em ATVA}, 2014.

\bibitem{Ishtiaq:POPL01}
S.~Ishtiaq and P.W. O'Hearn.
\newblock {BI as an assertion language for mutable data structures}.
\newblock In {\em ACM POPL}, pages 14--26, London, January 2001.

\bibitem{Jansen:ESOP:2017}
Christina Jansen, Jens Katelaan, Christoph Matheja, Thomas Noll, and Florian
  Zuleger.
\newblock {\em Unified Reasoning About Robustness Properties of Symbolic-Heap
  Separation Logic}, pages 611--638.
\newblock Springer Berlin Heidelberg, Berlin, Heidelberg, 2017.

\bibitem{Katelaan:IJCAI:2018}
Katelaan Jens, Jovanovic Dejan, and Weissenbacher Georg.
\newblock A separation logic with data: Small models and automation.
\newblock In {\em IJCAI}, 2018.

\bibitem{Jens:TACAS:2019}
Jens Katelaan, Christoph Matheja, and Florian Zuleger.
\newblock Effective entailment checking for separation logic with inductive
  definitions.
\newblock In Tom{\'a}{\v{s}} Vojnar and Lijun Zhang, editors, {\em Tools and
  Algorithms for the Construction and Analysis of Systems}, pages 319--336,
  Cham, 2019. Springer International Publishing.

\bibitem{Le:VMCAI:2021}
Quang~Loc Le.
\newblock Compositional satisfiability solving in separation logic.
\newblock In Fritz Henglein, Sharon Shoham, and Yakir Vizel, editors, {\em
  Verification, Model Checking, and Abstract Interpretation}, pages 578--602,
  Cham, 2021. Springer International Publishing.

\bibitem{Loc:CAV:2014}
Quang~Loc Le, Cristian Gherghina, Shengchao Qin, and Wei-Ngan Chin.
\newblock Shape analysis via second-order bi-abduction.
\newblock In {\em CAV}, volume 8559, pages 52--68. 2014.

\bibitem{Le:APLAS:2018}
Quang~Loc Le and Mengda He.
\newblock A decision procedure for string logic with quadratic equations,
  regular expressions and length constraints.
\newblock In Sukyoung Ryu, editor, {\em Programming Languages and Systems},
  pages 350--372, Cham, 2018. Springer International Publishing.

\bibitem{Loc:CAV:2016}
Quang~Loc Le, Jun Sun, and Wei-Ngan Chin.
\newblock Satisfiability modulo heap-based programs.
\newblock In {\em CAV}. 2016.

\bibitem{Loc:TACAS:2018}
Quang~Loc Le, Jun Sun, and Shengchao Qin.
\newblock Frame inference for inductive entailment proofs in separation logic.
\newblock In Dirk Beyer and Marieke Huisman, editors, {\em Tools and Algorithms
  for the Construction and Analysis of Systems}, pages 41--60, 2018.

\bibitem{Loc:CAV:2017}
Quang~Loc Le, Makoto Tatsuta, Jun Sun, and Wei-Ngan Chin.
\newblock A decidable fragment in separation logic with inductive predicates
  and arithmetic.
\newblock In {\em CAV}, pages 495--517, 2017.

\bibitem{McPeak:CAV:2005}
Scott McPeak and George~C. Necula.
\newblock Data structure specifications via local equality axioms.
\newblock In Kousha Etessami and Sriram~K. Rajamani, editors, {\em Computer
  Aided Verification}, pages 476--490, Berlin, Heidelberg, 2005. Springer
  Berlin Heidelberg.

\bibitem{pldi:PerezR11}
Juan~Antonio Navarro~P\'{e}rez and Andrey Rybalchenko.
\newblock Separation logic + superposition calculus = heap theorem prover.
\newblock In {\em Proceedings of the 32nd ACM SIGPLAN Conference on Programming
  Language Design and Implementation}, PLDI ’11, page 556–566, New York,
  NY, USA, 2011. Association for Computing Machinery.

\bibitem{Navarro:APLAS:2013}
JuanAntonio Navarro~Pérez and Andrey Rybalchenko.
\newblock Separation logic modulo theories.
\newblock In {\em APLAS}, volume 8301, pages 90--106. 2013.

\bibitem{Piskac:CAV:2013}
R.~Piskac, T.~Wies, and D.~Zufferey.
\newblock Automating separation logic using smt.
\newblock In Natasha Sharygina and Helmut Veith, editors, {\em CAV}, volume
  8044, pages 773--789. 2013.

\bibitem{Piskac:CAV:2014}
Ruzica Piskac, Thomas Wies, and Damien Zufferey.
\newblock Automating separation logic with trees and data.
\newblock In {\em CAV}, volume 8559, pages 711--728. 2014.

\bibitem{Reynolds:LICS02}
J.~Reynolds.
\newblock {Separation Logic: A Logic for Shared Mutable Data Structures}.
\newblock In {\em IEEE LICS}, pages 55--74, 2002.

\bibitem{slcomp:sl:22}
Mihaela Sighireanu and Quang~Loc Le.
\newblock {SL-COMP 2022}.
\newblock https://sl-comp.github.io/, 2022.
\newblock [Online; accessed Jun-2022].

\bibitem{Sighireanu:TACAS:19}
Mihaela Sighireanu, Juan Antonio~Navarro P{\'{e}}rez, Andrey Rybalchenko, Nikos
  Gorogiannis, Radu Iosif, Andrew Reynolds, Cristina Serban, Jens Katelaan,
  Christoph Matheja, Thomas Noll, Florian Zuleger, Wei{-}Ngan Chin, Quang~Loc
  Le, Quang{-}Trung Ta, Ton{-}Chanh Le, Thanh{-}Toan Nguyen, Siau{-}Cheng Khoo,
  Michal Cyprian, Adam Rogalewicz, Tom{\'{a}}s Vojnar, Constantin Enea, Ondrej
  Leng{\'{a}}l, Chong Gao, and Zhilin Wu.
\newblock {SL-COMP:} competition of solvers for separation logic.
\newblock In {\em Tools and Algorithms for the Construction and Analysis of
  Systems - 25 Years of {TACAS:} TOOLympics}, pages 116--132, 2019.

\bibitem{Ta:FM:2016}
Quang-Trung Ta, Ton~Chanh Le, Siau-Cheng Khoo, and Wei-Ngan Chin.
\newblock Automated mutual explicit induction proof in separation logic.
\newblock In John Fitzgerald, Constance Heitmeyer, Stefania Gnesi, and Anna
  Philippou, editors, {\em FM 2016: Proceedings}, pages 659--676, 2016.

\bibitem{Ta:POPL:2018}
Quang{-}Trung Ta, Ton~Chanh Le, Siau{-}Cheng Khoo, and Wei{-}Ngan Chin.
\newblock Automated lemma synthesis in symbolic-heap separation logic.
\newblock {\em POPL}, 2018.

\bibitem{Makoto:APLAS:2019}
Makoto Tatsuta, Koji Nakazawa, and Daisuke Kimura.
\newblock Completeness of cyclic proofs for symbolic heaps with inductive
  definitions.
\newblock In Anthony~Widjaja Lin, editor, {\em Programming Languages and
  Systems}, pages 367--387, Cham, 2019. Springer International Publishing.

\end{thebibliography}
 \bibliographystyle{plain}

  \repconf{
\newpage
 \appendix

\section{Reduction Rules for Compositional Predicates in General Form}\label{app.ent.red}
\begin{figure}[tb]
%\small
\begin{center}
\[
\AxiomC{$\begin{array}{c}
    \sub {=} \circ \{\setvars{v}_i {/} \setvars{p}_i \mid \setvars{p}_i \in \setvars{w} \wedge \setvars{p}_i {\neq}\nil \} \\
    { \sepnodeF{x}{c}{\setvars{v}} {\sep}\heap_1{\wedge}\pure_1{\wedge}x{\neq}F}
 ~ \ent~ { (\exists (\setvars{w}{\setminus}\setvars{p}) . \sepnodeF{x}{c}{\setvars{p}}{\sep}\heap'{\sep}\seppredF{\code{P}}{w{,}F{,}\setvars{B}{,}u{,}sc'{,}tg}{\wedge}\pure_0)\sub{\sep}\heap_2 {\wedge}\pure_2} 
      %% \setvars{t}  \text{ are fresh}
  \end{array}$}
\RightLabel{\scriptsize $\begin{array}{c} \dagger
    %%\code{P} {\in}  {\csl} \\
%%      (\{x\}{\cup}\setvars{t})) {\cap} \setvars{w}_1 {=}\emptyset \\
%%      (\{x,F\}{\cup}\setvars{v}){\cap} \setvars{w}_2 {=}\emptyset
 \end{array}$}
\LeftLabel{\scriptsize RInd}
\UnaryInfC{$\begin{array}{l}
\entailNCyc{  \sepnodeF{x}{c}{\setvars{v}} {\sep}\heap_1{\wedge}\pure_1{\wedge}x{\neq}F}{ \seppredF{\code{P}}{x{,}F{,}\setvars{B}{,}u{,}sc{,}tg}{\sep} \heap_2{\wedge}\pure_2}{\heap}
\end{array}$}
\DisplayProof
\]
\[
\AxiomC{
$
\begin{array}{l}
  %\entailCyc{(\heap{\wedge}\pure)\sub}{(\seppredF{\code{P}}{x,F_2,\setvars{B}{,}s{,}t'} {\sep} \heap_2{\wedge}\pure_2)\sub}{\heap\sub} \qquad \quad \sub{\equiv}[F/x,s/t] 
  { (\exists (\setvars{w}{\setminus}\setvars{p}) . \sepnodeF{x}{c}{\setvars{p}}{\sep}\heap'{\sep}\seppredF{\code{P}}{w{,}F{,}\setvars{B}{,}u{,}sc'{,}tg}^{k{+}1}{\wedge}\pure_0){\sep}\heap_1 {\wedge}\pure_1{\wedge}x{\neq}F_3} \\ \qquad \ent~{ \seppredF{\code{Q}}{x{,}F_3{,}\setvars{B}{,}u{,}sc{,}tg_2} {\sep} \heap_2{\wedge}\pure_2}  \\
\end{array}
$
}
\RightLabel{\scriptsize $\seppredF{\code{P}}{x{,}F{,}\setvars{B}{,}u{,}sc{,}tg}{\not\in}\heap_2$}
\LeftLabel{\scriptsize LInd}
\UnaryInfC{$\begin{array}{l}
    \entailNCyc{\seppredF{\code{P}}{x{,}F{,}\setvars{B}{,}u{,}sc{,}tg}^k{\sep}\heap_1{\wedge}\pure_1{\wedge}x{\neq}F_3
      }{\seppredF{\code{Q}}{x{,}F_3{,}\setvars{B}{,}u{,}sc{,}tg_2} {\sep} \heap_2{\wedge}\pure_2}{\heap}
\end{array}$}
\DisplayProof
\]
\caption{Reduction Ruleswhere  $\dagger{:}~ \sepnodeF{x}{c}{\setvars{v}}  {\not\in} \heap_2$
}
\label{fig.gen.ent.comp.red}
\end{center}\savespace
\end{figure}
In Figure \ref{fig.gen.ent.comp.red},
we present rules \rulename{RInd1} and \rulename{LInd}
for the following definitions of compositional predicates:
\[
  \seppredF{\code{P}}{x{,}F{,}\setvars{B}{,}u{,}sc{,}tg} \equiv \emp {\wedge}x{=} F {\wedge}sc{=}tg ~\vee~ \exists \setvars{w} . \sepnodeF{x}{c}{\setvars{p}}{\sep}\heap'{\sep}\seppredF{\code{P}}{w{,}F{,}\setvars{B}{,}u{,}sc'{,}tg}{\wedge}\pure_0;
\]
where $\setvars{w}$ are fresh variables.

To define {\cslpluse} base in a general form,
we further
 assume every heap cells \form{c_i \in \Dns} used in definitions
of compositional predicates  \form{\seppred{\code{P}}{r{,} F{,} \setvars{B}{,}u{,}sc{,}tg}}
are defined in the form of
\form{\code{data}~ c_i \{c_i ~ next;c_{i_1}~down_1;...;c_{i_j}~down_j;\tau_u ~udata;\tau_s~scdata\}}
where \form{c_{i_1},..,c_{i_j} \in \Dns},
\form{down_1},..., \form{down_j} fields are for the nested structures in the matrix heaps,
\form{udata} field is for the transitivity data, and \form{scdata} field are for %% either injectivity or
ordering data. Then, {\cslpluse} base of an occurrence
of the compositional predicates is defined as:
\[
\basecomp{\seppred{\code{P}}{E{,}F{,}\setvars{B}{,}u{,}sc{,}tg}} ~{\defsym}~ \sepnodeF{E}{c}{F{,}\setvars{d},tg{,}u}\subst{\setvars{v}}{\setvars{d}}{\wedge} \pure_0\subst{tg}{scd} \sep \basecomp{\heap'(\subst{\setvars{v}}{\setvars{d}}\circ[tg/scd])}
\]

\section{Proof of Corollary \ref{cor.free.inv}}
\begin{proof}
We need to show that the premises in rule \rulename{RInd}
and rule \rulename{LInd} are quantifier-free.
The condition {\bf C1} in section \ref{spec.deci.ent} ensures that
\form{\setvars{w} \subseteq \setvars{p}}. Hence,
\form{\setvars{w} \setminus \setvars{p} \equiv \emptyset}.
Thus, the RHS of the premise in \rulename{RInd}
and the LHS of the premise in \rulename{LInd} are quantifier-free.
\end{proof}

\section{Proof of Soundness}
We show the correctness of the soundness
 of the proof system.

\subsection{Soundness of proof rules: Lemma \ref{ent.local.sound}}

For each rule, we show that if all the premises hold, so is the conclusion

\paragraph{Rule \rulename{Subst}}. First, we consider the
case \form{E} is a variable.
Suppose \form{\D[v/x] ~\ent~ \D'[v/x]}.
That is for any {\sstack, \sheaps}, if \form{\sstack, \sheaps \models \D[v/x]} then \form{\sstack, \sheaps \models \D'[v/x]}.
As \form{x \not\in \FV(\D[v/x])}, we
 extend the domain
of stack with \form{x} as:
\form{\sstack' = \sstack[x {\pto} {\sstack(v)}]}.
As so, \form{\sstack', \sheaps \models \D \wedge v=x}
and \form{\sstack', \sheaps \models \D'}.
Therefore \form{\D \wedge v=x ~\models~ \D'} holds.

The case \form{E} is \form{\nil} is similar.

\paragraph{Rule \rulename{ExM}}
For simplicity, we assume that \form{E_1} and \form{E_2}
are both variables.
Suppose \form{\D \wedge v_1=v_2 ~\ent~ \D'}
and \form{\D\wedge v_1\neq v_2 ~\ent~ \D'}.

Suppose \form{\sstack, \sheaps \models \D}.

\begin{itemize}
\item Case 1: if \form{\sstack(v_1) = \sstack(v_2)} then
\form{\sstack, \sheaps \models \D \wedge v_1=v_2}.
As \form{\D \wedge v_1=v_2 ~\ent~ \D'},
 \form{\sstack, \sheaps \models \D'}.
\item Case 1: if \form{\sstack(v_1) \neq \sstack(v_2)} then
\form{\sstack, \sheaps \models \D \wedge v_1\neq v_2}.
As \form{\D \wedge v_1\neq v_2 ~\ent~ \D'},
 \form{\sstack, \sheaps \models \D'}.
\end{itemize}
\paragraph{Rule \rulename{{=}L}, rule \rulename{{=}R}, and
rule
\rulename{Hypothesis}} Trivial.

\paragraph{Rule \rulename{LBase} and rule \rulename{RBase}}
based on the fact that given a compositional predicate
\form{\seppredF{\code{P}}{E{,}F{,}\setvars{B}{,}u{,}sc{,}tg}} where \form{F} is a dangling pointer,
then \form{\form{\seppredF{\code{P}}{E{,}E{,}\setvars{B}{,}u{,}sc{,}tg}}}
implies the base rule with \form{\emp} heap predicate.

\paragraph{Rule \rulename{{\neq}\nil}} 
Follows semantics of points-to predicate
 where \form{\nil \not\in \Locations}.

\paragraph{Rule \rulename{{\neq}\sep}}
Follows semantics of the spatial conjunction \form{\sep}.

\paragraph{Rule \rulename{\sep}}

Suppose \form{\heap_1\wedge \pure \models \heap_2\wedge \pure' }
and \form{\heap\wedge \pure \models \heap'\wedge \pure' }.

For any \form{\sstack, \sheaps_1 \models \heap_1\wedge \pure},
\form{\sstack, \sheaps_1 \models \heap_2\wedge \pure'}.
And any \form{\sstack, \sheaps_2 \models \heap \wedge \pure},
\form{\sstack, \sheaps_2 \models \heap' \wedge \pure}.
as \form{\code{roots}(\heap_1)\cap\code{roots}(\heap)=\emptyset}
\form{\dom(\sheaps_1) \cap \dom(\sheaps_2) = \emptyset}.
Hence \form{\sstack, \sheaps_1\dot\sheaps_2 \models \heap_1\sep \heap_2 \wedge \pure} (a).
Similarly, \form{\sstack, \sheaps_1\dot\sheaps_2 \models \heap_2\sep \heap' \wedge \pure'} (b).

From (a), (b), \form{\heap_1\sep \heap_2 \wedge \pure \models  \heap_2\sep \heap' \wedge \pure'}.

\paragraph{Rule \rulename{LInd} and \rulename{RInd}}.
Based on the least semantics of the inductive predicates
and the base case could not happen due to constraint
\form{x\neq F} in \rulename{RInd}
(respectively \form{x\neq F_3} in \rulename{LInd}).

\subsection{Global Soundness: Lemma \ref{ent.global.sound}}
\begin{figure}[t!]
\centering
\begin{tikzpicture}[node distance=18mm,level 1/.style={sibling distance=22mm},
      level 2/.style={sibling distance=7mm},
                        level distance=22pt, draw]
  \tikzstyle{every state}=[draw,text=black]

\node (A)                    {{$\D_0$}};
  \node         (B) [below left=4mm and 8mm of A] {$\D_{1}$};
  \node         (C) [below right=4mm and 8mm of A] {\textcolor{blue}{$\D_{2}^\clubsuit$}};
\node         (C4) [below right=3mm and 1mm of C] {....};
\node         (C1) [below right=3mm and 1mm of C4] {\textcolor{red}{$\D_{3}$}};
  \node    (C3) [below left=3mm and 1mm of C] {{$\D_{7}$}};
 \node         (C12) [below right=3mm and 1mm of C1] {{...}};
 \node         (C2) [below right=3mm and 1mm of C12] {{$\D_{4}$}};
 \node         (D1) [below left=3mm and 1mm of C2] {...};
  \node         (D) [below left=3mm and 1mm of D1] {\textcolor{red}{$\D_{5}$}};
  \node         (E1) [below right=3mm and 1mm of C2] {...};
\node         (E) [below right=3mm and 1mm of E1] {\textcolor{blue}{$\D_{6}^\clubsuit$}};

  \path (A) edge              node {} (B)
            edge              node {} (C)
        (C) edge              node {} (C4)
            edge              node {} (C3)
        (C4) edge    [dotted]          node {} (C1)
        (C1) edge              node {} (C12)
        (C12) edge    [dotted]          node {} (C2)
        (C2) edge              node {} (D1)
            edge              node {} (E1)
        (D1) edge   [dotted]           node {} (D)
       (E1) edge   [dotted]           node {} (E)
        (E) edge [->,bend right=60,dotted]  node {} (C)
            (D) edge [->,bend left=60,dotted]  node {} (C1)
;
\end{tikzpicture}
%\end{frameit}
\caption{An Example of Non-Disjoint Back-Links.}
\label{fig.cex.tree}
\end{figure}
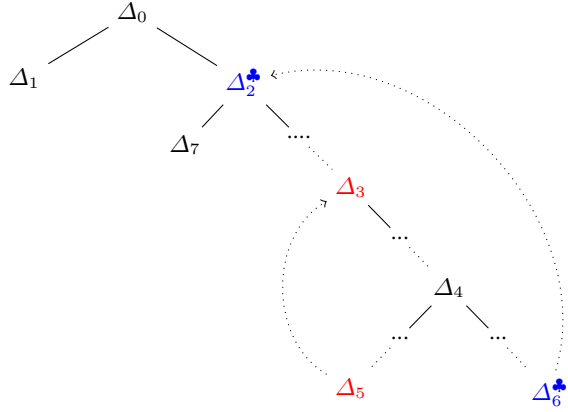

As our system always generates back-links with progressing points
(via rule \rulename{LInd}), there are infinitely progressing
points in any infinite trace.

We now
show that all cycles are pairwise disjoint (such that in the path
between a companion and a bud of every back-link, no rule can ever ``delete'' an inductive predicate
formula on which the soundness relies).
We prove by contradiction.

In intuition, the soundness replies on a pair of
inductive predicates in a sub-term relationship.
 Given an inductive predicate \form{\seppred{\code{P}}{E{,} F{,} \setvars{B}{,}\setvars{v}}}
 only rule \rulename{ExM} is able to generate
the constraint \form{E{=}F} such that
\form{\seppred{\code{P}}{E{,} F{,} \setvars{B}{,}\setvars{v}}} can be transformed into  \form{\seppred{\code{P}}{F{,} F{,} \setvars{B}{,}\setvars{v}'}} via rule \rulename{Subst}
and finally eliminated by rule \rulename{LBase}.
We now show that every companion node of a back-link involving a bud that is in the branch
\form{E{\neq}F} of rule \rulename{ExM}
is below the node including
 the applications of rule \rulename{ExM}.

Assume that our system generates back-links with
 non-disjoint cycles. Two cycles are non-disjoint only
when their both companion nodes are above at least one branch (applications of rule \rulename{ExM}).
 The non-disjoint cycles is similar to the one as shown in the proof tree
 in Fig. \ref{fig.cex.tree} where there is no branch in the path between \form{\D_2} and \form{\D_3}.
(The proof for the case where \form{\D_5} is linked with \form{\D_2} and \form{\D_6} is linked with \form{\D_3} is similar.
We discuss this proof below.)

We prove the contradiction by case analysis on the pair of
variables \form{E_1} and \form{E_2}
applied with rule \rulename{ExM} at node \form{\D_4}.
As \rulename{ExM} is applied to introduce \form{G(op(E))}
for every \form{op(E)} in the LHS of entailments.
We proceed case analysis on \form{op(E)}.

\begin{enumerate}
\item Case 1. \form{op(E) \equiv E{\pto}\anon} and \rulename{ExM} at node \form{\D_4} does case split \form{E{=}\nil} and
\form{E{\neq}\nil} to obtain two children. Assume that
the left child (on the path from \form{\D_4} to \form{\D_5})
 is \form{\D_4 {\wedge} E{=}\nil}. After substitution, LHS of
 this node
is reduced to \form{\D_4' \sep \nil{\pto}\anon}
which is equivalent to \form{\false}. Thus, the back-link from
\form{\D_5} to \form{\D_3} could not established.
\item Case 2. \form{op(E) \equiv \seppred{\code{P}}{E{,} F{,} \setvars{B}{,}\setvars{v}}} and \rulename{ExM} at node \form{\D_4} does case split \form{E{=}F} and
\form{E{\neq}F} to obtain two children.
 Assume that
the left child (on the path from \form{\D_4} to \form{\D_5})
 is \form{\D_4 {\wedge} E{=}F}.
After substitution, LHS of this node
is reduced to \form{\D_4' \sep \seppred{\code{P}}{F{,} F{,} \setvars{B}{,}\setvars{v}'}}.
 In turn, this entailment is applied with normalization rule
\rulename{LBase} to eliminate \form{\seppred{\code{P}}{F{,} F{,} \setvars{B}{,}\setvars{v}'}}.
Next, we consider two sub-cases of the inductive predicate in any
application of rule \rulename{LInd} applied into a node
 between \form{\D_3} and \form{\D_5}.
\begin{enumerate}
\item the predicate applied is  \form{\seppred{\code{P}}{E'{,} F{,} \setvars{B}{,}\setvars{v}}} .  We note that in the recursive rule
of definitions of compositional predicates
\[
\form{\exists X{,} sc'{,}d_1{,} d_2 . \sepnodeF{x}{c}{X{,}d_1{,}d_2{,}u{,}sc}{\sep}\seppredF{\code{Q_1}}{d_1{,}B}{\sep}\seppredF{\code{Q_2}}{d_2{,}X}{\sep}\seppredF{\code{P}}{X{,}F{,}\setvars{B}{,}u{,}sc'{,}tg}{\wedge}\pure_0}
\]
all  nested predicates \form{Q_1}, \form{Q_2}
are syntactically different to \form{P}.
 \form{\D_5} is missing
one occurrence of predicate \form{P}. Hence, it could not be linked
back to \form{\D_3}.
\item the predicate applied is  \form{\seppred{\code{Q}}{E'{,}F'{,}F{,}\setvars{v}'}} such that
\form{\seppredF{\code{P}}{U{,}F{,}\setvars{B}{,}u'{,}sc'{,}tg'}} is a nested predicate in the definition
of \form{Q}. However, \form{u'}, \form{tg'} are fresh variables
and in any back-links they are never substituted to become \form{u} and \form{tg},
respectively. Hence, \form{\D_5} could not be linked
back to \form{\D_3}.
\end{enumerate}
\item Case 3.  \form{op(E) \equiv \seppred{\code{tree}}{E{,} \setvars{B}{,}\setvars{v}}} and \rulename{ExM} at node \form{\D_4} does case split \form{E{=}\nil} and
\form{E{\neq}\nil} to obtain two children.
As \rulename{LInd} only applies for compositional predicates,
\form{\seppred{\code{tree}}{E{,} \setvars{B}{,}\setvars{v}}} could not be
a fresh formula. It has been normalised in
 \form{\D_3} already.
This case could not be occurred.
\end{enumerate}

The proof for the case where \form{\D_5} is linked with \form{\D_2} and \form{\D_6} is linked with \form{\D_3} is similar.
The main difference is that we need
to show that predicate \form{\seppred{\code{P}}{E{,} F{,} \setvars{B}{,}\setvars{v}}} is a sub-formula
of \form{\D_2}  in the proof of {\bf Case 2} like above. That means \form{\seppred{\code{P}}{E{,} F{,} \setvars{B}{,}\setvars{v}}}
has not been eliminated by rule \rulename{ExM} in the path between \form{\D_2}
\form{\D_3}. This is straightforward as
no branch exists in the path between \form{\D_2} and \form{\D_3}.

\section{Proofs of Termination}\label{app.term}
\subsection{Proof of Lemma \ref{lem.term.shape}}
\begin{proof}
Termination of our system is based on the size of an entailment which
is defined as:
\begin{defn}[Size]
The size of an entailment \form{\enode{:}\entailNCyc{ \heap_a{\wedge}\phi_a{\wedge}\a_a}{\heap_c{\wedge}\phi_c{\wedge}\a_c}{\heap}} is a triple of:
\begin{enumerate}
\item \form{N_p{-}n_p} where \form{N_p} is the maximal number of both points-to predicates and occurrences of inductive predicates that
the RHS of any entailments derived (by {\allEnt}) from {\enode}
may contain,
%% that could be generated from the applications of rule \rulename{LInd}
and \form{n_p} is the total number of both points-to predicates and occurrences of inductive predicates in \form{\heap_c}.
\item \form{N_e{-}n_e} where \form{N_e} is the maximal number of both disequalities and non-trivial equalities that the LHS
of any entailments derived (by {\allEnt}) from {\enode}
may contain, and \form{n_e} is the number of both disequalities and non-trial equalities in \form{\phi_a}.
\item the sum of the length of \form{\heap_a{\wedge}\phi_a{\wedge}\a_a ~\ent~\heap_c{\wedge}\phi_c{\wedge}\a_c},
where length is defined in the obvious way taking all
simple formulas to have length 1.
\item \form{N_a}: the number of constraints on arithmetic properties generated by the recursive
rules of inductive definition.
\end{enumerate}
\end{defn}

If \form{N_p} and \form{N_e} are bounded, applying any rules except \rulename{LInd}
 makes
progress since the size of each premise of
any rule application is lexicographically less than the size of the conclusion.
\form{N_p} and \form{N_a} rely on the number of applications of rule \rulename{LInd}.
\form{N_e} depends on the number applications of rule \rulename{ExM}. In turn, the application of \rulename{ExM} relies on the number
of spatial variables. Thus, \form{N_e} also relies on the number applications of rule \rulename{LInd}.
%To show that \form{N_p} and \form{N_e} are bound,
To show the termination, we show that the number applications of rule \rulename{LInd}
is bounded.
In consequence, this bound is achieved if the number of applications of \rulename{\sep} is finite.
As the number of inductive symbols as their arities are finite,
rule \form{\sep} indeed generates a finite number of equivalent classes of entailments in which two entailments
in the same class are equivalent after some substitution.
Thus, all entailments in the same class are linked back together
through a finite number of steps.
\end{proof}
\subsection{Proof of Lemma \ref{lem.term.comp.shape}}
Suppose we have an entailment \form{\entailNCyc{\seppredF{\code{P}}{E{,}F{,}\setvars{B}{,}\setvars{v}}^k \sep \heap {\wedge} \pure}{\heap'{\wedge}\pure'}{\heap}}.
If \form{\pure \not{\models}\pure'} then exhaustively applying rule \rulename{ExM}
our system decides it as {\invalid} through the base cases like \form{E{=}F}.

If \form{\pure {\models}\pure'}, then our system applies rule 
\rulename{Hypothesis}
to obtain \form{\entailNCyc{\seppredF{\code{P}}{E{,}F{,}\setvars{B}{,}\setvars{v}}^k \sep \heap {\wedge} \pure}{\heap'}{\heap}}. Hence, in the following proof, we only consider the later form of
the entailment in conclusion of rule \rulename{LInd}.

Without loss of generality, we assume \form{\PName} includes \form{4} predicates definitions:
\code{P_1},  \code{P_2},
\code{Q_1} and \code{Q_2} where
\code{P_1}{\orderp}\code{P_2} (that is the recursive branch
of predicate definition \code{P_2} contains one and only one occurrence of 
predicate \code{P_1} and \code{P_1} is self-recursive),
\code{Q_1}{\orderp}\code{Q_2} (that is the recursive branch
of predicate definition \code{Q_2} contains one and only one occurrence of 
predicate \code{Q_1}  and \code{Q_1} is self-recursive), \form{\code{P_1}{\not\orderp}\code{Q_1}}, \form{\code{P_1}{\not\orderp}\code{Q_2}},
\form{\code{P_2}{\not\orderp}\code{Q_1}}, and \form{\code{P_2}{\not\orderp}\code{Q_2}}.
For instance, the definitions of these predicates could be as follows.
\[
\begin{array}{l}
 \seppred{\code{pred~P_1}}{r{,} F{,} u} ~{\equiv}~ \emp {\wedge} r{=}F  \\
~\quad \vee ~ \exists {X}{,}sc'.
\sepnode{r}{c_1}{X{,}\anon{,}u{,}\anon{,}\anon} ~{\sep}~ \seppred{\code{P_1}}{{X}{,} F{,} u} {\wedge}
r{\neq} F {\wedge} \a_1
\\
\seppred{\code{pred~P_2}}{r{,} F{,}{B_1}{,} u{,}sc{,}tg} ~{\equiv}~ \emp {\wedge} r{=}F {\wedge} sc{=}tg \\
~\quad \vee ~ \exists {X}{,}d{,}u'{,}sc'.
\sepnode{r}{c_1}{X{,}d{,}u{,}u'{,}sc'} {\sep}\seppred{\code{P_1}}{d{,} B{,} u'}{\sep}~ \seppred{\code{P_2}}{{X}{,} F{,}{B_1}{,} u{,}sc'{,}tg} {\wedge}
r{\neq} F {\wedge} \a_2
\\
\seppred{\code{pred~Q_1}}{r{,} F{,} u} ~{\equiv}~ \emp {\wedge} r{=}F \\
~\quad \vee ~ \exists {X}{,}sc'.
\sepnode{r}{c_2}{X{,}\anon{,}u{,}\anon{,}\anon} ~{\sep}~ \seppred{\code{Q_1}}{{X}{,} F{,} u} {\wedge}
r{\neq} F {\wedge} \a_3
\\
\seppred{\code{pred~Q_2}}{r{,} F{,}{B_1}{,} u{,}sc{,}tg} ~{\equiv}~ \emp {\wedge} r{=}F {\wedge} sc{=}tg \\
~\quad \vee ~ \exists {X}{,}d{,}u'{,}sc'.
\sepnode{r}{c_2}{X{,}d{,}u{,}u'{,}sc'} {\sep}\seppred{\code{Q_1}}{d{,} B{,} u'}{\sep}~ \seppred{\code{Q_2}}{{X}{,} F{,}{B_1}{,} u{,}sc'{,}tg} {\wedge}
r{\neq} F {\wedge} \a_4
\end{array}
\]

We notice that in the definitions of \code{P_2} and \code{Q_2}, we assume that
in the recursive rule
\form{sc'} is a variable of a field of the root points-to predicate.
In general, it may be a parameter of \code{P_2} and \code{Q_2} as well.

If the input entailment is in NF and of the form:
\form{\seppredF{\code{P}}{x{,}F{,}\setvars{B}{,}u{,}sc{,}tg}^0 \sep \D \ent \D'}
and there does not exist an occurrence of inductive predicate \form{\seppredF{\code{Q}}{x{,}F_2{,}\setvars{B}{,}...} \in \D'} then this entailment satisfy the case 2c in Sect. \ref{ent.search.deci}
and is classified as {\invalid} immediately. Thus, the Lemma holds.
In the rest,
to prove this Lemma, we only need to consider the application of rule \rulename{LInd}
where the entailment is in NF and of the form in the conclusion of \rulename{LInd} as:
\[\enode_0{:}~
\entailNCyc{\seppredF{\code{P}}{x{,}F{,}{B}{,}u{,}sc{,}tg}^0{\sep}\heap{\wedge}\pure{\wedge}x{\neq}F{\wedge}x{\neq}F_2}{\seppredF{\code{Q}}{x{,}F_2{,}{B}{,}u{,}sc{,}tg_2} {\sep} \heap'}{\heap}
\]
{\em \bf Furthermore, it is safe to assume that \form{\seppredF{\code{P}}{x{,}F{,}\setvars{B}{,}u{,}sc{,}tg}^0}
is the only one with the smallest unfolding number (i.e., \form{1}) in the LHS of \form{\enode_0} could be applied with rule \rulename{LInd}.}
We prove it by the structural induction on the number of occurrences of inductive
predicates in the LHS of the input entailment.
We do case splits.

\subsection{Case 1: \code{P} and \code{Q} have the same definition.}
We consider two cases where the definition contains nested structures or not.

\subsubsection{Case 1.1} For the simpliest scenario, we assume both definitions of \code{P}
and \code{Q} are self-recursive and do not contain nested structures i.e.,
\form{\code{P} \equiv \code{Q} \equiv \code{P_1}}.
Then, \form{\enode_0} becomes:
\[\enode_{1_1}{:}~
\entailNCyc{\seppredF{\code{P_1}}{x{,}F{,}u}^0{\sep}\heap{\wedge}\pure{\wedge}x{\neq}F{\wedge}x{\neq}F_2}{\seppredF{\code{P_1}}{x{,}F_2{,}u} {\sep} \heap'}{\heap}
\]
After applied with rule
\rulename{LInd},  our system generates a premise as follows.
\[\begin{array}{ll}
\enode_{{1_1}_1}{:}&
\sepnode{x}{c_1}{X{,}d{,}u{,}sc'{,}uprm} {\sep} \seppredF{\code{P_1}}{X{,}F{,}u}^1{\sep}\heap{\wedge}\pure{\wedge}x{\neq}F{\wedge}x{\neq}F_2 {\wedge} \a_{1_1} \\
    & \quad \ent~{\seppredF{\code{P_1}}{x{,}F_2{,}u} {\sep} \heap'}
\end{array}
\]
where \form{X}, \form{d}, \form{sc'} and \form{uprm} are two fresh variables
and \form{\a_{1_1}} is the arithmetical contraint obtained by
subsituting actual/formal paramters into the constrtaint
\form{a_{1}} of the recursive rule of the definition of \code{P_1}.
Next,
entailment \form{\enode_{{1_1}_1}} is normalized by applying rule
\rulename{{\neq}\nil} to
obtain:
\[\begin{array}{ll}
\enode_{{1_1}_2}{:}&
\sepnode{x}{c_1}{X{,}d{,}u{,}sc'{,}uprm} {\sep} \seppredF{\code{P_1}}{X{,}F{,}u}^1{\sep}\heap{\wedge}\pure{\wedge}x{\neq}F{\wedge}x{\neq}F_2 {\wedge} \a_{1_1} {\wedge} x{\neq}\nil\\
    & \quad \ent~{\seppredF{\code{P_1}}{x{,}F_2{,}u} {\sep} \heap'}
\end{array}
\]

Now, \form{\enode_{{1_1}_2}} is applied with rule
\rulename{RInd} to obtain:
\[\begin{array}{ll}
\enode_{{1_1}_3}{:}&
\sepnode{x}{c_1}{X{,}d{,}u{,}sc'{,}uprm} {\sep} \seppredF{\code{P_1}}{X{,}F{,}u}^1{\sep}\heap{\wedge}\pure{\wedge}x{\neq}F{\wedge}x{\neq}F_2 {\wedge} \a_{1_1}{\wedge} x{\neq}\nil\\
        & \ent~ \sepnode{x}{c_1}{X{,}d{,}u{,}sc'{,}uprm} {\sep} {\seppredF{\code{P_1}}{x{,}F_2{,}u} {\sep} \heap' {\wedge} \a_{1_1}}
\end{array}
\]
We note that for completeness applications of rule \rulename{\sep} are
always performed after all other rules. As so,
next, \rulename{Hypothesis} is applied to eliminate the arithmetical constraint in the RHS
to obtain:
\[\begin{array}{ll}
\enode_{{1_1}_4}{:}&
\sepnode{x}{c_1}{X{,}d{,}u{,}sc'{,}uprm} {\sep} \seppredF{\code{P_1}}{X{,}F{,}u}^1{\sep}\heap{\wedge}\pure{\wedge}x{\neq}F{\wedge}x{\neq}F_2 {\wedge} \a_{1_1}\\
        & \quad \ent~ \sepnode{x}{c_1}{X{,}d{,}u{,}sc'{,}uprm} {\sep} {\seppredF{\code{P_1}}{x{,}F_2{,}u} {\sep} \heap'}
\end{array}
\]
Our system now applies rules \rulename{ExM}
and \rulename{{{\neq}\sep}} to normalize the LHS where application of the latter rule generates two premises.
\[\begin{array}{ll}
\enode_{{1_1}_5}{:}&
\sepnode{x}{c_1}{X{,}d{,}u{,}sc'{,}uprm} {\sep} \seppredF{\code{P_1}}{X{,}F{,}u}^1{\sep}\heap{\wedge}\pure{\wedge}x{\neq}F{\wedge}x{\neq}F_2 {\wedge} \a_{1_1} {\wedge} x{\neq}\nil {\wedge} X{=}F \\
& \quad \ent~ \sepnode{x}{c_1}{X{,}d{,}u{,}sc'{,}uprm} {\sep} {\seppredF{\code{P_1}}{x{,}F_2{,}u} {\sep} \heap'} \\
\enode_{{1_1}_6}{:}&
\sepnode{x}{c_1}{X{,}d{,}u{,}sc'{,}uprm} {\sep} \seppredF{\code{P_1}}{X{,}F{,}u}^1{\sep}\heap{\wedge}\pure{\wedge}x{\neq}F{\wedge}x{\neq}F_2 {\wedge} \a_{1_1} {\wedge} x{\neq}\nil {\wedge} X{\neq}F  \\
        & \quad {\wedge} x{\neq}X~ \ent~ \sepnode{x}{c_1}{X{,}d{,}u{,}sc'{,}uprm} {\sep} {\seppredF{\code{P_1}}{x{,}F_2{,}u} {\sep} \heap'}
\end{array}
\]
\begin{enumerate}
\item For the premise \form{\enode_{{1_1}_5}}, our system applies rules \rulename{Subst}
  and \rulename{{L}=}
to eliminate \form{X{=}F} and obtain:
\[\begin{array}{ll}
\enode_{{1_1}_{5_1}}{:}&
\sepnode{x}{c_1}{F{,}d{,}u{,}sc'{,}uprm} {\sep} \seppredF{\code{P_1}}{F{,}F{,}u}^1{\sep}\heap{\wedge}\pure{\wedge}x{\neq}F{\wedge}x{\neq}F_2 {\wedge} \a_{1_1} {\wedge} x{\neq}\nil \\
& \quad \ent~ \sepnode{x}{c_1}{F{,}d{,}u{,}sc'{,}uprm} {\sep} {\seppredF{\code{P_1}}{x{,}F_2{,}u} {\sep} \heap'} \\
\end{array}
\]
Next, rule \rulename{LBase} is applied to discard the inductive predicate in
the LHS and obtain the following premise:
\[\begin{array}{ll}
\enode_{{1_1}_{5_2}}{:}&
\sepnode{x}{c_1}{F{,}d{,}u{,}sc'{,}uprm} {\sep}\heap{\wedge}\pure{\wedge}x{\neq}F{\wedge}x{\neq}F_2 {\wedge} \a_{1_1} {\wedge} x{\neq}\nil \\
& \quad \ent~ \sepnode{x}{c_1}{X{,}d{,}u{,}sc'{,}uprm} {\sep} {\seppredF{\code{P_1}}{x{,}F_2{,}u} {\sep} \heap'} \\
\end{array}
\]
As the number of inductive predicates in the LHS of \form{\enode_{{1_1}_{5_2}}} is reduced, 
by induction, this Lemma holds.
\item \label{proof.term.two.unfold} For the premise \form{\enode_{{1_1}_6}}, we have two cases.
\begin{enumerate}
\item If \form{\FV(\pure) \cap \FV(\a_{1_1}) = \emptyset}. Our system links  \form{\enode_{{1_1}_6}} back to
  \form{\enode_{1_1}} as follows. First, it weakens (a.k.a discards) two matched
  points-to predicates in the two sides and the following pure constraints
  in LHS: \form{x{\neq}F}, \form{\a_{1_1}}, \form{x{\neq}\nil}, and \form{x{\neq}X}.
  After that, it substitutes the remaining entailment with \form{\sub{=}\{x/X\}}
  to obtain the identical entailment with \form{\enode_{1_1}}. We notice that
  as \form{X} is a fresh variable, it does not apprear in \form{\heap {\wedge}\pure} and \form{\heap'}. Then, the Lemma holds for this case.
\item \form{\FV(\pure) \cap \FV(\a_{1_1}) \neq \emptyset}. As
the substitution \form{[sc/sc']} could not be applied, our system
could not link  \form{\enode_{{1_1}_6}} back to
  \form{\enode_{1_1}}. It applies the same the proof search as applied
for \form{\enode_{1_1}} to unfold  \form{\enode_{{1_1}_6'}}. As this time,
  \form{\enode_{{1_1}_6'}} contains respective \form{\a_{1_1}'} and \form{sc''} and where \form{\a_{1_1} = \a_{1_1}'[sc'/sc'']}.
Now, our system could link \form{\enode_{{1_1}_6'}} back to \form{\enode_{{1_1}_6}}.
\end{enumerate}
  \end{enumerate}

\subsubsection{Case 1.2} For a more general case, we assume \form{\code{P} \equiv \code{Q} \equiv \code{P_2}}.
Then, \form{\enode_0} becomes:
\[\enode_{1_2}{:}~
\entailNCyc{\seppredF{\code{P_2}}{x{,}F{,}{B}{,}u{,}sc{,}tg}^0{\sep}\heap{\wedge}\pure{\wedge}x{\neq}F{\wedge}x{\neq}F_2}{\seppredF{\code{P_2}}{x{,}F_2{,}{B}{,}u{,}sc{,}tg_2} {\sep} \heap'}{\heap}
\]
The first four steps are similar to {\bf Case 1.1}. After applied with rule
\rulename{LInd},  our system generates a premise as follows.
\[\begin{array}{ll}
\enode_{{1_2}_1}{:}&
\sepnode{x}{c_1}{X{,}d{,}u{,}u'{,}sc'} {\sep} \seppredF{\code{P_1}}{d{,}B{,}u'}^1{\sep} \seppredF{\code{P_2}}{X{,}F{,}{B}{,}u{,}sc'{,}tg}^1{\sep}\heap{\wedge}\pure{\wedge}x{\neq}F{\wedge}x{\neq}F_2 {\wedge} \a_{2_1}\\
        & \quad \ent~{\seppredF{\code{P_2}}{x{,}F_2{,}{B}{,}u{,}sc{,}tg_2} {\sep} \heap'}
\end{array}
\]
where \form{\a_{2_1}} is obtaied by substituting actual/formal paramters
into the arithmetical contraint \form{a_{2}} of the recursive rule of the definition of \code{P_2}.
Next,
this entailment is normalized by applying rule \rulename{{\neq}\nil} to obtain:
\[\begin{array}{ll}
\enode_{{1_2}_2}{:}&
\sepnode{x}{c_1}{X{,}d{,}u{,}u'{,}sc'} {\sep} \seppredF{\code{P_1}}{d{,}B{,}u'}^1{\sep} \seppredF{\code{P_2}}{X{,}F{,}{B}{,}u{,}sc'{,}tg}^1{\sep}\heap{\wedge}\pure{\wedge}x{\neq}F{\wedge}x{\neq}F_2 {\wedge} \a_{2_1}\\
        & \quad  {\wedge} x{\neq}\nil~ \ent~{\seppredF{\code{P_2}}{x{,}F_2{,}{B}{,}u{,}sc{,}tg_2} {\sep} \heap'}
\end{array}
\]
Next, \form{\enode_{{1_2}_2}} is
applied with rule \rulename{RInd} to obtain:
\[\begin{array}{ll}
\enode_{{1_2}_3}{:}&
\sepnode{x}{c_1}{X{,}d{,}u{,}u'{,}sc'} {\sep} \seppredF{\code{P_1}}{d{,}B{,}u'}^1{\sep} \seppredF{\code{P_2}}{X{,}F{,}{B}{,}u{,}sc'{,}tg}^1{\sep}\heap{\wedge}\pure{\wedge}x{\neq}F{\wedge}x{\neq}F_2{\wedge} \a_{2_1} \\
        & \quad  {\wedge} x{\neq}\nil~ \ent~ \sepnode{x}{c_1}{X{,}d{,}u{,}u'{,}sc'} {\sep} \seppredF{\code{P_1}}{d{,}B{,}u'}^0{\sep}{\seppredF{\code{P_2}}{X{,}F_2{,}{B}{,}u{,}sc'{,}tg_2} {\sep} \heap' {\wedge} \a_{2_1}}
\end{array}
\]
We note that rule \rulename{\sep} is always applied after all other rules. As so,
next, \rulename{Hypothesis} is applied to eliminate the arithmetical constraint in the RHS
to obtain:
\[\begin{array}{ll}
\enode_{{1_2}_4}{:}&
\sepnode{x}{c_1}{X{,}d{,}u{,}u'{,}sc'} {\sep} \seppredF{\code{P_1}}{d{,}B{,}u'}^1{\sep} \seppredF{\code{P_2}}{X{,}F{,}{B}{,}u{,}sc'{,}tg}^1{\sep}\heap{\wedge}\pure{\wedge}x{\neq}F{\wedge}x{\neq}F_2 {\wedge} \a_{2_1}\\
        & \quad  {\wedge} x{\neq}\nil~ \ent~ \sepnode{x}{c_1}{X{,}d{,}u{,}u'{,}sc'} {\sep} \seppredF{\code{P_1}}{d{,}B{,}u'}^0{\sep}{\seppredF{\code{P_2}}{X{,}F_2{,}{B}{,}u{,}sc'{,}tg_2} {\sep} \heap' }
\end{array}
\]
Our system now applies rules  \rulename{ExM} and
\rulename{{\neq}{\sep}} to normalize the LHS. Particularly,  applying rule
\rulename{ExM} for \form{d} and \form{B} generates two premises:
\[\begin{array}{ll}
\enode_{{1_2}_5}{:}&
\sepnode{x}{c_1}{X{,}d{,}u{,}u'{,}sc'} {\sep} \seppredF{\code{P_1}}{d{,}B{,}u'}^1{\sep} \seppredF{\code{P_2}}{X{,}F{,}{B}{,}u{,}sc'{,}tg}^1{\sep}\heap{\wedge}\pure{\wedge}x{\neq}F{\wedge}x{\neq}F_2 {\wedge} \a_{2_1}\\
& \quad  {\wedge} x{\neq}\nil{\wedge}d{=}B~ \ent~ \sepnode{x}{c_1}{X{,}d{,}u{,}u'{,}sc'} {\sep} \seppredF{\code{P_1}}{d{,}B{,}u'}^0{\sep}{\seppredF{\code{P_2}}{X{,}F_2{,}{B}{,}u{,}sc'{,}tg_2} {\sep} \heap' } \\
\enode_{{1_2}_6}{:}&
\sepnode{x}{c_1}{X{,}d{,}u{,}u'{,}sc'} {\sep} \seppredF{\code{P_1}}{d{,}B{,}u'}^1{\sep} \seppredF{\code{P_2}}{X{,}F{,}{B}{,}u{,}sc'{,}tg}^1{\sep}\heap{\wedge}\pure{\wedge}x{\neq}F{\wedge}x{\neq}F_2 {\wedge} \a_{2_1}\\
& \quad  {\wedge} x{\neq}\nil{\wedge}d{\neq}B~ \ent~ \sepnode{x}{c_1}{X{,}d{,}u{,}u'{,}sc'} {\sep} \seppredF{\code{P_1}}{d{,}B{,}u'}^0{\sep}{\seppredF{\code{P_2}}{X{,}F_2{,}{B}{,}u{,}sc'{,}tg_2} {\sep} \heap' } \\
\end{array}
\]
\begin{enumerate}
\item For the first premise \form{\enode_{{1_2}_5}}, our system first applies rules \rulename{Subst} to obtain and \rulename{{L}=}:
  \[\begin{array}{ll}
\enode_{{1_2}_{5_1}}{:}&
\sepnode{x}{c_1}{X{,}B{,}u{,}u'{,}sc'} {\sep} \seppredF{\code{P_1}}{B{,}B{,}u'}^1{\sep} \seppredF{\code{P_2}}{X{,}F{,}{B}{,}u{,}sc'{,}tg}^1{\sep}\heap{\wedge}\pure{\wedge}x{\neq}F{\wedge}x{\neq}F_2 {\wedge} \a_{2_1}\\
& \quad  {\wedge} x{\neq}\nil~ \ent~ \sepnode{x}{c_1}{X{,}B{,}u{,}u'{,}sc'} {\sep} \seppredF{\code{P_1}}{B{,}B{,}u'}^0{\sep}{\seppredF{\code{P_2}}{X{,}F_2{,}{B}{,}u{,}sc'{,}tg_2} {\sep} \heap' }
\end{array}
  \]
  After that, it applies rules \rulename{LBase} and
 \rulename{RBase} to eliminate
  inductive predicates \code{P_1} in the LHS and RHS, respectively. Afterward, the premise is
  obtained as:
   \[\begin{array}{ll}
\enode_{{1_2}_{5_3}}{:}&
\sepnode{x}{c_1}{X{,}B{,}u{,}u'{,}sc'} {\sep} \seppredF{\code{P_2}}{X{,}F{,}{B}{,}u{,}sc'{,}tg}^1{\sep}\heap{\wedge}\pure{\wedge}x{\neq}F{\wedge}x{\neq}F_2 {\wedge} \a_{2_1}\\
& \quad  {\wedge} x{\neq}\nil~ \ent~ \sepnode{x}{c_1}{X{,}B{,}u{,}u'{,}sc'} {\sep}{\seppredF{\code{P_2}}{X{,}F_2{,}{B}{,}u{,}sc'{,}tg_2} {\sep} \heap' }
\end{array}
  \]
  Now, it generates a back-link between \form{\enode_{{1_2}_{5_3}}} and \form{\enode_{{1_2}}}.
  Hence, the Lamma holds.
\item For the second premise \form{\enode_{{1_2}_6}}, the system applies rule
  \rulename{{\neq}\sep} and then rule \rulename{ExM} to obtain two following
  premises:
  \[\begin{array}{ll}
\enode_{{1_2}_7}{:}&
\sepnode{x}{c_1}{X{,}d{,}u{,}u'{,}sc'} {\sep} \seppredF{\code{P_1}}{d{,}B{,}u'}^1{\sep} \seppredF{\code{P_2}}{X{,}F{,}{B}{,}u{,}sc'{,}tg}^1{\sep}\heap{\wedge}\pure{\wedge}x{\neq}F \\
& ~  {\wedge}x{\neq}F_2 {\wedge} \a_{2_1} {\wedge} x{\neq}\nil{\wedge}d{\neq}B {\wedge}x{\neq}d {\wedge}X{=}F\\
& \quad \ent~ \sepnode{x}{c_1}{X{,}d{,}u{,}u'{,}sc'} {\sep} \seppredF{\code{P_1}}{d{,}B{,}u'}{\sep}{\seppredF{\code{P_2}}{X{,}F_2{,}{B}{,}u{,}sc'{,}tg_2} {\sep} \heap' } \\
\enode_{{1_2}_8}{:}&
\sepnode{x}{c_1}{X{,}d{,}u{,}u'{,}sc'} {\sep} \seppredF{\code{P_1}}{d{,}B{,}u'}^1{\sep} \seppredF{\code{P_2}}{X{,}F{,}{B}{,}u{,}sc'{,}tg}^1{\sep}\heap{\wedge}\pure{\wedge}x{\neq}F \\
& ~  {\wedge}x{\neq}F_2{\wedge} \a_{2_1} {\wedge} x{\neq}\nil{\wedge}d{\neq}B {\wedge}x{\neq}d {\wedge}X{\neq}F \\
& \quad \ent~ \sepnode{x}{c_1}{X{,}d{,}u{,}u'{,}sc'} {\sep} \seppredF{\code{P_1}}{d{,}B{,}u'}{\sep}{\seppredF{\code{P_2}}{X{,}F_2{,}{B}{,}u{,}sc'{,}tg_2} {\sep} \heap' } 
\end{array}
  \]
  \begin{enumerate}
  \item For the premise \form{\enode_{{1_2}_7}}, our system applies rules \rulename{Subst}
    and \rulename{{L}=}
    first and then  rule \rulename{LBase} to eliminate
  inductive predicates \code{P_2} in the LHS. Afterward, the premise is
  obtained as:
   \[\begin{array}{ll}
\enode_{{1_2}_{7_3}}{:}&
\sepnode{x}{c_1}{F{,}d{,}u{,}u'{,}sc'} {\sep} \seppredF{\code{P_1}}{d{,}B{,}u'}^1{\sep}\heap{\wedge}\pure{\wedge}x{\neq}F \\
& ~  {\wedge}x{\neq}F_2 {\wedge} \a_{2_1} {\wedge} x{\neq}\nil{\wedge}d{\neq}B {\wedge}x{\neq}d \\
& \quad \ent~ \sepnode{x}{c_1}{F{,}d{,}u{,}u'{,}sc'} {\sep} \seppredF{\code{P_1}}{d{,}B{,}u'}{\sep}{\seppredF{\code{P_2}}{F{,}F_2{,}{B}{,}u{,}sc'{,}tg_2} {\sep} \heap' } \\
\end{array}
   \]
    Similarly to {\bf Case 1.1}, the Lemma holds for \form{\enode_{{1_2}_{7_3}}}.
  \item For the premise \form{\enode_{{1_2}_8}}, our system  processes similarly to 
 \ref{proof.term.two.unfold} in {\bf Case 1.1}: the predicate in the LHS is unfolded at most
two times. Hence, the Lemma holds.
   \end{enumerate}
\end{enumerate}

\subsection{Case 2: \code{P} and \code{Q} have different definitions and they are syntactically dependent.}
We consider two sub-cases. In the first case, we assume \form{\code{P}{\orderp}\code{Q}}.
In the second case, we assume \form{\code{Q}{\orderp}\code{P}}.

\subsubsection{Case 2.1: \form{\code{P}{\orderp}\code{Q}}}
For a general case, we assume \form{\code{P} \equiv \code{P_1'}} and \form{\code{Q} \equiv \code{P_2'}}
where \code{P_1'} is defined similarly to \code{P_1} except it contains an additional local property (Otherwise, the proof for \form{\code{P} \equiv \code{P_1}} and \form{\code{Q} \equiv \code{P_2}} is quite trivial.).
\[
\begin{array}{l}
 \seppred{\code{pred~P'_1}}{r{,} F{,} u{,}sc{,}tg} ~{\equiv}~ \emp {\wedge} r{=}F {\wedge} sc{=}tg  \\
~\qquad \vee ~ \exists {X}{,}sc'.
\sepnode{r}{c_1}{X{,}\anon{,}u{,}\anon{,}sc'} ~{\sep}~ \seppred{\code{P'_1}}{{X}{,} F{,} u{,}sc'{,}tg} {\wedge}
r{\neq} F {\wedge} \a_1
\\
\seppred{\code{pred~P'_2}}{r{,} F{,}{B_1}{,} u{,}sc{,}tg} ~{\equiv}~ \emp {\wedge} r{=}F {\wedge} sc{=}tg \\
~\qquad \vee ~ \exists {X}{,}d{,}u'{,}sc'.
\sepnode{r}{c_1}{X{,}d{,}u{,}u'{,}sc'} {\sep}\seppred{\code{P'_1}}{d{,} B{,} u'{,}sc{,}tg}{\sep}~ \seppred{\code{P'_2}}{{X}{,} F{,}{B_1}{,} u{,}sc'{,}tg} {\wedge}
r{\neq} F {\wedge} \a_2
\end{array}
\]
Then, \form{\enode_0} becomes:
\[\enode_{2_1}{:}~
\entailNCyc{\seppredF{\code{P'_1}}{x{,}F{,}{B}{,}u{,}sc{,}tg}^0{\sep}\heap{\wedge}\pure{\wedge}x{\neq}F{\wedge}x{\neq}F_2}{\seppredF{\code{P'_2}}{x{,}F_2{,}{B}{,}u{,}sc{,}tg_2} {\sep} \heap'}{\heap}
\]
After applying three rules \rulename{LInd},
 \rulename{{\neq}\nil}
and \rulename{RInd} in sequence (and similarly to {\bf Case 1.1} and {\bf Case 1.2} above),
our system generates the following premise.
\[
\begin{array}{ll}
  \enode_{{2_1}_3}{:}&
\sepnode{x}{c_1}{X{,}d{,}u{,}u'{,}sc'} {\sep} \seppredF{\code{P'_1}}{X{,}F{,}{B}{,}u{,}sc'{,}tg}^1{\sep}\heap{\wedge}\pure{\wedge}x{\neq}F{\wedge}x{\neq}F_2 {\wedge}\a_{1_1}\\
&       \quad \ent~{\sepnode{x}{c_1}{X{,}d{,}u{,}u'{,}sc'} {\sep}\seppred{\code{P'_1}}{d{,} B{,} u'{,}sc{,}tg} {\sep} \seppredF{\code{P_2}}{X{,}F_2{,}{B}{,}u{,}sc'{,}tg_2} {\sep} \heap'{\wedge}\a_{2_1}}
\end{array}
\]
where \form{X}, \form{d}, \form{sc'} and \form{u'} are two fresh variables.
Our system applies rule \rulename{ExM} for \form{d} and \form{B} to generate
the following two premises.
\[
\begin{array}{ll}
  \enode_{{2_1}_4}{:}&
\sepnode{x}{c_1}{X{,}d{,}u{,}u'{,}sc'} {\sep} \seppredF{\code{P'_1}}{X{,}F{,}{B}{,}u{,}sc'{,}tg}^1{\sep}\heap{\wedge}\pure{\wedge}x{\neq}F{\wedge}x{\neq}F_2 {\wedge}d{=}B {\wedge}\a_{1_1}\\
&       \quad \ent~{\sepnode{x}{c_1}{X{,}d{,}u{,}u'{,}sc'} {\sep}\seppred{\code{P'_1}}{d{,} B{,} u'{,}sc{,}tg} {\sep} \seppredF{\code{P_2}}{X{,}F_2{,}{B}{,}u{,}sc'{,}tg_2} {\sep} \heap'{\wedge}\a_{2_1}}\\
\enode_{{2_1}_5}{:}&
\sepnode{x}{c_1}{X{,}d{,}u{,}u'{,}sc'} {\sep} \seppredF{\code{P'_1}}{X{,}F{,}{B}{,}u{,}sc'{,}tg}^1{\sep}\heap{\wedge}\pure{\wedge}x{\neq}F{\wedge}x{\neq}F_2 {\wedge}d{\neq}B {\wedge}\a_{1_1}\\
&       \quad \ent~{\sepnode{x}{c_1}{X{,}d{,}u{,}u'{,}sc'} {\sep}\seppred{\code{P'_1}}{d{,} B{,} u'{,}sc{,}tg} {\sep} \seppredF{\code{P_2}}{X{,}F_2{,}{B}{,}u{,}sc'{,}tg_2} {\sep} \heap'{\wedge}\a_{2_1}}
\end{array}
\]
As \form{d} is a fresh variable, the predicate \form{\seppred{\code{P'_1}}{d{,} B{,} u'{,}sc{,}tg}}
does not appear in the LHS of \form{\enode_{{2_1}_5}}. Hence,  \form{\enode_{{2_1}_5}}
is classified as {\invalid} and {\allEnt} returns {\invalid}. Hence, the Lemma holds.

\subsubsection{Case 2.2: \form{\code{Q}{\orderp}\code{P}}}
For a general case, we assume \form{\code{P} \equiv\code{P_2}} and \form{\code{Q} \equiv \code{P_1}}.
Then, \form{\enode_0} becomes:
\[\enode_{2_2}{:}~
\entailNCyc{\seppredF{\code{P_2}}{x{,}F{,}{B}{,}u{,}sc{,}tg}^0{\sep}\heap{\wedge}\pure{\wedge}x{\neq}F{\wedge}x{\neq}F_2}{\seppredF{\code{P_1}}{x{,}F_2{,}{B}{,}u} {\sep} \heap'}{\heap}
\]
The proof for this case is similar to {\bf Case 2.1}. After applying three rules \rulename{LInd}, \rulename{{\neq}\nil}
and \rulename{RInd} in sequence,
our system generates the following premise.
\[
\begin{array}{ll}
  \enode_{{2_2}_3}{:}&
  \sepnode{x}{c_1}{X{,}d{,}u{,}u'{,}sc'} {\sep}\seppred{\code{P_1}}{d{,} B{,} u'}  {\sep} 
  \seppredF{\code{P_2}}{X{,}F{,}{B}{,}u{,}sc'{,}tg}^0{\sep}\heap{\wedge}\pure{\wedge}x{\neq}F{\wedge}x{\neq}F_2 {\wedge}\a_{2_1} \\
  &\quad \ent~ \sepnode{x}{c_1}{X{,}d{,}u{,}u'{,}sc'} {\sep} \seppredF{\code{P_1}}{X{,}F_2{,}{B}{,}u} {\sep} \heap' {\wedge}\a_{1_1}
\end{array}
\]
where \form{X}, \form{d}, \form{sc'} and \form{u'} are two fresh variables.
Our system applies rule \rulename{ExM} for \form{d} and \form{B} to generate
the following two premises.
\[
\begin{array}{ll}
 \enode_{{2_2}_4}{:}&
  \sepnode{x}{c_1}{X{,}d{,}u{,}u'{,}sc'} {\sep}\seppred{\code{P_1}}{d{,} B{,} u'}  {\sep} 
  \seppredF{\code{P_2}}{X{,}F{,}{B}{,}u{,}sc'{,}tg}^0{\sep}\heap{\wedge}\pure{\wedge}x{\neq}F{\wedge}x{\neq}F_2 {\wedge}\a_{2_1} {\wedge}d{=}B\\
  &\quad \ent~ \sepnode{x}{c_1}{X{,}d{,}u{,}u'{,}sc'} {\sep} \seppredF{\code{P_1}}{X{,}F_2{,}{B}{,}u} {\sep} \heap' {\wedge}\a_{1_1} \\
  \enode_{{2_2}_5}{:}&
  \sepnode{x}{c_1}{X{,}d{,}u{,}u'{,}sc'} {\sep}\seppred{\code{P_1}}{d{,} B{,} u'}  {\sep} 
  \seppredF{\code{P_2}}{X{,}F{,}{B}{,}u{,}sc'{,}tg}^0{\sep}\heap{\wedge}\pure{\wedge}x{\neq}F{\wedge}x{\neq}F_2 {\wedge}\a_{2_1} {\wedge}d{\neq}B\\
  &\quad \ent~ \sepnode{x}{c_1}{X{,}d{,}u{,}u'{,}sc'} {\sep} \seppredF{\code{P_1}}{X{,}F_2{,}{B}{,}u} {\sep} \heap' {\wedge}\a_{1_1} \\
\end{array}
\]
As \form{d} is a fresh variable, the predicate \form{\seppred{\code{P_1}}{d{,} B{,} u'}}
does not appear in the RHS of \form{\enode_{{2_2}_5}}. Hence,  \form{\enode_{{2_2}_5}}
is classified as {\invalid} and {\allEnt} returns {\invalid}. Hence, the Lemma holds.

\subsection{Case 3: \code{P} and \code{Q} have different definitions and they are syntactically independent.}
We consider three  sub-cases based on the positions of inductive predicates
in the dependency hierarchies. In the first case, we assume \code{P} is ``bigger" than \code{Q}.
In the second case, we assume \code{P} is ``smaller" than \code{Q}.
And in the last case, we assume \code{P} is ``equal" to \code{Q}.

\subsubsection{Case 3.1:} We assume \form{\code{P} \equiv \code{P_2}} and \form{\code{Q} \equiv \code{Q_1}}.
Then, \form{\enode_0} becomes:
\[\enode_{3_1}{:}~
\entailNCyc{\seppredF{\code{P_2}}{x{,}F{,}{B}{,}u{,}sc{,}tg}^0{\sep}\heap{\wedge}\pure{\wedge}x{\neq}F{\wedge}x{\neq}F_2}{\seppredF{\code{Q_1}}{x{,}F_2{,}{B}{,}u} {\sep} \heap'}{\heap}
\]
If \form{c_1 {\not}{\equiv} c_2}, \code{is\_closed} returns
{\invalid} (Case 2d in Sect. \ref{ent.search.deci}).
Otherwise, the proof for this case is similar to {\bf Case 2.2}.

\subsubsection{Case 3.2:} We assume \form{\code{P} \equiv \code{P_1'}} and \form{\code{Q} \equiv \code{Q_2}}. Then, \form{\enode_0} becomes:
\[\enode_{3_2}{:}~
\entailNCyc{\seppredF{\code{P_1'}}{x{,}F{,}{B}{,}u{,}sc{,}tg}^0{\sep}\heap{\wedge}\pure{\wedge}x{\neq}F{\wedge}x{\neq}F_2}{\seppredF{\code{Q_2}}{x{,}F_2{,}{B}{,}u{,}sc{,}tg_2} {\sep} \heap'}{\heap}
\]
If \form{c_1 {\not}{\equiv} c_2}, the proof is straightforward.
Otherwise, the proof for this case is similar to {\bf Case 2.1}.

\subsubsection{Case 3.3:} We assume \form{\code{P} \equiv \code{P_1}} and \form{\code{Q} \equiv \code{Q_1}}. Then, \form{\enode_0} becomes:
\[\enode_{3_3}{:}~
\entailNCyc{\seppredF{\code{P_2}}{x{,}F{,}{B}{,}u}^0{\sep}\heap{\wedge}\pure{\wedge}x{\neq}F{\wedge}x{\neq}F_2}{\seppredF{\code{Q_2}}{x{,}F_2{,}{B}{,}u} {\sep} \heap'}{\heap}
\]
If \form{c_1 {\not}{\equiv} c_2}, the proof is straightforward.
Otherwise, the proof for this case is similar to {\bf Case 1.1}.

\subsubsection{Case 3.4:} We assume \form{\code{P} \equiv \code{P_2}} and \form{\code{Q} \equiv \code{Q_2}}. Then, \form{\enode_0} becomes:
\[\enode_{3_4}{:}~
\entailNCyc{\seppredF{\code{P_2}}{x{,}F{,}{B}{,}u{,}sc{,}tg}^0{\sep}\heap{\wedge}\pure{\wedge}x{\neq}F{\wedge}x{\neq}F_2}{\seppredF{\code{Q_2}}{x{,}F_2{,}{B}{,}u{,}sc{,}tg_2} {\sep} \heap'}{\heap}
\]
If \form{c_1 {\not}{\equiv} c_2}, the proof is straightforward.
Otherwise, the proof for this case is similar to {\bf Case 1.2}.

\hfill \qed.

\section{Complexity Analysis - Proposition \ref{prop.complex}}\label{app.thm.complex}

Suppose that n is the maximum number of predicates
(both inductive predicates and points-to predicates) among
the LHS of the input entailment and those definitions in \form{\PName}, and
\form{m} is the maximum number of fields of data structures.
Then, the complexity is defined as follows.

First, we analyze the number of computation when all inductive predicates
in the LHS are unfolded at most once.
Let \form{P(n,m)} be the time complexity function under this assumption.
Each pair of the root and segment parameters, say \form{r} and \form{F},
 of an inductive predicate
is applied with rule \rulename{ExM} at most one.
For the first premise where \form{r=F},
after applied with \rulename{LBase} the number of inductive predicates
is \form{n-1} and its running time is \form{P(n-1,m)}.

For the second premise, say \form{\enode_c}, where \form{r\neq F},
after applied with \rulename{LInd}, normalization rules
and \rulename{RInd}, it is applied with \form{\sep}
to create two premises. While one of them is linked back
to \form{\enode_c} the second is of the form: \form{\heap' \ent \heap''}
where \form{\heap'} (respective \form{\heap''})
is the matrix heap of the unfolded predicate in the LHS (respective RHS).
\rulename{ExM} is applied at most
 \form{(n-1) + (n-2) + ..+ 1 = \mathcal{O}(n^2)} times.
Moreover, as (i) all roots of inductive predicates in a matrix heap must
not be aliasing (ensured by the normalization rule \rulename{ExM})
and (ii) they are must be in the fields of the root points-to predicate
of the recursive definition rule,
the number of inductive predicates in both
\form{\heap'} and \form{\heap''} must be less than \form{m}.
Suppose that the running time of
such an entailment of \form{m} inductive predicates
of matrix heap is \form{q(m)},
then \form{P(n,m) = P(n-1,m) + q(m) + \mathcal{O}(n^2)}.

\[
\begin{array}{lcl}
P(n,m) &=& P(n-1,m) + q(m) + \mathcal{O}(n^2) \\
  &=& P(n-2,m) + 2q(m) + 2\mathcal{O}(n^2) \\
  &= & .... \\
  &=& P(1, m) + (n-1) \times q(m) + (n-1)\mathcal{O}(n^2) \\
  &=& 1 + n\times q(m) + \mathcal{O}(n^3)
   \end{array}
\]
(We presume that the running time of entailment without any inductive predicates is \form{1}.)

 We remark that
if a formula contains two inductive predicates
which has the same root parameters i.e., \form{\seppred{\code{P}}{r{,}F_1...} \sep \seppred{\code{Q}}{r{,}F_2...} \sep \D},
then at least one of them must be reduced into base case
with the empty heap.
As \form{m} is the maximum number of fields of data structures
and the roots parameters of \form{\heap'} must be
one of these variables of the fields,
the number of inductive predicates of
the LHS of any entailment that is derived from
\form{\heap' \ent \heap''}, 
is less than or equal to \form{m}.
Thus, under modular substitution the number of combination of such
 \form{m} inductive predicates is \form{\mathcal{O}(2^m)}.

Therefore, \form{P(n,m) = \mathcal{O}(n \times 2^m + n^3)}.

The unfolding is depth-first and
the steps for the second unfolding are similar.
As the proof is linear, then the number of computation
when all inductive predicates are unfolded at most two
times is at most as \form{2 \times P(n,m) = \mathcal{O}(n \times 2^m + n^3)}.

\section{Completeness of proof rules - Lemma \ref{rule.complete}}\label{app.local.complete}

The completeness of all rules except rule \rulename{\sep} is straightforward.
In the following, we prove the completeness of rule
\rulename{\sep}. 
 The proof is based on the following auxiliary Lemma.

\begin{lemma}\label{lemma.nf.subst}
If \form{\heap \wedge \phi \wedge \a} is in NF and \form{x{\neq}E {\not}{\in} \phi},
then \form{(\heap \wedge \phi) \subst{E}{x} \wedge \a} is in NF.
\end{lemma}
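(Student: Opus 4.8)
The plan is to verify the six conditions of the Normal Form (Definition~\ref{defn.nf}) for \form{(\heap\wedge\phi)\subst{E}{x}\wedge\a} directly, taking as hypotheses that \form{\heap\wedge\phi\wedge\a} is in NF and that \form{x{\neq}E\notin\phi}. The argument is mostly bookkeeping, so I would first isolate the two structural facts that drive it. \emph{First}, substitution \form{\subst{E}{x}} commutes with the guard operator and with the formation of (dis)equalities: \form{G(op(E'))\subst{E}{x}=G(op(E'\subst{E}{x}))} (immediate from \form{G(\sepnodeF{E'}{c}{\anon})\defsym\true} and \form{G(\seppred{\code{P}}{E'{,}F{,}\anon})\defsym E'{\neq}F}), and \form{(E_1{\neq}E_2)\subst{E}{x}} is \form{(E_1\subst{E}{x}){\neq}(E_2\subst{E}{x})}; moreover, since \form{\nil} is not a variable, \form{(E'{\neq}\nil)\subst{E}{x}} is \form{(E'\subst{E}{x}){\neq}\nil}, and substitution never turns a disequality into an equality. \emph{Second}, every spatial atom occurring in \form{\heap\subst{E}{x}} (resp. every disequality conjunct of \form{\phi\subst{E}{x}}) is the \form{\subst{E}{x}}-image of a spatial atom of \form{\heap} (resp. a disequality conjunct of \form{\phi}).

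With these, conditions~1, 2 and 3 are direct. Given a spatial atom \form{op(E'')} of \form{\heap}, the NF of \form{\heap\wedge\phi\wedge\a} supplies \form{G(op(E''))\in\phi} and \form{E''{\neq}\nil\in\phi}, and for any pair of atoms \form{op_1(E_1'')\sep op_2(E_2'')} of \form{\heap} it supplies \form{E_1''{\neq}E_2''\in\phi}; applying \form{\subst{E}{x}} to these memberships and using the commutation facts yields exactly the guards, the \form{{\neq}\nil} facts, and the \form{{\neq}}-pairs demanded of \form{\heap\subst{E}{x}} and \form{\phi\subst{E}{x}}. Condition~4 holds because \form{\phi} contains no equality and substitution introduces none. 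Condition~6 holds because \form{\a} is untouched; and since no NF condition couples \form{\phi\subst{E}{x}} with \form{\a}, there is no cross-interaction to check.

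The only real work is condition~5 (that \form{E'{\neq}E'\notin\phi\subst{E}{x}}), and this is exactly where the hypothesis \form{x{\neq}E\notin\phi} is used; the same reasoning also certifies that the disequalities produced for conditions~1--3 are non-trivial. Suppose a disequality conjunct \form{E_1{\neq}E_2} of \form{\phi} became \form{E'{\neq}E'} after substitution, i.e. \form{E_1\subst{E}{x}\equiv E_2\subst{E}{x}}. Case-analysing on which of \form{E_1,E_2} equals \form{x}: if neither does then \form{E_1\equiv E_2}, so \form{E_1{\neq}E_1\in\phi}, contradicting condition~5 of the original; if exactly one does, say \form{E_1\equiv x} and \form{E_2\equiv E}, then \form{x{\neq}E\in\phi}, contradicting the hypothesis; if both do then \form{x{\neq}x\in\phi}, again contradicting condition~5. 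A symmetric argument shows \form{\heap\subst{E}{x}} still has pairwise distinct roots (two atoms of \form{\heap} mapping to the same root would, via condition~3, force \form{x{\neq}E}, \form{x{\neq}x}, or \form{E{\neq}E} into \form{\phi}, all impossible), and that when \form{E\equiv\nil} the contrapositive of condition~2 together with \form{x{\neq}\nil\notin\phi} forces \form{x} not to be a root of any atom of \form{\heap}, so \form{\heap\subst{\nil}{x}} contains no atom rooted at \form{\nil}.

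I do not expect a hard obstacle: the main thing to execute carefully is the exhaustive case split above, where one sees that \form{x{\neq}E\notin\phi} is precisely calibrated to exclude the single bad collapse pattern. The only delicate points are conventions rather than mathematics — reading ``\form{x{\neq}E\notin\phi}'' modulo symmetry of disequality (neither \form{x{\neq}E} nor \form{E{\neq}x} is a conjunct of \form{\phi}), and the corner case \form{E\equiv\nil} — both of which are dispatched in the previous paragraph.
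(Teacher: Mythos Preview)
Your proof is correct and follows essentially the same approach as the paper's, which is a two-sentence sketch: ``All but the fifth clause in the definition are invariant under substitution. Moreover, \form{x{\neq}E \notin \phi} excludes the violation of the fifth clause under substitution as well.'' You have simply carried out the bookkeeping that the paper elides --- in particular the case split showing why the hypothesis \form{x{\neq}E\notin\phi} is exactly what prevents a disequality in \form{\phi} from collapsing to a trivial one, and the companion observation (implicit in the paper) that two spatial atoms of \form{\heap} cannot merge roots under the substitution because clause~3 of the original NF would then already force \form{x{\neq}E\in\phi}.
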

\begin{proof}
All but the fifth clause in the definition \ref{defn.nf} are invariant under substitution.
Moreover, \form{x{\neq}E {\not}{\in} \phi} exclude the violation of the fifth clause
under substitution as well
\hfill \qed.
\end{proof}

First, we provide proofs for pure part when
pure contraints in LHS
does not imply pure contraints in RHS.
\begin{proposition}\label{complete.pure}
If
\form{\entailNCyc{\heap \wedge\phi \wedge \a}{\heap_m \wedge \phi' \wedge \a'}{\heap'}}
is in NF and \form{\enode'{:}~\entailNCyc{\emp\wedge\phi \wedge \a}{\emp \wedge \phi' \wedge \a'}{\heap_m}} is not derivable, then \form{\entailNCyc{\heap \wedge \phi \wedge \a}{\heap' \wedge \phi' \wedge \a'}{\heap_m}} is invalid.
\end{proposition}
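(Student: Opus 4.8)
The plan is to build an explicit counter-model. Since $\enode'$ is \emph{not} derivable, the axiom \rulename{Inconsistency} does not apply to it, so $\phi{\wedge}\a$ must be satisfiable; and since \rulename{Hypothesis} followed by \rulename{Emp} does not close $\enode'$ either, we have $\phi{\wedge}\a \not\models \phi'{\wedge}\a'$. Fix a stack $s_0$ with $s_0 \models \phi{\wedge}\a$ and $s_0 \not\models \phi'{\wedge}\a'$. It then suffices to extend $s_0$ to a pair $(s,h)$ that is a model of the antecedent $\heap{\wedge}\phi{\wedge}\a$: indeed, satisfaction of any formula of the shape $\heap''{\wedge}\pi$ forces its pure part $\pi$ to hold in the stack, so from $s \not\models \phi'{\wedge}\a'$ we obtain $s,h \not\models \heap'{\wedge}\phi'{\wedge}\a'$ irrespective of the spatial part, i.e.\ the entailment in the conclusion is invalid.

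\textbf{Materialising the antecedent.} To produce such a model I would route through the under-approximation $\basecomp{\heap}$. By the lemma preceding the definition of bad models, $\basecomp{\heap}{\wedge}\phi{\wedge}\a$ is again in normal form and $\basecomp{\heap}{\wedge}\phi{\wedge}\a \models \heap$, so every model of $\basecomp{\heap}{\wedge}\phi{\wedge}\a$ is already a model of $\heap{\wedge}\phi{\wedge}\a$. I then invoke the bad-model construction on $\basecomp{\heap}{\wedge}\phi{\wedge}\a$: assign pairwise distinct non-$\nil$ locations to the (finitely many) pointer-sorted free variables and allocate exactly one heap cell per points-to atom of $\basecomp{\heap}$. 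This assignment is a genuine heap precisely because the normal-form clauses guarantee that every root is $\neq\nil$ and any two roots are distinct, so no two cells collide. For the data parameters I keep the arithmetic part of $s_0$ and extend it over the fresh border/data variables so that $\a$, together with the guard and ordering constraints $\pure_0$ accumulated inside $\basecomp{\cdot}$, is satisfied; this is possible because $\basecomp{\heap}{\wedge}\phi{\wedge}\a$ being in NF means that accumulated arithmetic is satisfiable.

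\textbf{Main obstacle, and how I would resolve it.} The delicate point is that the pure counter-model $s_0$ must survive the addition of the spatial witness: the extra obligations $\pure_0$ introduced by unfolding the inductive predicates once (inside $\basecomp{\cdot}$) have to be satisfiable \emph{simultaneously} with $\a$, and must leave enough freedom to keep $\phi'{\wedge}\a'$ false. Here Conditions \textbf{C1}--\textbf{C3} and the NF invariants do the real work: because there is no mutual recursion (\textbf{C3}), establishment holds and every nested segment is rooted at a field of the allocated cell (\textbf{C2}), and fields pointing to the same heap carry the same content (\textbf{C1}), each $\pure_0$ only relates the \emph{fresh} variables introduced by the unfolding to parameters already present, and never forces an equality among the variables occurring in $\phi'{\wedge}\a'$. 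Hence extending $s_0$ over the fresh variables preserves $s_0 \not\models \phi'{\wedge}\a'$, and the resulting $(s,h)$ is the required counter-model. I expect this to go through for every predicate of $\PName$ by a routine induction on the (terminating, by \textbf{C3}) unfolding structure of $\basecomp{\cdot}$; should it fail for some shape, that is exactly the place where the hypotheses on $\PName$ would need to be tightened.
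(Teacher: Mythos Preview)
Your approach has a genuine gap in the ``materialising the antecedent'' step. You fix a stack $s_0$ that falsifies $\phi'{\wedge}\a'$ and then try to realise the heap via the bad-model construction, but these two moves conflict. The bad-model construction assigns \emph{pairwise distinct} non-$\nil$ values to all pointer-sorted free variables, whereas falsifying a disequality $E_1{\neq}E_2$ in $\phi'$ requires $s(E_1)=s(E_2)$. You write that you ``keep the arithmetic part of $s_0$'' and later that ``extending $s_0$ over the fresh variables preserves $s_0 \not\models \phi'{\wedge}\a'$'', but the bad model does not extend $s_0$ on the pointer side---it \emph{replaces} it with a distinct-valued assignment. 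Under that new assignment every disequality in $\phi'$ becomes true, so the resulting $(s,h)$ need not be a counter-model at all when $\phi'$ contains a disequality not already in $\phi$.

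The paper's proof avoids this by doing a case split on the shape of the offending pure atom. The equality case $\phi'\equiv E_1{=}E_2$ is handled exactly as you suggest (a bad model gives distinct values, falsifying the equality). The disequality case $\phi'\equiv E_1{\neq}E_2$ is the one you miss: since Hypothesis does not fire, $E_1{\neq}E_2\notin\phi$, so by Lemma~\ref{lemma.nf.subst} the substituted formula $(\heap_m{\wedge}\phi)[E_1/E_2]{\wedge}\a$ is still in NF; one then takes a bad model of its $\basecomp{\cdot}$-closure and extends the stack by $E_1\mapsto s(E_2)$, yielding a model of the original LHS in which $E_1{=}E_2$ holds. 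The arithmetic case $\a\not\models\a'$ is handled by a separate satisfiability argument on $\a{\wedge}\neg\a'$. Your high-level plan (find a pure counter-model, then attach a heap) is morally right, but the heap construction must be tailored to \emph{which} atom of $\phi'{\wedge}\a'$ you intend to falsify; a single uniform bad model does not suffice.
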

\begin{proof}
We show that there is a model of the LHS that satisfies either \form{\neg\phi'}
or \form{\neg\a'} holds.
We proceed cases for each predicate in the RHS.
\begin{enumerate}
\item Case \form{\phi' \equiv E_1{=}E_2}. As the LHS is in NF, any bad model of
\form{\basecomp{\heap_m} \wedge\phi \wedge \a} implies that \form{E_1{\neq}E_2}.
In other words, \form{\basecomp{\heap_m} \wedge\phi \wedge \a} implies that \form{\neg E_1{=}E_2}.
\item Case \form{\phi' \equiv E_1{\neq}E_2}. As \form{\enode'} is not derivable, then the side condition of rule \rulename{Hypothesis}
does not hold. This means \form{E_1{\neq}E_2 {\not}{\in} \phi}.

We note that if \form{{\heap_m} \wedge\phi \wedge \a} is in NF and
\form{E_1{\neq}E_2 {\not}{\in} \phi}, then \form{({\heap_m} \wedge\phi \wedge \a)[E_1/E_2]}
is also in NF (assuming that \form{E_1} is a variable - Lemma \ref{lemma.nf.subst}). Then suppose \form{\sstack, \sheaps}
be a bad model of \form{(\basecomp{\heap_m} \wedge\phi \wedge \a)[E_1/E_2]}, then
\form{\sstack[E_1 {\pto} \sstack(E_2)], \sheaps} is a model of \form{(\basecomp{\heap_m} \wedge\phi \wedge \a)}. \form{\sstack[E_1 {\pto} \sstack(E_2)], \sheaps}
implies that \form{\neg E_1{\neq}E_2}. Therefore, \form{(\basecomp{\heap_m} \wedge\phi \wedge \a)}
does not imply \form{E_1{\neq}E_2}. Neither is \form{({\heap_m} \wedge\phi \wedge \a)}.
\item Case \form{\phi' \equiv \true}.
As \form{\enode'} is not derivable, then the side condition of rule  \rulename{Hypothesis}
does not hold. We consider two cases.
\begin{enumerate}
\item \form{\a \wedge \a'} is unsatisfiable. Hence, any model of \form{\a}
implies \form{\neg\a'}.
\item \form{\a \wedge \a'} is satisfiable. Hence, \form{\a \wedge \neg \a'} is also
satisfiable and is in NF. Moreover,
 any model of \form{\a \wedge \neg \a'}
implies \form{\neg\a'}. As  \form{\a \wedge \neg \a'} is an under-approximation
of  \form{\a}, from any model \form{\a \wedge \neg \a'} we can construct
a model satisfying
 \form{\a} implies \form{\neg\a'}.
\end{enumerate}
\end{enumerate}
Therefore, any bad model of \form{\basecomp{\heap_m} \wedge\phi \wedge \a} is a counter-model.
\hfill \qed.
\end{proof}

Secondly,  we prove the completeness of rule
\rulename{\sep} when the LHS of the conclusion in NF is a base formula.
\[
\AxiomC{
$\begin{array}{c}
    \entailNCyc{\heap_1{\wedge}\pure}{\heap_2} {\heap{\sep}\heap'} \qquad
    \entailNCyc{\heap{\wedge}\pure}{\heap'} {\heap{\sep}\heap'}
\end{array}$
}
\LeftLabel{\scriptsize ${\sep}$}
\UnaryInfC{
$\entailNCyc{\heap_1 {\sep} \heap{\wedge}\pure}{\heap_2 {\sep} \heap'}{\heap}$
}
\DisplayProof
\]
\begin{proof}
We prove that if the rule's conclusion is derivable then the rule's premises are derivable.

We prove by induction on the number \form{n} of points-to predicates in the LHS of the conclusion.

\begin{enumerate}
\item Base case: \form{n =0} and \form{n =1}, the proof is trivial.

\item Inductive case: Assume that it is true for \form{n=k}.

Suppose \form{\heap_1 \sep \heap=  \sepnodeF{x_1}{c_1}{\setvars{v}_1}{\sep}...{\sep}\sepnodeF{x_{k+1}}{c_{k+1}}{\setvars{v}_{k+1}}}
and \form{\pure} contains enough disequalities for NF.

 We proceed by cases on \form{\heap_2}.
 \begin{enumerate}
 \item Case 1: \form{\heap_2} is a points-to predicate.
If \form{\heap_2 \equiv \sepnodeF{x_j}{\_}{\_}} where \form{x_j \not\in \{x_1,...x_{k+1}\}}.
Then procedure \form{is\_closed}
  has also returned {\invalid} already and the 
the conclusion is not derivable. Contradiction.
Therefore, \form{\heap_2} must be one of the points-to predicates
 in the LHS. Assume that \form{\heap_2 \equiv \sepnodeF{x_1}{c_1}{\setvars{v}_1}}.
   Then, \form{\heap_1 \force\heap_2} and by induction \form{\heap \wedge \pure ~\ent~ \heap'} is also derivable.
 
\item Case 2: \form{\heap_2} is an inductive predicate; assume \form{\heap_2 \equiv \seppred{\code{P}}{r{,} F}}.
  Similarly to the above case,  \form{r\in \{x_1; ...;x_{k+1}\}}. Otherwise, procedure \form{is\_closed}
  has returned {\invalid} already. Assume \form{r\equiv x_1}.
 Secondly, \form{x_1{\neq}F \in \pure}. Otherwise,
  {\allEnt} is stuck (it could not apply rule \rulename{RInd}) and
  procedure \form{is\_closed}
  has also returned {\invalid} already.
 Third, \form{ \sepnodeF{x_1}{c_1}{\_} \not\in \heap'}. Otherwise, the RHS of the conclusion is
\form{\false} the conclusion is not derivable.
  Now, the conclusion could be applied with rule \rulename{RInd} to
  generate \form{ \sepnodeF{x_1}{c_1}{\_} \sep \heap'' \sep  \seppred{\code{P}}{F_2{,} F}}.
   Now, it comes back to Case 1 above.

\end{enumerate}
\end{enumerate}

\hfill \qed.
\end{proof}

\section{Completeness of proof search - Proposition \ref{complete}}\label{app.thm.complete}

We prove the correctness of Proposition \ref{complete} through two steps:
\begin{enumerate}
\item proofs for the case where LHS is a base formula. Those entailments
are reduced without \rulename{LInd}.
\item proofs for the case where LHS is a general formula. Those entailments
are reduced with \rulename{LInd} prior to applying other rules.
\end{enumerate}
In the proofs,
we make use of the following auxiliary Lemmas.

\begin{lemma}\label{lemma.fr.big}
If \form{\basecomp{\heap} \wedge \pure ~\ent~ \heap_1'\sep \heap_2'} in NF is derivable,
then there exist \form{\heap_1, \heap_2} such that \form{\heap \equiv \heap_1 \sep \heap_2}
and
both \form{\basecomp{\heap_1} \wedge \pure ~\ent~ \heap_1'}
and \form{\basecomp{\heap_2} \wedge \pure ~\ent~ \heap_2'} are derivable.
\end{lemma}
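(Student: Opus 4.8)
Apart from the trivial case $\pure \models \false$ (take $\heap_1 = \heap$, $\heap_2 = \emp$), I would begin by recording two structural facts about the antecedent. Since $\PName$ has no mutual recursion ({\bf C3}), the rewriting $\basecomp{\cdot}$ of Definition~\ref{defn.composition.base} terminates on $\heap$ and yields a base formula, so I may write $\base \defsym \basecomp{\heap}$ as a separating conjunction $\sepnodeF{x_1}{c_1}{\setvars{v}_1} \sep \ldots \sep \sepnodeF{x_n}{c_n}{\setvars{v}_n}$ of points-to predicates only. Because $\base \wedge \pure$ is in NF (Definition~\ref{defn.nf}) the roots $x_1,\ldots,x_n$ are pairwise distinct and non-$\nil$, $\pure$ contains all the corresponding disequalities, and for every model $\sstack,\sheaps \force \base \wedge \pure$ we have $\dom(\sheaps)=\{\sstack(x_1),\ldots,\sstack(x_n)\}$, uniquely determined by $\sstack$ (remark after Definition~\ref{defn.nf}). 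The plan is then to trace the {\allEnt}-derivation of $\base \wedge \pure \ent \heap_1' \sep \heap_2'$ and read off, from the way each right-hand spatial atom gets matched against one of the $\sepnodeF{x_i}{c_i}{\setvars{v}_i}$, a partition of the atoms of $\base$, and hence of $\heap$, into the two halves we need.

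The core is an induction on that derivation. Since the left-hand side stays a base formula already in NF throughout (no normalisation rule, and in particular no \rulename{LInd}, can ever fire), only \rulename{RInd}, \rulename{RBase}, \rulename{{=}R}, \rulename{Hypothesis}, the frame rule \rulename{\sep} and the axioms occur. I carry along a labelling $\ell$ of each spatial atom on either side by a tag in $\{1,2\}$, initialised so that the atoms of $\heap_1'$ get $1$ and those of $\heap_2'$ get $2$. A \rulename{RInd} step consumes a left-hand cell $\sepnodeF{x}{c}{\anon}$ against a right-hand predicate $\seppredF{\code{P}}{x{,}F{,}\setvars{B}{,}u{,}sc{,}tg}$ of some tag $j$: I give that cell tag $j$, and give $j$ to every atom of the matrix $\sepnodeF{x}{c}{X{,}\ldots}\sep\heap''\sep\seppredF{\code{Q_1}}{\ldots}\sep\seppredF{\code{Q_2}}{\ldots}\sep\seppredF{\code{P}}{X{,}F{,}\ldots}$ that replaces it. A \rulename{\sep} step matching two identical atoms forces them to share a tag, since $\code{roots}(\heap_1)\cap\code{roots}(\heap)=\emptyset$ and the NF disjointness of the $x_i$ make a right-hand cell with root $r$ matchable only against the unique left-hand cell with root $r$. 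The other rules leave the spatial labelling untouched. At the closing axiom the residual heaps on both sides are $\emp$, so every $x_i$ carries a tag; letting $\base_j$ be the separating conjunction of the $\sepnodeF{x_i}{c_i}{\setvars{v}_i}$ with $\ell(i)=j$ gives $\base \equiv \base_1 \sep \base_2$, and replaying the derivation keeping only the tag-$j$ atoms --- well defined, as every rule instance touched atoms of a single tag --- yields derivations of $\base_j \wedge \pure \ent \heap_j'$ for $j = 1,2$.

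Finally I would lift the partition of $\base$ to one of $\heap$. Each points-to atom of $\heap$ occurs verbatim in $\base$ and is placed in $\heap_{\ell(\cdot)}$. For an inductive atom $A \equiv \seppredF{\code{P}}{E{,}F{,}\setvars{B}{,}u{,}sc{,}tg}$ of $\heap$, its base $\basecomp{A}$ is a connected block of points-to predicates of $\base$: by connectivity and establishment ({\bf C2}) its root is $E$, its cell carries $F$ and the roots of the nested $Q_1, Q_2$ among its fields, and so on down the $\orderp$-hierarchy, every other existential being allocated or equal to a parameter. I would show, by induction on $\orderp$ with innermost predicates first, that $\ell$ is constant on $\basecomp{A}$: once the derivation commits the head cell of $\basecomp{A}$ to a tag $j$, validity-preservation of the rules (Lemma~\ref{rule.complete}) and the side condition $\FV(\heap') \subseteq \FV(\heap \wedge \pure)\cup\{\nil\}$ of \rulename{\sep}, together with NF domain-uniqueness, forbid any sub-block of $\basecomp{A}$ from being matched on the opposite side, since that side of the consequent would then have to describe a location reachable only through a field private to the tag-$j$ side. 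Hence each inductive atom goes wholesale to one $\heap_j$, $\heap \equiv \heap_1 \sep \heap_2$, and since $\basecomp{\cdot}$ distributes over $\sep$ we get $\basecomp{\heap_j} \equiv \base_j$, whence $\basecomp{\heap_j} \wedge \pure \ent \heap_j'$ is derivable. The hard part will be exactly this last step --- showing the proof search can never split a $\basecomp{\cdot}$-chunk across the two sides --- and it is there that conditions {\bf C1}, {\bf C2} and the NF invariant carry the weight.
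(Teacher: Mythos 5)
The paper contains no proof of this lemma to compare against: Lemma~\ref{lemma.fr.big} is stated bare in the appendix, and the one-line justification that follows (``heaps of a normalized base formula are precise; straightforward from the semantics of \form{\sep}'') is attached to the next lemma, Lemma~\ref{lemma.fr.gen}. On its own terms your strategy --- tag every points-to atom of \form{\basecomp{\heap}} by which of \form{\heap_1'}, \form{\heap_2'} eventually consumes it, check each rule preserves the tagging, replay the two tagged sub-derivations, then lift the partition of the base atoms to a partition of the atoms of \form{\heap} --- is a reasonable way to make the claim precise, and is considerably more than the paper provides.

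The genuine gap is the step you yourself flag as ``the hard part'': that the tagging is constant on each block \form{\basecomp{A}} for an inductive atom \form{A} of \form{\heap}. Without an extra hypothesis this is false, and so is the lemma as literally stated. Take \form{\heap \equiv \seppred{\code{nll}}{x,F,B}}, so \form{\basecomp{\heap} \equiv \sepnodeF{x}{c_3}{F{,}Z}\sep\sepnodeF{Z}{c_1}{B}} with \form{Z} the fresh variable introduced by the base construction, and take \form{\heap_1' \equiv \sepnodeF{x}{c_3}{F{,}Z}}, \form{\heap_2' \equiv \sepnodeF{Z}{c_1}{B}}. The premise is derivable by \rulename{\sep} and \rulename{Id} --- and every side condition of \rulename{\sep}, including \form{\FV(\heap')\subseteq\FV(\heap\wedge\pure)\cup\{\nil\}} and root-disjointness, is satisfied \emph{inside} that derivation --- yet the single atom \form{\heap} admits only the splits \form{(\heap,\emp)} and \form{(\emp,\heap)}, and for neither is \form{\basecomp{\heap_i}\wedge\pure\ent\heap_i'} derivable (a two-cell base cannot entail a one-cell heap). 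What rescues the lemma where the paper uses it (Lemma~\ref{lemma.complete.rhs}) is that \form{\heap_1'\sep\heap_2'} descends from the original consequent, whose free variables lie in \form{\FV(\heap\wedge\pure)\cup\{\nil\}}, so the fresh existentials of \form{\basecomp{\cdot}} such as \form{Z} cannot occur on the right and no interior cell of a block can be targeted. You correctly identify this free-variable condition as the key ingredient, but you try to extract it from the side conditions of the proof rules along the derivation, which the counterexample shows cannot work; it has to be imported as an explicit hypothesis on \form{\heap_1'\sep\heap_2'} (or added to the lemma statement). With that hypothesis in place your block-constancy induction over \form{\orderp} goes through.
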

%By induction on \form{\heap_2'}.

\begin{lemma}\label{lemma.fr.gen}
If \form{\basecomp{\heap_1} \sep \basecomp{\heap_2} \wedge \pure} is in NF
and \form{\basecomp{\heap_1} \wedge \pure ~\ent~ \heap_1'} is valid, then
\form{\basecomp{\heap_1} \sep \basecomp{\heap_2} \wedge \pure ~\ent~ \heap_1'\sep \heap_2'} is valid
iff \form{\basecomp{\heap_2} \wedge \pure ~\ent~ \heap_2'} is valid.
\end{lemma}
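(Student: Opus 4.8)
\textbf{Proof proposal for Lemma~\ref{lemma.fr.gen}.}

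The plan is to establish the biconditional in the natural direction-by-direction way, exploiting that the heap split induced by $\heap_1 \sep \heap_2$ matches the heap split between $\heap_1'$ and $\heap_2'$ because the LHS is in NF. Recall that by the earlier remark following Definition~\ref{defn.nf}, if $\basecomp{\heap_1} \sep \basecomp{\heap_2} \wedge \pure$ is in NF then for any model $\sstack, \sheaps$ of it, $\dom(\sheaps)$ is uniquely determined by $\sstack$; moreover the decomposition $\sheaps = \sheaps_1 \cdot \sheaps_2$ with $\sstack, \sheaps_1 \force \basecomp{\heap_1} \wedge \pure$ and $\sstack, \sheaps_2 \force \basecomp{\heap_2} \wedge \pure$ is itself forced by $\sstack$ (each $\sheaps_i$ is the sub-heap whose domain is the set of roots reachable through $\basecomp{\heap_i}$). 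First I would make this precise: since $\basecomp{\heap_i}$ is a base formula (no inductive predicates), its domain under a stack $\sstack$ is exactly the set of allocated roots, each a distinct location, so the split of $\sheaps$ is canonical.

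For the forward direction, assume $\basecomp{\heap_1} \sep \basecomp{\heap_2} \wedge \pure ~\ent~ \heap_1' \sep \heap_2'$ is valid; I want $\basecomp{\heap_2} \wedge \pure ~\ent~ \heap_2'$. Take any $\sstack, \sheaps_2 \force \basecomp{\heap_2} \wedge \pure$. Because $\FV$ constraints are respected and $\basecomp{\heap_1}$'s roots are fresh/disjoint (NF of the combined formula forces $\code{roots}(\basecomp{\heap_1}) \cap \code{roots}(\basecomp{\heap_2}) = \emptyset$), I can extend to $\sstack', \sheaps$ with $\sheaps = \sheaps_1 \cdot \sheaps_2$ and $\sstack', \sheaps_1 \force \basecomp{\heap_1} \wedge \pure$ — here I use that a base formula in NF is always satisfiable over a fresh portion of the heap, choosing distinct fresh locations. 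Then $\sstack', \sheaps \force \basecomp{\heap_1} \sep \basecomp{\heap_2} \wedge \pure$, so by validity of the conclusion $\sstack', \sheaps \force \heap_1' \sep \heap_2'$, giving a split $\sheaps = \sheaps_1'' \cdot \sheaps_2''$ with $\sstack', \sheaps_1'' \force \heap_1'$ and $\sstack', \sheaps_2'' \force \heap_2'$. Now I must argue $\sheaps_2'' = \sheaps_2$: this is where I combine the hypothesis $\basecomp{\heap_1} \wedge \pure ~\ent~ \heap_1'$ (hence $\sstack', \sheaps_1 \force \heap_1'$, so $\dom(\sheaps_1'')$ must contain $\dom(\sheaps_1)$ by the uniqueness-of-domain property applied to $\heap_1'$, which — being derivable from a base NF formula — also has a $\sstack$-determined domain) together with the fact that $\dom(\sheaps)$ is partitioned. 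Chasing these domain equalities forces $\sheaps_2'' = \sheaps_2$, whence $\sstack', \sheaps_2 \force \heap_2'$; since $\FV(\heap_2') \subseteq \FV(\basecomp{\heap_2} \wedge \pure) \cup \{\nil\}$ and $\sstack'$ agrees with $\sstack$ on those variables, $\sstack, \sheaps_2 \force \heap_2'$.

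For the reverse direction, assume both $\basecomp{\heap_1} \wedge \pure ~\ent~ \heap_1'$ and $\basecomp{\heap_2} \wedge \pure ~\ent~ \heap_2'$ are valid; then given $\sstack, \sheaps \force \basecomp{\heap_1} \sep \basecomp{\heap_2} \wedge \pure$, take the canonical split $\sheaps = \sheaps_1 \cdot \sheaps_2$, apply the two validities to get $\sstack, \sheaps_1 \force \heap_1'$ and $\sstack, \sheaps_2 \force \heap_2'$, and recombine via the $\sep$-semantics clause to get $\sstack, \sheaps \force \heap_1' \sep \heap_2'$; this direction is essentially the soundness argument for rule \rulename{\sep} already recorded in Lemma~\ref{ent.local.sound}. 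The main obstacle is the forward direction: I must rule out the possibility that the witnessing split of $\sheaps$ against $\heap_1' \sep \heap_2'$ ``steals'' heap cells between the two sides — i.e., that $\sheaps_1'' \neq \sheaps_1$. This is exactly where the two ingredients "$\basecomp{\heap_1} \sep \basecomp{\heap_2} \wedge \pure$ is in NF" (so domains are stack-determined and the roots of the two components are disjoint) and "$\basecomp{\heap_1} \wedge \pure ~\ent~ \heap_1'$ is valid" (so $\heap_1'$ is forced to consume precisely $\sheaps_1$, not more and not less, again via NF-style domain uniqueness for the derivable right-hand side) must be combined carefully; I expect to need an auxiliary observation that any $\heap'$ appearing on the right of a derivable entailment whose left side is a base formula in NF has its footprint uniquely determined by the stack, which follows by induction on the derivation using the reduction rules in Fig.~\ref{fig.ent.comp.red}.
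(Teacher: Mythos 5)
Your proposal is correct and follows essentially the same route as the paper, whose entire proof consists of the remark that heaps of a normalized base formula are precise and that the rest follows from the semantics of $\sep$: you make the canonical heap split explicit, dispatch the reverse direction by recombination, and reduce the forward direction to matching the witnessing split against the canonical one. You are in fact more careful than the paper, since you correctly flag that the forward direction also needs footprint-uniqueness for $\heap_1'$ (which may contain inductive predicates, so precision of the \emph{base} LHS alone does not literally cover it) --- a point the paper's two-sentence justification silently assumes.
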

Based on the fact that heaps of a normalized base formula is precise.
The proof is straightforward based on the semantics of the separating conjunction \form{\sep}.

%\subsection{Pure Part}

\subsection{Base-Formula LHS}

First, we show the correctness of case 2a) of procedure \code{is\_closed} i.e., an entailment is stuck then
it is invalid. After that, we show the invalidity is preserved through proof search.

As the LHS is a base formula, rule  \rulename{LInd} (and rule  \rulename{LBase}) is never be applied.
We prove case 2a) by induction on the number of disequalities missing from the LHS
and generated by rule  \rulename{ExM}.
First, we prove the case where the RHS is an occurrence of
compositional predicate \code{P} assuming that the points-to predidcate in the definition
of \code{P} is \code{c}.
\begin{lemma}\label{lemma.complete.baselhs.predrhs}
If
\form{\enode_0{:}~\entailNCyc{\sepnode{x}{c}{F_2{,}\setvars{p}{,}u} {\sep} \basecomp{\heap} \wedge\phi \wedge \a}{\seppred{\code{P}}{x{,} F{,} \setvars{B}{,}u{,}sc{,}tg}}{\heap_m}}
is in NF and is stuck, then it is invalid.
\end{lemma}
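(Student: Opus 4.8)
The plan is to exhibit a counter-model for $\enode_0$, i.e.\ a \emph{bad model} of its (base) left-hand side that falsifies the consequent $\seppred{\code{P}}{x{,}F{,}\setvars{B}{,}u{,}sc{,}tg}$. Since $\enode_0$ is in NF and its antecedent is a base formula, the remark after Definition~\ref{defn.nf} guarantees that the heap domain of any model of the antecedent is fixed by the stack, and in particular the cell at $x$ is always allocated, so the antecedent heap is non-empty. I would run the argument by induction on the number of ``missing'' disequalities, i.e.\ the pairs of spatial variables of $\enode_0$ on which rule \rulename{ExM} has not yet committed.

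For the inductive step I would proceed as follows. If some relevant pair is still undecided then \rulename{ExM} is applicable, so (``stuck'' then referring to a descendant) I case-split $\enode_0$ with \rulename{ExM}, focussing on the pair $(x,F)$ (or on a pair containing a root of $\basecomp{\heap}$). In the $x{=}F$ premise, \rulename{Subst} followed by \rulename{RBase} turns the consequent into a pure formula, leaving a conclusion whose antecedent still allocates $F$ but whose consequent heap is empty; this premise matches case~2(b) of \code{is\_closed}, hence is invalid by Lemma~\ref{stuck}. By completeness of \rulename{ExM} (Lemma~\ref{rule.complete}(1)), invalidity of one premise forces $\enode_0$ invalid. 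If instead the search gets stuck inside the $x{\neq}F$ premise, that premise carries strictly fewer missing disequalities, so the induction hypothesis applies to it and Lemma~\ref{rule.complete}(1) again lifts invalidity up to $\enode_0$.

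For the base case, no \rulename{ExM} step remains and $\phi$ decides every relevant pair. Since $\enode_0$ is stuck, \rulename{RInd} cannot fire on $\seppred{\code{P}}{x{,}F{,}\setvars{B}{,}u{,}sc{,}tg}$; as $\sepnode{x}{c}{F_2{,}\setvars{p}{,}u}$ is already in the antecedent and $c$ heads the recursive rule of $\code{P}$ by hypothesis, the only obstruction is $x{\neq}F\notin\phi$, while $x{=}F\notin\phi$ by NF clause~4, and $F$ cannot be a root of $\basecomp{\heap}$ (otherwise NF clause~3 would place $x{\neq}F$ into $\phi$ and enable \rulename{RInd}). I would then take the assignment from a bad model of $\sepnode{x}{c}{F_2{,}\setvars{p}{,}u}{\sep}\basecomp{\heap}\wedge\phi$ in which additionally $F$ receives the value of $x$ (consistent, since $x{\neq}F\notin\phi$ and $F$ occurs in no spatial atom of the antecedent) and $\FV(\a)$ is chosen so that $\a$ holds: this is a model of the antecedent whose heap is non-empty at $x$, yet $\seppred{\code{P}}{x{,}x{,}\setvars{B}{,}u{,}sc{,}tg}$ is satisfiable only with the empty heap (its recursive rule requires $x{\neq}F$, contradicting $x{=}F$), so the model refutes the consequent and $\enode_0$ is invalid.

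The hard part will be making the stuck-case analysis exhaustive: besides the $x{\neq}F$ obstruction, I must also cover a mismatch of the head data type $c$ and --- once a segment variable is pinned down --- a mismatch in the matrix heap of $\code{P}$ (the nested $\code{Q_1},\code{Q_2}$), in each case converting the stuck configuration into a genuine counter-model. Arguing that the associated bad model (built through $\basecomp{\cdot}$ and decomposed via Lemmas~\ref{lemma.fr.big} and~\ref{lemma.fr.gen}) really is a counter-model will lean on Conditions \textbf{C1} and \textbf{C2}: connectivity forces every matrix cell to be reachable from $x$, and establishment fixes the values of the existentials, so a bad model whose shape deviates from what $\code{P}$ prescribes cannot be patched into a model of $\code{P}$.
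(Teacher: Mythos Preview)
Your base-case argument is essentially the paper's entire proof, and the rest of your plan is superfluous for this lemma. The paper argues in three lines: since $\enode_0$ is stuck, in particular \rulename{RInd} cannot fire, and because the antecedent already contains $\sepnode{x}{c}{\ldots}$ with the correct data type $c$, the only missing ingredient is $x{\neq}F\in\phi$; hence $x{\neq}F\notin\phi$. Then Lemma~\ref{lemma.nf.subst} says $(\sepnode{x}{c}{\ldots}{\sep}\basecomp{\heap}\wedge\phi)[F/x]\wedge\a$ is still in NF, its heap is non-empty (it contains $\sepnode{F}{c}{\ldots}$), while $\seppred{\code{P}}{x,F,\ldots}[F/x]=\seppred{\code{P}}{F,F,\ldots}$ admits only empty-heap models; so the substituted entailment is invalid, and since substitution is sound and complete, $\enode_0$ is invalid. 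No induction, no case analysis beyond the single observation about \rulename{RInd}.

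Your induction on missing disequalities is the machinery the paper reserves for the \emph{next} lemma (Lemma~\ref{lemma.complete.baselhs}), where the RHS is a general $\heap'$ rather than a single predicate instance; here the specific shape of the consequent makes it unnecessary. Your inductive step is also in tension with the hypothesis: if \rulename{ExM} is applicable to $\enode_0$ then $\enode_0$ is not stuck, so that case is vacuous (your parenthetical ``stuck then referring to a descendant'' silently changes the meaning of the hypothesis). Likewise, the ``hard parts'' you anticipate---a data-type mismatch on $c$, or a mismatch in the matrix heap of $\code{P}$---do not arise: the lemma's statement already fixes $c$ to be the head type of $\code{P}$, and \rulename{RInd} does not inspect the matrix heap at all, it merely unfolds and defers the matching. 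So the only obstruction to \rulename{RInd} really is $x{\neq}F\notin\phi$, and your bad-model construction with $s(F)=s(x)$ (equivalently, the paper's substitution $[F/x]$) finishes the job immediately.
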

\begin{proof}
Due to the stuckness, {\allEnt} could not applies rule \rulename{RInd}.
Hence, \form{x{\neq}F {\not}{\in} \phi}.
As the the entailment is in NF,  \form{\entailNCyc{(\sepnode{x}{c}{F_2{,}\setvars{p}{,}u} {\sep} \basecomp{\heap} \wedge\phi)\subst{F}{x} \wedge \a}{\seppred{\code{P}}{x{,} F{,} \setvars{B}{,}u{,}sc{,}tg}\subst{F}{x}}{\heap_m}} is in NF (by Lemma \ref{lemma.nf.subst}).
As all models satisfying the LHS \form{(\sepnode{x}{c}{F_2{,}\setvars{p}{,}u} {\sep} \basecomp{\heap} \wedge\phi)\subst{F}{x} \wedge \a} are non-empty heap and in NF, all models satisfying
the RHS \form{\seppred{\code{P}}{x{,} F{,} \setvars{B}{,}u{,}sc{,}tg}\subst{F}{x} \equiv \seppred{\code{P}}{F{,} F{,} \setvars{B}{,}u{,}sc{,}tg}}
are empty heap, this entailment is invalid.
As the substitution law is sound and complete, \form{\enode_0} is invalid. 
\hfill \qed.
\end{proof}

\begin{lemma}\label{lemma.complete.baselhs}
If
\form{\enode_0{:~}\entailNCyc{ \basecomp{\heap} \wedge\phi \wedge \a}{\heap'}{\heap_m}}
is in NF and is stuck, then it is invalid.
\end{lemma}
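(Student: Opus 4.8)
The plan is to complete the induction that Lemma~\ref{lemma.complete.baselhs} begins. We want to show: if $\enode_0 : \entailNCyc{\basecomp{\heap}\wedge\phi\wedge\a}{\heap'}{\heap_m}$ is in NF and is stuck (no inference rule applies), then $\enode_0$ is invalid, i.e.\ its bad model is a counter-model. First I would fix a bad model $(\sstack,\sheaps)$ of $\basecomp{\heap}\wedge\phi\wedge\a$: by the definition of bad model such a model exists (every variable in $\FV(\basecomp{\heap}\wedge\phi)$ gets a distinct non-$\nil$ value, and the arithmetic part $\a$ is satisfiable by hypothesis, clause~6 of NF). Since $\basecomp{\heap}$ is a base formula in NF, clause-by-clause the domain $\dom(\sheaps)$ is uniquely determined by $\sstack$ (the remark after Definition~\ref{defn.nf}), so $(\sstack,\sheaps)$ is a genuine model; it remains to show $\sstack,\sheaps\not\models\heap'$.

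Next I would do the case analysis on $\heap'$, mirroring the structure already used in the completeness proof of rule $\rulename{\sep}$ and in Lemma~\ref{lemma.complete.baselhs.predrhs}. If $\heap'\equiv\emp$ then either $\heap\not\equiv\emp$ (so $\dom(\sheaps)\neq\emptyset$ and the entailment fails because the RHS forces the empty heap) or $\heap\equiv\emp$, in which case the stuckness must come from the pure part, and I invoke Proposition~\ref{complete.pure}: since $\enode_0$ is stuck, $\rulename{Hypothesis}$ (and the other pure-eliminating axioms $\rulename{Emp}$, $\rulename{Inconsistency}$) do not apply, so $\phi\wedge\a$ does not imply the RHS pure part, and the bad model witnesses this. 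If $\heap'$ contains a points-to predicate $\sepnodeF{x}{c}{\setvars{v}}$: stuckness of $\rulename{\sep}$ together with case~2d of \code{is\_closed} forces either that $x\notin\code{roots}(\basecomp{\heap})$ (then $\sstack(x)\notin\dom(\sheaps)$, so the RHS is false), or that there is a matching $\sepnodeF{x}{c}{\setvars{w}}\in\basecomp{\heap}$ with $\setvars{w}\not\equiv\setvars{v}$, and then, because the bad model assigns distinct values, the field contents differ and the RHS is again false. Peeling off one matched points-to predicate on each side reduces $n$ by one, so induction applies (Lemmas~\ref{lemma.fr.big},~\ref{lemma.fr.gen} supply the needed splitting of $\basecomp{\heap}$). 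If $\heap'$ contains an inductive predicate $\seppred{\code{P}}{x,F,\setvars{B},\ldots}$: stuckness means $\rulename{RInd}$ cannot fire, so $x{\neq}F\notin\phi$; then $\basecomp{\heap}\wedge\phi\wedge\a$ in NF forces $\sstack(x)\neq\sstack(F)$ for every bad model only if $x{\neq}F$ were derivable, which it is not, so there is a bad model with $\sstack(x)=\sstack(F)$ forcing the RHS predicate to be empty while the LHS heap is non-empty (this is exactly Lemma~\ref{lemma.complete.baselhs.predrhs} applied after substituting $F$ for $x$, using that substitution is sound and complete and preserves NF by Lemma~\ref{lemma.nf.subst}); alternatively $x\notin\code{roots}(\basecomp{\heap})$ and the RHS fails immediately.

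The proof proceeds by induction on the number of predicates (points-to plus inductive) in $\heap'$, stripping one matched predicate at a time and invoking the frame Lemmas~\ref{lemma.fr.big} and~\ref{lemma.fr.gen} to justify that the LHS base heap splits compatibly; the base cases ($\heap'$ empty or a single predicate) are handled by Proposition~\ref{complete.pure} and Lemma~\ref{lemma.complete.baselhs.predrhs}. The main obstacle I anticipate is bookkeeping the interaction between the spatial matching and the pure/arithmetic residue: after discarding a matched pair of spatial atoms we must check the resulting sub-entailment is still in NF and still stuck (so the induction hypothesis applies), which requires the invariance-under-substitution Lemma~\ref{lemma.nf.subst} and care that the disequalities dropped by weakening were not among those the RHS needs — precisely the situation already flagged in case~\ref{proof.term.two.unfold} of Lemma~\ref{lem.term.comp.shape}. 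I would also need to double-check the case where $\heap'$ has an inductive atom whose root is allocated in the LHS but whose guard $x{\neq}F$ \emph{is} present: then $\rulename{RInd}$ would apply, contradicting stuckness, so this case is vacuous — but spelling out why no other rule is blocking is the delicate part. Once all cases are closed, the bad model of $\basecomp{\heap}\wedge\phi\wedge\a$ is a counter-model, hence (by the lemma that $\basecomp{\heap}\wedge\phi$ under-approximates $\heap\wedge\phi$) a counter-model of $\heap\wedge\phi\wedge\a\ent\heap'$, establishing invalidity. \qed
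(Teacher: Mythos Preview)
Your overall shape—case analysis on $\heap'$, bad-model witnesses, use of Lemma~\ref{lemma.complete.baselhs.predrhs} and the frame lemmas—matches the paper's, but there is a genuine gap in the inductive-predicate case, and your induction variable is not the one that makes the argument go through.

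The first problem is your appeal to Lemma~\ref{lemma.complete.baselhs.predrhs} when $\heap'\equiv\seppred{\code{P}}{x,F,\ldots}\sep\heap''$ with $\heap''$ non-trivial. That lemma is stated for a \emph{single} inductive predicate on the right. Your substitution argument (set $\sstack(x)=\sstack(F)$, so the predicate collapses to $\emp$) only shows that the substituted LHS must entail $\heap''[F/x]$; it does not show this fails. Indeed it need not fail: the residual $\heap''[F/x]$ may well absorb the points-to cell that the LHS still carries at $F$. So the bad model with $x=F$ is not automatically a counter-model. The paper handles exactly this by first adding $x{\neq}F$ (the ExM branch), obtaining $\enode_1$, and then splitting on whether $\enode_1$ is derivable. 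When it is, Lemma~\ref{lemma.fr.big} applies to $\enode_1$ (not to the stuck $\enode_0$!) to isolate a sub-entailment $\enode_2'$ whose RHS is the single predicate $\seppred{\code{P}}{x,F,\ldots}$; stuckness of $\enode_0$ forces $\enode_2'$ to be stuck, and \emph{now} Lemma~\ref{lemma.complete.baselhs.predrhs} applies. Lemma~\ref{lemma.fr.gen} then reassembles the counter-model. You invoke \ref{lemma.fr.big}/\ref{lemma.fr.gen} only as a vague ``splitting'' device, but \ref{lemma.fr.big} has derivability as a hypothesis, which $\enode_0$ does not satisfy—this is the second problem, and it is why your inductive step (``strip one matched predicate'') does not go through as stated.

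Relatedly, the paper's induction is on the number of disequalities missing from $\phi$, not on the number of predicates in $\heap'$. This is what licenses the detour through $\enode_1$: adding $x{\neq}F$ strictly decreases the induction measure, so the case where $\enode_1$ is itself stuck can be dispatched. Your measure (size of $\heap'$) does not decrease when you pass from $\enode_0$ to $\enode_1$, so you cannot recurse there. You also omit the sub-case $x{\neq}F\in\phi$ but the cell types disagree ($c'\not\equiv c$), which blocks \rulename{RInd} for a different reason; the paper covers this separately via case~2(d) of \code{is\_closed}.
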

\begin{proof}
By induction on the number of disequalities missing from \form{\phi}.
We proceed by cases.
\begin{enumerate}
\item \form{\heap' \equiv \seppred{\code{P}}{x{,} F{,} \setvars{B}{,}u{,}sc{,}tg} \sep \heap''}  assuming that the points-to predidcate in the definition
of \code{P} is \code{c}.
\begin{enumerate}
\item \form{op(x) {\not}{\in} \basecomp{\heap}}. This case is the case 2c) of procedure
\code{is\_closed}. The bad model of the LHS is the counter-model.
\item \form{op(x) {\in} \basecomp{\heap}}.
{\allEnt} reduces the entailments by first applying rule \rulename{ExM} prior to
applying rule \rulename{LInd}. We are considering the case {\allEnt} could not apply
rule \rulename{LInd}.
We proceed cases for the LHS.
\begin{enumerate}
\item \form{\heap \equiv \sepnode{x}{c}{F_2{,}\setvars{v}}{\sep} \basecomp{\heap_0}} and
\form{x{\neq}F {\not}{\in} \phi}.

If \form{\enode_1{:~} \entailNCyc{\sepnode{x}{c}{F_2{,}\setvars{v}}{\sep} \basecomp{\heap_0} \wedge\phi \wedge \a {\wedge} {\bf x{\neq}F} }{\seppred{\code{P}}{x{,} F{,} \setvars{B}{,}u{,}sc{,}tg} \sep \heap''}{\heap_m}} is stuck, rule \rulename{LInd} could not be applied. Hence, \form{\sepnode{x}{c}{F_2{,}\setvars{v}} \in \heap''}.
Therefore, there is no model that satisfies the RHS. This entailment \form{\enode_1} is thus invalid.
As rule \rulename{ExM} is complete (Lemma \ref{rule.complete}),  \form{\enode_0} is invalid.

If \form{\enode_1} is deriable, following Lemma \ref{lemma.fr.big}, there exist
\form{\heap_1, \heap_2} such that
\form{\heap_0 \equiv \heap_1 \sep \heap_2} and both \form{\enode_2{:~} \entailNCyc{\sepnode{x}{c}{F_2{,}\setvars{v}}{\sep} \basecomp{\heap_1} \wedge\phi \wedge \a {\wedge} {\bf x{\neq}F} }{\seppred{\code{P}}{x{,} F{,} \setvars{B}{,}u{,}sc{,}tg}}{\heap_m}} and
\form{\enode_3{:~} \entailNCyc{\basecomp{\heap_2} \wedge\phi \wedge \a {\wedge} {\bf x{\neq}F} }{\heap''}{\heap_m}} are derivable. We proceed two sub-cases:
\begin{enumerate}
\item \form{\enode_3'{:~} \entailNCyc{\basecomp{\heap_2} \wedge\phi \wedge \a }{\heap''}{\heap_m}} is stuck. Hence either \form{op_1{x} \in \basecomp{\heap_2}} or \form{op_2(F) \in \basecomp{\heap_2}}. This implies either \form{op{x}\sep op_1{x}} in the LHS of \form{\enode_0}
or  \form{op{x}\sep op_3{F}} in the LHS of \form{\enode_0}. As \form{\enode_0} is in NF,
either \form{x{\neq}x \in \phi} or \form{x{\neq}F \in \phi}. Both can't not happen
as the first scenario contradicts with assumption that LHS is in LHS and the second one
contradicts with assumption \form{x{\neq}F {\not}{\in} \phi}.
\item \form{\enode_3'{:~} \entailNCyc{\basecomp{\heap_2} \wedge\phi \wedge \a }{\heap''}{\heap_m}} is derivable. Hence, by soundness (Lemma \ref{ent.global.sound}), it is valid. (2a)

As \form{\enode_0} is stuck and \form{\enode_3'} is derivable, we deduce
 that \form{\enode_2'{:~} \entailNCyc{\sepnode{x}{c}{F_2{,}\setvars{v}}{\sep} \basecomp{\heap_1} \wedge\phi \wedge \a  }{\seppred{\code{P}}{x{,} F{,} \setvars{B}{,}u{,}sc{,}tg}}{\heap_m}}
is stuck (Otherwise, \form{\enode_0} is derivable as well, contradition).
By Lemma \ref{lemma.complete.baselhs.predrhs}, \form{\enode_2'} is invalid. (2b)

By (2a), (2b) and Lemma \ref{lemma.fr.gen}, \form{\enode_0} is invalid.
\end{enumerate}

\item \form{\heap \equiv \sepnode{x}{c'}{F_2{,}\setvars{v}}{\sep} \basecomp{\heap_0}},
\form{x{\neq}F \in \phi} and
\form{c'{\not}{\equiv}c}. The proof is similar to Case 2d of procedure \code{is\_closed}.
The bad model of the LHS is the counter-model.
\end{enumerate}
\end{enumerate}
\item \form{\heap' \equiv \sepnode{x}{c}{\setvars{v}} \sep \heap''}. Straightforward.
\item \form{\heap' \equiv \emp}. Straightforward.
\end{enumerate}
\hfill \qed.
\end{proof}

\begin{proposition}\label{complete.baselhs}
If
\form{\entailNCyc{\basecomp{\heap} \wedge\phi \wedge \a}{\heap'}{\heap_m}}
is in NF and is not derivable, then it is invalid.
\end{proposition}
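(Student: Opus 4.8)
The plan is to upgrade Lemma~\ref{lemma.complete.baselhs} from \emph{stuck} entailments to arbitrary \emph{non-derivable} ones, by structural induction on the proof-search tree that {\allEnt} builds for $\enode_0 : \entailNCyc{\basecomp{\heap}\wedge\phi\wedge\a}{\heap'}{\heap_m}$. First I would record the shape of that tree: since the antecedent is a base formula it contains no occurrence of an inductive predicate, so rule \rulename{LInd} is never applicable along any branch, and hence the only rules used are the normalisation rules, the axioms, \rulename{RInd}, \rulename{RBase}, \rulename{Hypothesis}, \rulename{{=}R} and \rulename{\sep}. Each of these strictly decreases the size measure underlying Lemma~\ref{lem.term.shape} restricted to the components that do not depend on \rulename{LInd}, so the search tree $\utree{}$ rooted at $\enode_0$ is finite and contains no back-links; consequently ``$\enode_0$ is not derivable'' means precisely that $\utree{}$ has a leaf classified \invalid by \code{is\_closed}.

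For the base case of the induction, $\enode_0$ is itself that leaf. If \code{is\_closed} rejected it via case~2(a) (no inference rule applies), it is stuck and Lemma~\ref{lemma.complete.baselhs} gives its invalidity directly; if via cases~2(b)--(d), a bad model of $\basecomp{\heap}\wedge\pure$ is a counter-model by Lemma~\ref{stuck}. For the inductive step, let $\enode_0$ be the conclusion of a rule $R$ with premises $\enode_1,\dots,\enode_n$, and let $\enode_j$ be the non-derivable one, so $\enode_j$ is invalid by the induction hypothesis. When $R$ is any rule other than \rulename{\sep} I would invoke Lemma~\ref{rule.complete}(1): the rule is invertible on validity, so invalidity of $\enode_j$ propagates down to $\enode_0$. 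The axioms \rulename{Emp}, \rulename{Id}, \rulename{Inconsistency} never produce premises, so they cannot arise here.

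The real work is the case $R=\rulename{\sep}$, whose conclusion in this setting is of the form $\base\,{\ent}\,\heap'$ with $\base$ in normal form. Here I would combine Lemma~\ref{rule.complete}(2) with the framing Lemmas~\ref{lemma.fr.big} and~\ref{lemma.fr.gen}: because a normalised base heap is precise (the remark following Definition~\ref{defn.nf}), the decomposition of $\base$ chosen by \code{is\_closed} into a part discharging the frame and a part discharging the residual is forced on every model, so one of the two premises $\basecomp{\heap_1}\wedge\pure\,{\ent}\,\heap_1'$ and $\basecomp{\heap_2}\wedge\pure\,{\ent}\,\heap_2'$ being invalid drags $\enode_0$ down with it; and if \code{is\_closed}'s search for a matching frame had failed, $\enode_0$ would already have been rejected in the base case (cases~2(c)/2(d)). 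Proving this cleanly is the main obstacle: one must show that the frame-matching search inside \code{is\_closed} is \emph{exhaustive} --- that no valid decomposition is ever overlooked --- and that replacing an inductive predicate in the consequent by a points-to predicate via \rulename{RInd} loses no models, which is exactly what Lemmas~\ref{lemma.fr.big}--\ref{lemma.fr.gen} supply. Everything else (the normalisation rules, \rulename{RInd}, \rulename{RBase}, \rulename{{=}R}, \rulename{Hypothesis}) is the routine invertibility already packaged in Lemma~\ref{rule.complete}, so this proposition follows once the $\sep$-case is handled.
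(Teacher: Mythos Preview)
Your proposal is correct and follows essentially the same route as the paper: a stuck leaf in NF is invalid (Lemma~\ref{lemma.complete.baselhs}, plus Proposition~\ref{complete.pure} for the pure part), and invalidity propagates back to the root by the invertibility packaged in Lemma~\ref{rule.complete}. The paper compresses this into two lines, whereas you unfold the induction explicitly; in particular, your extended discussion of the \rulename{\sep} case --- exhaustiveness of frame matching, the appeal to Lemmas~\ref{lemma.fr.big} and~\ref{lemma.fr.gen} --- is really the content of the \emph{proof} of Lemma~\ref{rule.complete}(2) rather than additional work needed here, so once you cite that lemma the $\sep$-case is no harder than the others.
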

\begin{proof}
In a incomplete proof, if a leaf node in NF is stuck then it is invalid
(Lemma \ref{lemma.complete.baselhs} and Lemma \ref{complete.pure}).
The invalidity is preserved up to the root based on  Lemma \ref{rule.complete}.
\hfill \qed.
\end{proof}

\subsection{General LHS}
By induction on the RHS. We proceed cases on the RHS.
In the proofs, for convenient, we write \form{\entailCyc{\heap_a\wedge\pure_a}{\heap_c\wedge \pure_c}{\heap}}
as a shorthand of \form{\entailNCyc{\heap_a\sep \heap \wedge\pure_a}{\heap_c\sep \heap \wedge\pure_c}{\heap}}
and and no any matching heaps between \form{\heap_a} and \form{\heap_c} could be found through
 the application of rule \rulename{\sep}.

\hide{
First, we consider the case where  an occurrence of compositional predicates is in the RHS and {\allEnt} is stuck at \rulename{LInd}
due to the unsatisfied side conditions.

\begin{lemma}\label{lemma.complete.rhs.pred}
If
\form{\enode{:}~\entailCyc{\heap\wedge\pure}{\seppredF{\code{Q}}{x{,} F_3{,} \setvars{B}{,}u{,}sc{,}tg_3} {\sep} \heap'}{\heap_m}} in NF where
\[
\form{\enode_0{:}~\entailNCyc{\heap\wedge\pure}{\seppredF{\code{Q}}{x{,} F_3{,} \setvars{B}{,}u{,}sc{,}tg_3} {\sep} \heap'}{\heap_m}}
\] is stuck
and \form{\seppredF{\code{Q}}{x{,} F_3{,} \setvars{B}{,}u{,}sc{,}tg_3}  {\not}{\in} \heap},  then \form{\enode_0}  is invalid.
\end{lemma}

\begin{proof}
We first show \form{\enode_0} is invalid. After that we apply Lemma \ref{lemma.fr.gen}
to show that \form{\enode} is invalid. To show invalidity of \form{\enode_0},
we proceed cases of the possible base formula of the LHS.
\begin{enumerate}
\item \form{\enode_1{:}~\entailNCyc{\basecomp{\heap}\wedge\pure}{\seppredF{\code{Q}}{x{,} F_3{,} \setvars{B}{,}u{,}sc{,}tg_3} {\sep} \heap'}{\heap_m}} is stuck.
By Proposition \ref{complete.baselhs}, \form{\enode_1} is invalid.
As
\form{\basecomp{\heap}\wedge\pure} is an approximation of \form{{\heap}\wedge\pure},
\form{\enode_0} is invalid.
%%%%%%%%%%%%%%%%%%%%%%%%%%%%%
\item \form{\enode_2{:}~\entailNCyc{\basecomp{\heap}\wedge\pure}{\seppredF{\code{Q}}{x{,} F_3{,} \setvars{B}{,}u{,}sc{,}tg_3} {\sep} \heap'}{\heap_m}} is derivable.
As the LHS is in NF, \form{\enode_2} could be reduced by \rulename{RInd}.
This implies that \form{\sepnodeF{x}{c}{F{,}\setvars{d},tg{,}u}\subst{\setvars{v}}{\setvars{d}}{\wedge} \pure_0\subst{tg}{scd}  \in \basecomp{\heap}} and \form{x{\neq}F_3 \in \pure}.
As \form{\enode_0} is stuck, \form{\sepnodeF{x}{c}{F{,}\setvars{d},tg{,}u}\subst{\setvars{v}}{\setvars{d}}  \not\in {\heap}}
(otherwise, \rulename{LInd} could be applied). Instead, \form{\seppredF{\code{P}}{x{,} F{,} \setvars{B}{,}u{,}sc{,}tg} \in \heap} and the side conditions
of \rulename{LInd} does not hold. Particularly, we assume that
 \form{\heap \equiv \seppredF{\code{P}}{x{,} F{,} \setvars{B}{,}u{,}sc{,}tg} \sep \heap_0},
 and \form{op(F_3) \not\in\heap_0}.

In intuition, if \form{\enode_0} had been valid (and \form{op(F_3) \in\heap_0}) \form{F} would have been a points-to predicate in the chain
between \form{x} and \form{F_3} which would have been a dangling pointer.
Here, \form{\enode_0} is invalid (and \form{op(F_3) \not\in\heap_0}),
there are two ways to show that \form{\enode_0} is invalid.
\begin{enumerate}
\item \form{x{=}F_3}. \form{\seppredF{\code{Q}}{x{,} F_3{,} \setvars{B}{,}u{,}sc{,}tg_3}} is reduced to the base case while
\form{\seppredF{\code{P}}{x{,} F{,} \setvars{B}{,}u{,}sc{,}tg}} is not empty.
\item \form{F_3} is a points-to predicate in the chain
between \form{x} and \form{F}.
\end{enumerate}
As LHS of \form{\enode_3} is in NF, we show the invalidity following the second way.

Replacing the above assumption, we obtain
\[\form{\enode_2{:}~\entailNCyc{\basecomp{\seppredF{\code{P}}{x{,} F{,} \setvars{B}{,}u{,}sc{,}tg}} \sep \basecomp{\heap_0}\wedge\pure}{\seppredF{\code{Q}}{x{,} F_3{,} \setvars{B}{,}u{,}sc{,}tg_3} {\sep} \heap'}{\heap_m}}
\]

As \form{\enode_2} is derivable, by Lemma \ref{lemma.fr.big}, there exist
\form{\heap_1, \heap_2} s.t. \form{\heap_0 \equiv \heap_1\sep \heap_2} and
both 
  \form{\enode_3{:}~\entailNCyc{\basecomp{\seppredF{\code{P}}{x{,} F{,} \setvars{B}{,}u{,}sc{,}tg}}\sep \basecomp{\heap_1} \wedge\pure}{\seppredF{\code{Q}}{x{,} F_3{,} \setvars{B}{,}u{,}sc{,}tg_3} }{\heap_m}}
and  \form{\enode_4{:}~\entailNCyc{ \basecomp{\heap_2}\wedge\pure}{\heap'}{\heap_m}}
are derivable.
By soundness (Lemma \ref{ent.global.sound}),
\form{\enode_3} is valid (3a) and
\form{\enode_4} is valid (3b).

From (3a), for any \form{\sstack, \sheaps} such that \form{\sstack, \sheaps \force \basecomp{\seppredF{\code{P}}{x{,} F{,} \setvars{B}{,}u{,}sc{,}tg}}\sep \basecomp{\heap_1}\wedge\pure},
\form{\sstack, \sheaps \force \seppredF{\code{Q}}{x{,} F_3{,} \setvars{B}{,}u{,}sc{,}tg_3}}.
Furthermore, \form{op(F_3) \not\in\heap_0} together
with semantics of compositional predicate imply
\[
\begin{array}{l}
\form{\enode_3'{:}~{\sepnodeF{x}{c}{F_3{,}\setvars{d},tg{,}u}\subst{\setvars{v}}{\setvars{d}} \sep \basecomp{\heap'\subst{\setvars{v}}{\setvars{d}}} \sep \basecomp{\heap_1} \sep \sepnodeF{F_3}{c}{F{,}\setvars{d}',tg_3{,}u}\wedge\pure{\wedge} \pure' \wedge \pure_0\subst{tg}{scd}} \\
\qquad~\ent_{{\heap_m}}~ {\seppredF{\code{Q}}{x{,} F_3{,} \setvars{B}{,}u{,}sc{,}tg_3}}}
\end{array}
\] is invalid (4). Here \form{\pure'} captures all disequalities to form NF.

Applying Lemma \ref{lemma.fr.gen} into (3b) and (4), we infer that
\form{\enode_{3b4}{:}~\entailNCyc{\sepnodeF{x}{c}{F_3{,}\setvars{d},tg{,}u}\subst{\setvars{v}}{\setvars{d}} \sep \basecomp{\heap'\subst{\setvars{v}}{\setvars{d}}} \sep \basecomp{\heap_1} \sep \basecomp{\heap_2} \sep \sepnodeF{F_3}{c}{F{,}\setvars{d}',tg_3{,}u}\wedge\pure{\wedge} \pure' \wedge \pure_0\subst{tg}{scd}}{\seppredF{\code{Q}}{x{,} F_3{,} \setvars{B}{,}u{,}sc{,}tg_3} \sep \heap'}{\heap_m}}
is invalid (5).

Since \form{\overline{\heap_i}} is an under-approximation of \form{\heap_i} (\form{i\in\{1,2\}}),
\form{\enode_{5}{:}~\entailNCyc{\sepnodeF{x}{c}{F_3{,}\setvars{d},tg{,}u}\subst{\setvars{v}}{\setvars{d}} \sep \basecomp{\heap'\subst{\setvars{v}}{\setvars{d}}} \sep \basecomp{\heap_1} \sep \basecomp{\heap_2} \sep \sepnodeF{F_3}{c}{F{,}\setvars{d}',tg{,}u}\wedge\pure{\wedge} \pure' \wedge \pure_0\subst{tg}{scd}}{\seppredF{\code{P}}{x{,} F{,} \setvars{B}{,}u{,}sc{,}tg} \sep \heap_1 \sep \heap_2}{\heap_m}}
is valid (6).

Both LHS of \form{\enode_{3b4}} and \form{\enode_5} are in NF, there exist
\form{\sstack, \sheaps} such that
\[\form{\sstack, \sheaps \force \sepnodeF{x}{c}{F_3{,}\setvars{d},tg{,}u}\subst{\setvars{v}}{\setvars{d}} \sep \basecomp{\heap'\subst{\setvars{v}}{\setvars{d}}} \sep \basecomp{\heap_1} \sep \basecomp{\heap_2} \sep \sepnodeF{F_3}{c}{F{,}\setvars{d}',tg{,}u}\wedge\pure{\wedge} \pure' \wedge \pure_0\subst{tg}{scd}}\]

From (6), \form{\sstack, \sheaps \force \seppredF{\code{P}}{x{,} F{,} \setvars{B}{,}u{,}sc{,}tg} \sep \heap_1 \sep \heap_2}.

From (5), \form{\sstack, \sheaps ~{\not\force}~ \seppredF{\code{Q}}{x{,} F_3{,} \setvars{B}{,}u{,}sc{,}tg_3} \sep \heap'}.

Hence, \form{\sstack, \sheaps} is a counter-model of the entailment:
\[\form{\seppredF{\code{P}}{x{,} F{,} \setvars{B}{,}u{,}sc{,}tg} \sep \heap_1 \sep \heap_2 \ent  \seppredF{\code{Q}}{x{,} F_3{,} \setvars{B}{,}u{,}sc{,}tg_3} \sep \heap'}
\]
Thus, together with Lemma \ref{lemma.fr.gen},
\form{\enode_0} is invalid.

\end{enumerate}
\hfill \qed.
\end{proof}}

\begin{lemma}\label{lemma.complete.rhs}
If
\form{\enode{:}~\entailCyc{\heap\wedge\pure}{ \heap'}{\heap_m}} in NF where
\form{\enode_0{:}~\entailNCyc{\heap\wedge\pure}{ \heap'}{\heap_m}} is stuck,
then \form{\enode_0}  is invalid.
\end{lemma}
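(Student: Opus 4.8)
The plan is to prove this by structural induction on the right-hand side $\heap'$, with a case split on its leading spatial atom; in every case the goal is to exhibit a counter-model of $\enode_0$. The two workhorses are: (i) the under-approximation fact that every model of $\basecomp{\heap}\wedge\pure$ extends to a model of $\heap\wedge\pure$, so that invalidity of the base-formula entailment transfers to $\enode_0$ (this is where Proposition~\ref{complete.baselhs} is invoked), and (ii) the frame decomposition/recomposition Lemmas~\ref{lemma.fr.big} and~\ref{lemma.fr.gen}. Pure-side failures are dispatched uniformly by Proposition~\ref{complete.pure}, so I would factor them out first and then concentrate on the spatial part.

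First I would clear the easy cases. If $\heap'\equiv\emp$, then $\enode_0$ not being closed by \rulename{Emp} or \rulename{Inconsistency} forces $\pure$ satisfiable and $\heap$ non-empty, so any bad model of $\basecomp{\heap}\wedge\pure$ (whose domain is non-empty) falsifies the RHS. If $\heap'\equiv\sepnode{x}{c}{\setvars{v}}\sep\heap''$, then stuckness means \rulename{\sep} cannot match this cell out of $\heap$ and no \rulename{RInd} step can create it; hence $\sepnode{x}{c}{\setvars{v}}\notin\basecomp{\heap}$ modulo the NF equalities, so the base-formula entailment $\basecomp{\heap}\wedge\pure\ent\heap'$ is itself stuck and Proposition~\ref{complete.baselhs} together with under-approximation closes the case.

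The substantive case is $\heap'\equiv\seppred{\code{Q}}{x{,}F_3{,}\setvars{B}{,}u{,}sc{,}tg_3}\sep\heap''$, and here I would split on whether the base-formula entailment $\basecomp{\heap}\wedge\pure\ent\heap'$ is stuck. If it is, we again win via Proposition~\ref{complete.baselhs}. If instead it is derivable, then, its LHS being in NF, it must be reducible by \rulename{RInd}, so $\basecomp{\heap}$ carries a points-to $\sepnode{x}{c}{F_3{,}\dots}$ and $x{\neq}F_3\in\pure$; since $\enode_0$ is nevertheless stuck at \rulename{LInd}, that cell is buried inside an occurrence of $\heap$, i.e.\ $\heap\equiv\seppred{\code{P}}{x{,}F{,}\setvars{B}{,}u{,}sc{,}tg}\sep\heap_0$ with $\seppred{\code{Q}}{x{,}\dots}\notin\heap$ and the \rulename{LInd} side condition violated. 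I would apply Lemma~\ref{lemma.fr.big} to the derivable base entailment to split $\heap_0\equiv\heap_1\sep\heap_2$ with both $\basecomp{\seppred{\code{P}}{x{,}F{,}\dots}}\sep\basecomp{\heap_1}\wedge\pure\ent\seppred{\code{Q}}{x{,}F_3{,}\dots}$ and $\basecomp{\heap_2}\wedge\pure\ent\heap''$ derivable, hence valid by Lemmas~\ref{ent.local.sound} and~\ref{ent.global.sound}. Using that no $op(F_3)$ occurs in $\heap_0$, I would unfold the base form of $\seppred{\code{P}}{x{,}F{,}\dots}$ one further step to expose a fresh cell $\sepnode{F_3}{c}{F{,}\dots}$ lying strictly before $F$ on the \code{P}-chain, observe that the resulting entailment into $\seppred{\code{Q}}{x{,}F_3{,}\dots}$ is invalid (its heap strictly exceeds any $\code{Q}$-segment from $x$ to $F_3$), recompose with the valid $\basecomp{\heap_2}\wedge\pure\ent\heap''$ through Lemma~\ref{lemma.fr.gen}, and transfer along $\basecomp{\heap_i}\subseteq\heap_i$ to get a model of $\seppred{\code{P}}{x{,}F{,}\dots}\sep\heap_1\sep\heap_2\wedge\pure$ falsifying $\seppred{\code{Q}}{x{,}F_3{,}\dots}\sep\heap''$ --- a counter-model of $\enode_0$, since the $\entailCyc$ convention guarantees $\heap$ and $\heap'$ are exactly these residuals.

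The hard part is precisely this last subcase: converting ``base entailment derivable, general one stuck'' into an explicit counter-model. One must follow how \rulename{RInd} would act on $\basecomp{\heap}$ while \rulename{LInd} stays blocked on $\heap$, argue that the segment endpoint $F_3$ is thereby forced to be an allocated cell on the \code{P}-chain strictly before $F$, and then re-thread Lemmas~\ref{lemma.fr.big} and~\ref{lemma.fr.gen} through several under-approximation steps without losing normal form (Lemma~\ref{lemma.nf.subst}). Everything else --- the pure-formula failures handled by Proposition~\ref{complete.pure}, the case $\heap'=\emp$, and the case $\heap'=\sepnode{x}{c}{\setvars{v}}\sep\heap''$ --- collapses to Proposition~\ref{complete.baselhs} via under-approximation and is routine. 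Lifting this leaf-level statement up the proof tree, using the invalidity-preservation Lemma~\ref{rule.complete}, then yields Proposition~\ref{complete}.
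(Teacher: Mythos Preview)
Your overall structure---split on whether the base entailment $\basecomp{\heap}\wedge\pure\ent\heap'$ is stuck or derivable, and in the stuck branch invoke Proposition~\ref{complete.baselhs} via under-approximation---matches the paper. The divergence is in the derivable branch, and there you work much harder than necessary and from a false premise.

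In the sub-case where $\heap'$ begins with an inductive predicate $\seppred{\code{Q}}{x,F_3,\dots}$ and the base entailment is derivable, you assert that $\enode_0$ is stuck because ``the \rulename{LInd} side condition is violated,'' and then embark on an elaborate counter-model construction using Lemmas~\ref{lemma.fr.big} and~\ref{lemma.fr.gen}. But recall the $\entailCyc{\cdot}{\cdot}{\heap_m}$ convention: no matching heaps exist between $\heap$ and $\heap'$. Hence if $\seppred{\code{P}}{x,F,\dots}\in\heap$, then $\seppred{\code{P}}{x,F,\dots}\notin\heap'$, which is exactly the \rulename{LInd} side condition. So \rulename{LInd} \emph{does} apply, contradicting stuckness. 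The paper's proof simply observes that derivability of the base entailment forces $x\neq F_3\in\pure$ and a root-$x$ atom in $\heap$, and then that either \rulename{RInd} (if the atom is a points-to) or \rulename{LInd} (if it is a predicate) is applicable---so this entire sub-case is vacuous. Your counter-model construction is not needed, and the premise it rests on is wrong.

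A smaller gap: in the points-to case $\heap'\equiv\sepnode{x}{c}{\setvars{v}}\sep\heap''$, you claim the base entailment is stuck because $\sepnode{x}{c}{\setvars{v}}\notin\basecomp{\heap}$. That inference fails when $\heap$ contains a predicate $\seppred{\code{P}}{x,E,\dots}$, since then $\basecomp{\heap}$ \emph{does} contain a points-to at $x$. The paper handles that residual case directly by unfolding $\seppred{\code{P}}{x,E,\dots}$ twice to get a two-cell chain, which cannot match a single points-to on the right.
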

\begin{proof}
We first show \form{\enode_0} is invalid. After that by using Lemma \ref{lemma.fr.gen},
we could deduce the invalidity of \form{\enode}. To show invalidity of \form{\enode_0},
we proceed cases on the possible base formula of the LHS.
\begin{enumerate}
\item \form{\enode_1{:}~\entailNCyc{\basecomp{\heap}\wedge\pure}{ \heap'}{\heap_m}} is stuck.
By Proposition \ref{complete.baselhs}, \form{\enode_1} is invalid. As
\form{\basecomp{\heap}\wedge\pure} is an approximation of \form{{\heap}\wedge\pure},
\form{\enode_0} is invalid.
%%%%%%%%%%%%%%%%%%%%%%%%%%%%%
\item \form{\enode_2{:}~\entailNCyc{\basecomp{\heap}\wedge\pure}{\heap'}{\heap_m}} is derivable.
And \form{\heap' \equiv op(E) \sep \heap''}. We proceed cases on \form{op(E)}.
\begin{itemize}
\item \form{op(E) \equiv \seppredF{\code{Q}}{x{,} F_3{,} \setvars{B}{,}u{,}sc{,}tg_3}}.
%By Lemma \ref{lemma.complete.rhs.pred}, \form{\enode_0} is invalid.
As the LHS is in NF, \form{\enode_2} could be reduced by \rulename{RInd}.
This implies that \form{\sepnodeF{x}{c}{F{,}\setvars{d},tg{,}u}\subst{\setvars{v}}{\setvars{d}}{\wedge} \pure_0\subst{tg}{scd}  \in \basecomp{\heap}} and \form{x{\neq}F_3 \in \pure}.
This implies that there are two possible sub-cases.
\begin{enumerate}
\item Sub-case 1: \form{\sepnodeF{x}{c}{F{,}\setvars{d},tg{,}u}\subst{\setvars{v}}{\setvars{d}}  \in {\heap}}.
As \form{x{\neq}F_3 \in \pure}, \form{\enode_0} could be applied with \rulename{RInd}. It is impossible as it contradicts with the assumption that \form{\enode_0} is stuck.
\item Sub-case 2: \form{\seppredF{\code{P}}{x{,} F{,} \setvars{B}{,}u{,}sc{,}tg} \in \heap}.
As \form{x{\neq}F_3 \in \pure}, \form{\enode_0} could be applied with \rulename{LInd}. As \form{x{\neq}F_3 \in \pure}, \form{\enode_0} could be applied with \rulename{LInd}.
\end{enumerate}
\item  \form{op(E) \equiv \sepnodeF{x}{c}{next:F,\setvars{v}}}.
Based on \form{\sepnodeF{x}{c}{next:F,\setvars{v}} \in \basecomp{\heap}},
there are two cases.
\begin{enumerate}
\item \form{\sepnodeF{x}{c}{next:F, \setvars{v}} \in \heap}. This contradicts
with the assumption that \form{\sepnodeF{x}{c}{next:F, \setvars{v}}} could not be matched
with any predicate in \form{\heap}. This case is impossible.
\item \form{\seppredF{\code{P}}{x{,} E, ..}  \in \heap}. Any model satisfying
the LHS when replacing  \form{\seppredF{\code{P}}{x{,} E, ..}} by three-time unfolding (with two
points-to predicates e.g.,  \form{\sepnodeF{x}{c}{next:F_1, ..} \sep \sepnodeF{F_1}{c}{next:F,..}} is a counter-model.
\end{enumerate}
\end{itemize}
\end{enumerate}
\hfill \qed.
\end{proof}

\begin{proposition}[Incompleteness Preservation]\label{complete.gen}
Given an input entailment
\form{\enode_0{:}~\entailNCyc{\D}{\D'}{\heap_m}},
and there is an leaf node \form{\enode_i{:}~\entailCyc{{\D_l}}{\D'_l}{\heap_m}}
in its incomplete proof tree %% derived by {\allEnt}
where
\begin{itemize}
\item the leaf node \form{\enode_i} is in NF; and
\item none of application of rule \code{\scriptsize FR} from the root
\form{\enode_0} to the leaf node \form{\enode_i}; and
\item \form{\entailNCyc{{\D_l}}{\D'_l}{\heap_m}} is not derivable.
\end{itemize}
then \form{\enode_0} is invalid.
\end{proposition}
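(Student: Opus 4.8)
\textbf{Proof plan for Proposition~\ref{complete.gen} (Incompleteness Preservation).}
The plan is to prove this in two stages: first show that the leaf $\enode_i$ is itself invalid, and then push invalidity back up the (finite) path that joins $\enode_i$ to the root $\enode_0$ in the incomplete proof tree. The second stage is routine; the content lives in the first stage, which is essentially an application of the already‑established completeness lemmas for stuck nodes.

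For the first stage, I would argue that $\enode_i$ is genuinely a \emph{stuck} leaf. Because $\enode_i$ is written in the $\entailCyc$ form, no residual spatial atom of $\D_l$ can be matched against $\D'_l$ by rule $\rulename{\sep}$; combined with the hypothesis that $\entailNCyc{\D_l}{\D'_l}{\heap_m}$ is not derivable and with termination of {\allEnt} (Lemma~\ref{lem.term.shape}), this forces that no inference rule closes $\enode_i$, i.e.\ it is stuck. Since moreover $\enode_i$ is in NF (Definition~\ref{defn.nf}), Lemma~\ref{lemma.complete.rhs} applies and yields that $\enode_i$ is invalid. Internally this routes through the base‑LHS case (Proposition~\ref{complete.baselhs}), the pure‑part argument (Proposition~\ref{complete.pure}), and the two frame lemmas~\ref{lemma.fr.big} and~\ref{lemma.fr.gen}; in particular Lemma~\ref{lemma.fr.gen} is what carries frame‑free invalidity back to the framed entailment $\entailCyc{\D_l}{\D'_l}{\heap_m}$. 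All of these are available, so nothing new is needed here.

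For the second stage, write the root‑to‑leaf path as $\enode_0 = \enode_{(0)}, \enode_{(1)}, \ldots, \enode_{(m)} = \enode_i$, where each $\enode_{(j+1)}$ is one of the premises obtained from $\enode_{(j)}$ by applying some inference rule $PR_j$. By the hypothesis that $\rulename{\sep}$ (i.e.\ FR) is never applied on this path, every $PR_j$ is one of the remaining rules (a normalization rule, an axiom, or one of $\rulename{RInd}$, $\rulename{LInd}$, $\rulename{Hypothesis}$, $\rulename{{=}R}$, $\ldots$), and Lemma~\ref{rule.complete}(1) applies to all of them. Its contrapositive says precisely that if one premise of $PR_j$ is invalid then the conclusion $\enode_{(j)}$ is invalid. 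A downward induction on $j$ from $m$ to $0$ — using at each step that $\enode_{(j+1)}$ is \emph{one} of the premises of the rule applied at $\enode_{(j)}$, which it is by construction — then lifts the invalidity of $\enode_i = \enode_{(m)}$ up to $\enode_0 = \enode_{(0)}$. Hence $\enode_0$ is invalid, which is the claim.

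The main obstacle is confined to the first stage: carefully justifying that ``$\entailNCyc{\D_l}{\D'_l}{\heap_m}$ is not derivable'' really means $\enode_i$ is stuck — rather than the root of an unfinished but still‑continuable subtree — which is exactly where termination (Lemma~\ref{lem.term.shape}) and the $\entailCyc$ convention (no further $\rulename{\sep}$‑matching possible) have to be combined before Lemma~\ref{lemma.complete.rhs} can be invoked. The second stage needs only one point of care, namely that $\rulename{\sep}$ is excluded from the path, so that the conditional completeness statement Lemma~\ref{rule.complete}(2) — whose hypothesis requires a base‑formula left‑hand side — is never needed.
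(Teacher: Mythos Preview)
Your proposal is correct and follows essentially the same two-step structure as the paper: invoke Lemma~\ref{lemma.complete.rhs} to conclude that the stuck leaf $\enode_i$ is invalid, and then use Lemma~\ref{rule.complete} to propagate invalidity back to the root $\enode_0$. Your version is considerably more detailed than the paper's two-line appendix proof—in particular you make explicit why ``not derivable'' amounts to ``stuck'' (via termination) and why the exclusion of $\rulename{\sep}$ on the path lets you use part~(1) of Lemma~\ref{rule.complete} unconditionally—but the underlying argument is the same.
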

\begin{proof}
By Lemma \ref{lemma.complete.rhs}, \form{\enode_i} is invalid.
By Lemma \ref{rule.complete}, \form{\enode_0} is invalid.
\hfill \qed.
\end{proof}

 }{}

\end{document}